\documentclass[final]{siamart1116}


\usepackage{lipsum,amsfonts,graphicx,epstopdf}
\usepackage{algorithmic}
\usepackage{mathrsfs, amsmath, amscd, amssymb}
\usepackage[mathscr]{euscript} 
\usepackage{subcaption} 
\usepackage{amsopn}
\usepackage{lineno}
\usepackage{listings}
\newcommand{\tube}[1]{\boldsymbol{\dot{\mathfrak{\MakeLowercase{#1}}}}}
\newcommand{\tcol}[1]{\boldsymbol{\mathring{\mathfrak{\MakeLowercase{#1}}}}}

\newcommand{\T}[1]{\boldsymbol{\mathscr{\MakeUppercase{#1}}}} 

\newcommand{\Circ}[1]{{\rm circ}\left( #1 \right)}
\newcommand{\Fold}[1]{{\rm fold}\left( #1 \right)}
\newcommand{\Unfold}[1]{{\rm unfold}\left( #1 \right)}

\DeclareMathOperator{\bdiag} {bdiag}
\newcommand{\Ec}{ \mathbf{E}}

\DeclareMathOperator{\rank}{rank}

\DeclareMathOperator*{\argmin}{argmin}

\DeclareMathOperator{\trace}{trace}


\ifpdf
  \DeclareGraphicsExtensions{.eps,.pdf,.png,.jpg}
\else
  \DeclareGraphicsExtensions{.eps}
\fi

\usepackage{enumitem}
\setlist[enumerate]{leftmargin=.5in}
\setlist[itemize]{leftmargin=.5in}


\newsiamremark{remark}{Remark}
\newsiamremark{hypothesis}{Hypothesis}
\crefname{hypothesis}{Hypothesis}{Hypotheses}
\newsiamthm{claim}{Claim}

\headers{Fast Randomized Algorithms for Tensor Operations}{D. A. Tarzanagh and G. Michailidis}

\title{Fast Randomized Algorithms for t-Product Based Tensor Operations and Decompositions with Applications to Imaging Data\thanks{
\funding{This work was supported in part by NSF grants CCF 1540093, IIS 1632730, and NIH grant 5 R01 GM114029-03 to GM.}}}

\author{Davoud Ataee Tarzanagh\thanks{Department of Mathematics  \& UF Informatics Institute, University of Florida, Gainesville, FL, (\email{tarzanagh@ufl.edu}).} \and George Michailidis\thanks{{Department of Statistics \& UF Informatics Institute, University of Florida, Gainesville, FL, (\email{gmichail@ufl.edu}).}}}

\usepackage{amsopn}
\DeclareMathOperator{\diag}{diag}


\usepackage{geometry}
 \geometry{
 a4paper,
 total={145mm,225mm},
 left=35mm,
 top=30mm,
 }
 \nolinenumbers
\begin{document}

\maketitle
\begin{abstract}
Tensors of order three or higher have found applications in diverse fields, including image and signal processing, data mining, biomedical engineering and link analysis, to name a few. In many applications that involve for example time series or other ordered data, the corresponding tensor has a distinguishing orientation that exhibits a low tubal structure.
This has motivated the introduction of the tubal rank and the corresponding tubal singular value decomposition in the literature. 
In this work, we develop randomized algorithms for many common tensor operations, including tensor low-rank approximation and decomposition, together with tensor multiplication. The proposed tubal focused algorithms employ a small number of lateral and/or horizontal slices of the underlying 3-rd order tensor, that come with {\em relative error guarantees} for the quality of the obtained solutions. The performance of the proposed algorithms is illustrated on diverse imaging applications, including mass spectrometry data and image and video recovery from incomplete and noisy data. The results show both good computational speed-up vis-a-vis conventional completion algorithms and good accuracy. 
\end{abstract}

\begin{keywords}
tensor slices selection, circulant algebra, low rank decomposition, nuclear norm minimization
\end{keywords}

\begin{AMS}
  15A69, 15A23, 68W20, 65F30
\end{AMS} 

\section{Introduction}\label{sec:intro}


Tensors are multi-dimensional arrays that have been used in diverse fields of applications, including chemometrics 
\cite{smilde2005multi}, psychometrics \cite{kroonenberg1983three}, image/video and signal processing 
\cite{karahan2015tensor,kernfeld2014clustering,lutensor,semerci2014tensor,zhang2014novel} and link analysis
\cite{kolda2005higher}. They have also been the object of intense mathematical study (see for example the
review paper by Kolda and Bader \cite{kolda2009tensor} and references therein).

Analogously to the matrix case, a number of tensor decompositions have been proposed in the literature, briefly described next
for the case of 3-way tensors. Let $\T{Z} \in \mathbb{R}^{n_1\times n_2 \times n_3}$. Then, there exists a factorization, called a ``Tucker decomposition"of $\T{Z} $, of the form
\begin{eqnarray} \label{eq:Tucker}
 \T{Z}  = \sum\limits_{r_1 = 1}^{R_1} \sum\limits_{r_2 = 1}^{R_2} {
 {\sum\limits_{r_3= 1}^{R_3}{g_{r_1 r_2 r_3} \left(u^{(1)}_{r_1} \circ u^{(2)}_{r_2} \circ u^{(3)}_{r_3} \right)}}}= \T{G} \times_1 U^{(1)}\times_2 U^{(2)}  \times_3 U^{(3)},
\end{eqnarray}
where $\T{G} \in \mathbb{R}^{R_1 \times R_2 \times R_3}$ is the core tensor associated with this decomposition, $U^{(i)} \in \mathbb{R}^{n_i  \times R_i}$ is the $i$-th factor matrix and $\times_i$ is the mode$-i$ product for $i=1,2,3$. The $3$-tuple $(R_1,R_2,R_3)$ is called the Tucker rank or {\emph{multilinear rank}} of tensor $\T{Z}$ \cite{carroll1970analysis,jolliffe2002principal}.  The conventional Tucker decomposition corresponds to the orthonormal Tucker decomposition, which is also known as the higher-order SVD (HOSVD). De Lathauwer et al. \cite{de2000multilinear} proposed an algorithm to compute a HOSVD decomposition. Soon afterwards they
proposed the higher-order orthogonal iteration (HOOI) \cite{de2000best} to provide an inexact Tucker decomposition.  The CP decomposition of a tensor is another important notion of tensor-decomposition, which leads to the definition of CP rank. The CP model can be considered as a special case of the Tucker model with a superdiagonal core tensor. Further, the CP rank of a tensor equals that of its Tucker core \cite{jiang2017tensor}. 

\subsection{ Third order tensor as operator on matrices}\label{sec:tsvd:app} While the Tucker-based factorization may be sufficient for many applications, in this paper we consider an entirely different tensor decomposition based on circulant algebra  \cite{kilmer2013third}.  In this factorization, a tensor in $\mathbb{R}^{n_1 \times n_2\times n_3}$  is viewed as a
$n_1 \times n_2$ matrix of ``tubes", also known as elements of the ring $ \mathbb{R}^{n_3}$ where addition is defined as vector addition and multiplication as circular convolution. This ``matrix-of-tubes" viewpoint leads to definitions of a new multiplication for tensors (``tubal multiplication"), a new rank for tensors (``tubal rank"), and a new notion of a 
singular value decomposition (SVD) for tensors (``tubal SVD"). The tubal SVD (t-SVD) provides the ``best" 
tubal rank-$r$ approximation to $\T{Z}$, as measured with respect to any unitary invariant tensor norm. 

A limitation of the t-SVD decomposition is that it depends directly on the orientation of the tensor, whereas the CP and Tucker decompositions are not. This suggests that the latter decompositions are well suited for data applications where the tensor's orientation is not critical - e.g. chemometrics \cite{smilde2005multi} and/or psychometrics \cite{kroonenberg1983three}. However, in applications involving time series or other ordered data, the orientation of the tensor is fixed. Examples include, but not limited to,  computed tomography (CT)  \cite{semerci2014tensor}, facial recognition \cite{hao2013facial} and video compression \cite{zhang2014novel}, where the tensor decomposition is dependent on the third dimension. Analogous to compressing a two-dimensional image using the matrix SVD (a classic linear algebra example, with detailed writeup in \cite{kalman1996singularly}), the t-SVD decomposition can be used to compress several images taken over time (e.g. successive frames from a video). Since such images do not change substantially from frame to frame, we expect tubal compression strategies to provide better results than performing a matrix SVD on each separate image frame. The former consider the tensor as a whole object, rather than as a collection of distinct images \cite{kilmer2011factorization,kilmer2013third,zhang2014novel,semerci2014tensor,lutensor}. Further, t-SVD is essentially based on a group theoretic approach, where the multidimensional structure is unraveled by constructing group-rings along the tensor fibers \footnote{We consider the group rings constructed out of cyclic groups, resulting in an algebra of circulants. However, the results presented in this paper hold true for the general group-ring construction.}. The advantage of such an approach over existing ones is that the resulting algebra and corresponding
analysis enjoys many similar properties to matrix algebra and analysis. For example, it is shown in \cite{huang2014provable} that recovering a 3-way tensor of length $n$ and Tucker rank $(r, r, r)$
from random measurements requires $O(rn^2 \log^2(n))$ observations under a matrix coherence condition on all mode-$n$ unfoldings. However, the number of samples needed for exact recovery of a 3-way tensor of length $n$ and tensor tubal-rank $r$ is $O(rn^2 \log(n^2))$ under a weaker tensor coherence assumption \cite{zhang2017exact}. Further, consider the decomposition $\T{X} = \T{L} + \T{E}$, where $\T{L} \in \mathbb{R}^{n_1 \times n_2 \times n_3}$ is low rank and $\T{E}$ is sparse. Let $n_{(1)} =\max(n_1, n_2)$ and $n_{(2)} =\min(n_1, n_2)$. The work in \cite{lutensor} shows that for tensor $\T{L}$ with coherence parameter $\mu$, the recovery is guaranteed with high probability for the tubal rank of order $n_{(2)} n_3/(\mu(\log n_{(1)} n_3)^2)$ and a number of nonzero entries in $\T{E}$ of order $O(n_1 n_2 n_3)$. Hence, under the same coherence condition (see, Definitions~\ref{eq:ten:incoh} and \ref{eq:ten:incoh1}), the tubal robust tensor factorization problem perfectly recovers the low-rank and sparse components of the underlying tensor.


A shortcoming of these three classical decompositions is their brittleness with respect to severely corrupted or outlier data entries. To that end, a number of approaches have been developed in the literature to recover a low-rank tensor representation from data subject to noise and corrupted entries. We focus on two instances of the problem based on the t-SVD algorithm: (i) noisy tensor completion, i.e., recovering a low-rank tensor from a small subset of noisy entries, and (ii) noisy robust tensor factorization, i.e., recovering a low-rank tensor from corrupted data by noise and/or outliers of arbitrary magnitude \cite{lutensor,zhang2014novel}. These two classes of tensor factorization problems have attracted significant interest in the research community \cite{bouwmans2015decomposition,candes2011robust,semerci2014tensor,zhang2014novel}.
In particular, convex formulations of noisy tensor factorization have been shown to exhibit strong theoretical recovery guarantees and a number of algorithms has been developed for solving them \cite{zhang2014novel,semerci2014tensor,lutensor}. 

It is frequently mentioned that (noisy) tensor factorization, despite its numerous advantages, also exhibits a number of drawbacks listed below:
\begin{itemize}
  \item The available methods \cite{braman2010third,kilmer2011factorization,kilmer2013third,zhang2014novel,semerci2014tensor,lutensor} are inherently sequential and all rely on the repeated and costly computation of t-SVD factors, Discrete Fourier Transform (DFT) and its inverse (IDFT), that limit their scalability.
  \item The basis tensor vectors resulting from t-SVD have little concrete meaning, which makes it challenging for practitioners to interpret the obtained results. For instance, the vector [$(1/2)$ age - $(1/\sqrt{2})$ height + $(1/2)$income], being one of the significant uncorrelated factors from a data set of people's features is not easily interpretable (see discussion in \cite{drineas2008relative}). Kuruvilla et al. in \cite{kuruvilla2002vector}
have also claimed: ``it would be interesting to try to find basis vectors for all experiment vectors, using actual experiment vectors and not artificial bases that offer little insight".
  \item The t-SVD decomposition for sparse tensors does not preserve sparsity in general, which for large size tensors leads to excessive	computations and storage requirements. Hence, it is important to compute low-rank tensor factorizations that preserve such structural properties of the original data tensor.
\end{itemize}

\subsection{Main Contributions}\label{sec:contr} In this paper, we study scalable \emph{randomized} tensor multiplication 
(rt-product algorithm) and tensor factorization (rt-project) operations and extend the matrix CX and CUR type decompositions \cite{deshpande2006matrix, drineas2006fast, drineas2008relative,mackey2011divide} to third order tensors using a circulant algebra embedding. To that end, we develop a basic algorithm (t-CX), together with a more general one (t-CUR) based on tensor slice selection that come with relative error guarantees. Finally, we propose a new tensor nuclear norm minimization method, called CUR t-NN, which solves the noisy tensor factorization problem using a small number of lateral and/or horizontal slices of the underlying tensor. Specifically, CUR t-NN uses an adaptive technique to sample slices of the tensor based on a \emph{leverage score} for them and subsequently solves a convex optimization problem followed by a projection step to recover a low rank approximation to the underlying tensor. Advantages of CUR t-NN include:
\begin{itemize}
  \item Contrary to nuclear norm minimization based approaches,
which minimize the sum of the singular values of the underling tensor, our approach only minimizes the sum of singular values of the set of sampled slices corresponding to the largest leverage scores. Thus, we obtain a more accurate and robust approximation to the rank function.
  \item Using subspace sampling, we obtain provable relative-error recovery guarantees for tubal product based tensor factorization. This technique is likely to be useful for other tensor approximation and data analysis problems.
  \item The proposed algorithm for noisy tensor factorization tasks has polynomial time complexity.
\end{itemize}

\subsection{Related work on randomized tensor factorization}\label{sec:related}
Note that there has been prior work on the topic of randomized methods for tensor low rank approximations/decompositions. For example, recent work includes \cite{mahoney2008tensor, caiafa2010generalizing}. Our methods are different from these studies, since we rely on the \emph{t-product} construct in \cite{braman2010third,kilmer2013third,kilmer2011factorization} to provide \textit{tubal} tensor multiplication and low rank approximation. In \cite{mahoney2008tensor}, the proposed tensor-CUR algorithm employs a CUR matrix approximation to one of the unfolding matrix modes (the distinguished mode) providing an approximation based on few tube fibers and few slices.  Hence, that algorithm achieves an additive error guarantee in terms of the flattened (unfolded) tensor, rather than the original one. However, the algorithms presented in this work offer \textit{relative error guarantees} for the obtained approximations, in terms of the original low tubal rank tensor. Further, we propose a new tensor nuclear norm minimization method, which solves the noisy tensor factorization problem in the fully and partially observed setting using a small number of lateral and/or horizontal slices of the underlying tensor. It is worth mentioning that randomized algorithms were used to efficiently solve the robust matrix PCA problem using small sketches constructed from random linear measurements of the low rank matrix \cite{rahmani2017randomized, zhou2011godec,liu2011solving,mackey2011divide,mu2011accelerated}.  Our proposed nuclear norm minimization approach is different from these studies, since we relay on slice selection and projection (mainly t-CUR factorization). Further, our proposed algorithm uses an
adaptive sampling strategy and provides high probability recovery guarantees of the exact low rank tensor approximation under a weaker coherence assumption.
 
The remainder of the paper is organized as follows. In Section~2, we review some relevant mathematical concepts including the tensor circulant algebra, basic definitions and theorems of tensors, and the t-SVD decomposition based on the t-product concept. In Section~3, we provide the randomized tensor decompositions and provide their relative error guarantees and introduce the concept of slice selection and projection. In Section~4, we introduce, evaluate and analyze our proposed algorithm (CUR t-NN) for large scale noisy tensor decomposition. Experimental results are presented in Section~5 and some concluding remarks are drawn in Section~6.

The detailed proofs of the main results established are delegated to the Appendix.

\section{Mathematical preliminaries}\label{sec:notation}

Next, we introduce key definitions and concepts used in subsequent developments.

\begin{description}
\item[Tensor indexing.] We denote tensors by boldface Euler script letters, e.g., $\mathscr{X}$, matrices by boldface capital letters, e.g., $X$, vectors by lowercase letters, e.g., $x$. The order of a tensor is the number of dimensions (also refereed to as ways or modes). In this work, we focus on 3-way tensors.
\item[Fibers and slices \cite{Zhang2016randtensor}.] A {\em fiber} of tensor $\mathscr{X}$ is a one-dimensional array defined by fixing two of the indices. 
Specifically, $\mathscr{X}_{:jk}$ is the (j,k)-th {\em column} fiber, $\mathscr{X}_{i:k}$ is the (i,k)-th {\em row} fiber, and $\mathscr{X}_{ij:}$ is the (i,j)-th {\em tube} fiber. A slice of tensor $\mathscr{X}$ is a two-dimensional array defined by fixing one index only. Specifically, $\mathscr{X}_{i::}$ is the i-th {\em horizontal} slice, $\mathscr{X}_{:j:}$ is the j-th {\em lateral} slice, and $\mathscr{X}_{::k}$ is the k-th {\em frontal} slice. For convenience, $\mathscr{X}_{::k}
$ is written as $\mathscr{X}_{k}$. The vectorization of $\mathscr{X}$ is denoted by $vec(\mathscr{X})$. For a 3-way tensor $\mathscr{X} \in \mathbb{R}^{n_1 \times n_2 \times n_3}$, we denote its $(i,j,k)$-th entry as $\mathscr{X}_{ijk}$.
 \begin{figure}[!h]
 \centering
 \begin{subfigure} {0.17\textwidth}
    {\includegraphics[width=\textwidth]{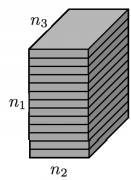}}
    \caption{\scriptsize{horizontal slices}}
  \end{subfigure}
  \begin{subfigure} {0.17\textwidth}
  {\includegraphics[width=\textwidth] {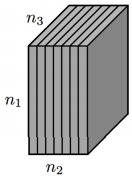}}
    \caption{\scriptsize{lateral slices}}
  \end{subfigure}
  \begin{subfigure} {0.17\textwidth}
  {\includegraphics[width=\textwidth]{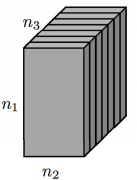}}
    \caption{\scriptsize{frontal slices}}
  \end{subfigure}
\caption{Slices of an $n_1 \times n_2 \times n_3$ tensor $\mathscr{X}$.}
\end{figure}
\item[Norms.] We denote the $\ell_1$ norm as $\|\T{X}\|_1 := \sum_{ijk}|x_{ijk}|$, the infinity norm as $\|\T{X}\|_{\infty} := \max_{ijk}|x_{ijk}|$, and the Frobenius norm as  $\|\T{X}\|_F := \sqrt{\sum_{ijk}|x_{ijk}|^2}$. The above norms reduce to the vector or matrix norms if $\T{X}$ is a vector or a matrix.
\item[Operators.] For $\T{X} \in \mathbb{R}^{n_1\times n_2 \times n_3}$, and using the Matlab commands \texttt{fft,ifft}, we denote by $\hat{\T{X}}$ the result of applying the Discrete Fourier Transform (DFT) on $\T{X}$ along the 3-rd dimension, i.e., $\hat{\T{X}} = \texttt{fft}(\T{X},[],3)$. Analogously, one can also compute $\T{X}$ from $ \hat{\T{X}}$ via the Inverse Discrete Fourier Transform (IDFT), using  $\texttt{ifft}(\hat{\T{X}},[],3)$. In particular, we denote by $\hat{X}$ the block diagonal matrix with each blockcorresponding to the frontal slice $\hat{X}_{::k}$ of $ \hat{\T{X}}$.
i.e.,
\begin{equation*}
\hat{X} = \bdiag_{k \in [n_3]} (\hat{\T{X}}_{::k}):= \begin{pmatrix}
       \hat{\T{X}}_{::1}
       \\&\hat{\T{X}}_{::2}\\\
        &&\ddots\\
        &&&\hat{\T{X}}_{::n_3}\\\
    \end{pmatrix}.
\end{equation*}
\end{description}

\subsection{Tensor basics}\label{sec:tensor-basics} Next, we review relevant mathematical concepts including the tensor SVD (t-SVD), basic definitions and operations and other technical results of tensors \cite{kilmer2011factorization,semerci2014tensor,zhang2017exact}, that are used throughout the paper.

\begin{definition}\label{circ}\textbf{(Tensor product)}.
Given two tensors $\T{Z} \in \mathbb{R}^{n_1\times n_2 \times n_3}$ and $ \T{X} \in \mathbb{R}^{n_2 \times n_4 \times n_5}$, the t-product $\T{Z} * \T{X}$ is the $n_1 \times n_4 \times n_3$ tensor,
\begin{equation}\label{tprod}
    \T{C} = \T{Z}* \T{X} = \Fold{\Circ{\T{Z}}}. \Unfold{\T{X}},
\end{equation}
where
$$
\Circ{\T{Z}}:=
\begin{pmatrix}
    \T{Z}_{::1}&  \T{Z}_{::n_3}&  \dots  & \T{Z}_{::2} \\
    \T{Z}_{::2}& \T{Z}_{::1}& \dots  & \T{Z}_{::3} \\
    \vdots&  \vdots &\ddots & \vdots \\
   \T{Z}_{::n_3}& \T{Z}_{::n_3-1}& \dots  & \T{Z}_{::1}
\end{pmatrix},
$$
and
$$
\Unfold{\T{X}}:=
\begin{pmatrix}
    X_{::1} \\
    X_{::2} \\
    \vdots \\
    X_{::n_3}
\end{pmatrix},
\qquad  \qquad \Fold{\Unfold{\T{X}}}=\T{X}.
  $$
\end{definition}

Because the circular convolution of two tube fibers can be computed by the DFT, the t-product can be alternatively computed in the Fourier domain, as shown in Algorithm~\ref{algtt}.

\begin{algorithm}
\caption{t-product $\T{C} = \T{Z}* \T{X}$ in the Fourier domain}
\label{algtt}
\begin{algorithmic}[1]
\STATE{\textbf{Input}: $\T{Z} \in \mathbb{R}^{n_1 \times n_2 \times n_3}$; $\T{X} \in \mathbb{R}^{n_2 \times n_4 \times n_3}$}
\STATE{$\hat{\T{Z}} \leftarrow \texttt{fft}(\T{Z}, [], 3);$}
\STATE{$\hat{\T{X}} \leftarrow \texttt{fft}(\T{X}, [], 3);$}
\FOR {$k=1, \dots, n_3$}
\STATE{$ \hat{\T{C}}_{::k}= \hat{\T{Z}}_{::k}\hat{\T{X}}_{::k};$}
\ENDFOR
\STATE{$\T{C} \leftarrow \texttt{ifft}(\hat{\T{C}}, [], 3);$}
\RETURN $\T{C}$
\end{algorithmic}
\end{algorithm}

\begin{definition}\textbf{(Conjugate transpose)}.
 The conjugate transpose of a tensor $\T{X} \in \mathbb{R}^{n_1\times n_2 \times n_3}$ is tensor $\T{X}^* \in \mathbb{R}^{n_2\times n_1 \times n_3}$ obtained by conjugate transposing each of the frontal slices and then reversing the order of transposed frontal slices 2 through $n_3$.
\end{definition}

\begin{definition}\textbf{(Identity tensor)}.
The identity tensor $\T{I} \in \mathbb{R}^{n\times n \times n_3}$ is the tensor whose first frontal slice is the $n \times n$ identity matrix, and whose other frontal slices are all zeros.
\end{definition}

\begin{definition}\label{orth}\textbf{(Orthogonal tensor).}
  A tensor $\T{Q} \in \mathbb{R}^{n\times n \times n_3}$ is orthogonal if it satisfies $\T{Q}^* * \T{Q} =\T{Q} * \T{Q}^* = \T{I}.$
\end{definition}

\begin{definition}\label{diag}
 \textbf{(F-diagonal Tensor)}. A tensor is called $f-$diagonal if each of its frontal slices is a diagonal matrix.
\end{definition}

\begin{theorem}\textbf{(t-SVD)}.
  Let $\T{X} \in \mathbb{R}^{n_1\times n_2 \times n_3}$. Then, it can be factored as
  \begin{equation}\label{tsvd}
    \T{X}= \T{U} * \Sigma *\T{V}^T,
  \end{equation}
 where  $\T{U} \in \mathbb{R}^{n_1\times n_2 \times n_3}$, $\T{V} \in \mathbb{R}^{n_1\times n_2 \times n_3}$ are orthogonal and $\Sigma \in \mathbb{R}^{n_1\times n_2 \times n_3}$ is a f-diagonal tensor.
\end{theorem}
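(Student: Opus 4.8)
The plan is to pass to the Fourier domain along the third mode, where the t-product decouples into independent matrix products, reduce the problem to ordinary matrix SVDs on each frontal slice, and then transport the result back via the inverse DFT. The foundational fact I would invoke is the block-diagonalization identity
\begin{equation*}
(F_{n_3} \otimes I_{n_1})\, \Circ{\T{X}}\, (F_{n_3}^{-1} \otimes I_{n_2}) = \hat{X} = \bdiag_{k \in [n_3]}(\hat{\T{X}}_{::k}),
\end{equation*}
where $F_{n_3}$ is the (unitary) $n_3 \times n_3$ DFT matrix. This is exactly the statement underlying Algorithm~\ref{algtt}: the circulant block structure of $\Circ{\T{X}}$ is diagonalized by the DFT, so a t-product $\T{Z}*\T{X}$ becomes the slice-wise matrix product $\hat{\T{C}}_{::k} = \hat{\T{Z}}_{::k}\hat{\T{X}}_{::k}$ for each $k \in [n_3]$. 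I would also use the companion fact that the tensor transpose corresponds, in the Fourier domain, to the conjugate transpose of each frontal slice, so that the t-product factorization \eqref{tsvd} is equivalent to a slice-wise matrix factorization.

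Given this, I would construct the factors as follows. First compute $\hat{\T{X}} = \texttt{fft}(\T{X},[],3)$. For each frontal slice take an ordinary matrix SVD $\hat{\T{X}}_{::k} = \hat{U}_{::k}\,\hat{S}_{::k}\,\hat{V}_{::k}^{*}$, with $\hat{U}_{::k},\hat{V}_{::k}$ unitary and $\hat{S}_{::k}$ nonnegative diagonal. Stacking these slices defines tensors $\hat{\T{U}}$, $\hat{\Sigma}$, $\hat{\T{V}}$ in the Fourier domain; applying the inverse DFT along the third mode yields candidate factors $\T{U}$, $\Sigma$, $\T{V}$. By the block-diagonalization identity the slice-wise products recombine to give $\T{X} = \T{U} * \Sigma * \T{V}^{T}$, since the conjugate transpose in each Fourier slice matches the tensor transpose $\T{V}^T$.

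It then remains to verify the three structural claims. Orthogonality of $\T{U}$ and $\T{V}$ follows because each Fourier slice of $\T{U}^**\T{U}$ equals $\hat{U}_{::k}^{*}\hat{U}_{::k} = I$, and the only tensor whose every Fourier slice is the identity is $\T{I}$; the same argument handles $\T{U}*\T{U}^*$ and both products for $\T{V}$. The f-diagonality of $\Sigma$ is inherited from the fact that each $\hat{S}_{::k}$ is diagonal: if every frontal slice of $\hat{\Sigma}$ is diagonal, then every off-diagonal tube fiber of $\hat{\Sigma}$ vanishes, and since the IDFT acts along the tube fibers, the corresponding off-diagonal tube fibers of $\Sigma$ vanish as well, so each frontal slice of $\Sigma$ is diagonal.

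The main obstacle is ensuring that $\T{U}$, $\Sigma$, $\T{V}$ come out \emph{real}, since the slice-wise SVDs are a priori complex. Because $\T{X}$ is real, its DFT enjoys the conjugate symmetry $\hat{\T{X}}_{::k} = \overline{\hat{\T{X}}_{::(n_3-k+2)}}$ for $k = 2,\dots,n_3$, with $\hat{\T{X}}_{::1}$ real. The fix is to choose the SVDs consistently with this symmetry: take arbitrary SVDs for the self-conjugate slices, and for each conjugate pair define the SVD of slice $n_3-k+2$ to be the entrywise conjugate of the SVD chosen for slice $k$. This forces $\hat{\T{U}}$, $\hat{\Sigma}$, $\hat{\T{V}}$ to obey the same conjugate symmetry, which is precisely the condition guaranteeing that their inverse DFTs are real-valued. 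With this choice the constructed factors are genuine real tensors, and the factorization \eqref{tsvd} holds.
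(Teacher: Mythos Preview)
Your argument is correct and is precisely the standard construction. Note, however, that the paper does not actually prove this theorem: it is stated in the preliminaries section as a known result imported from \cite{kilmer2011factorization,semerci2014tensor,zhang2017exact}, and the paper merely remarks afterward that ``t-SVD can be efficiently computed based on the matrix SVD in the Fourier domain,'' pointing to the block-diagonalization identity \eqref{circ1}. Your proof is exactly the fleshing-out of that remark, so there is nothing to compare against beyond observing that your approach is the intended one.

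One small comment: you correctly allow for multiple self-conjugate slices (``take arbitrary SVDs for the self-conjugate slices''), which matters when $n_3$ is even since both $\hat{\T{X}}_{::1}$ and $\hat{\T{X}}_{::(n_3/2+1)}$ are real. You might also note in passing that the dimensions in the theorem statement as printed are garbled (orthogonality in the sense of Definition~\ref{orth} requires $\T{U} \in \mathbb{R}^{n_1\times n_1\times n_3}$ and $\T{V} \in \mathbb{R}^{n_2\times n_2\times n_3}$), but this does not affect your argument.
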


Note that t-SVD can be efficiently computed based on the matrix SVD in the Fourier domain. This is based on the key property that the block circulant matrix can be mapped to a block diagonal matrix in the Fourier domain, i.e.
\begin{equation}\label{circ1}
  (F_{n_3}\otimes I_{n_1})\cdot \Circ{\T{X}}\cdot (F^{-1}_{n_3} \otimes I_{n_2})= \hat{X},
\end{equation}
where $F_{n_3}$ denotes the $n_3 \times n_3$ DFT matrix and $\otimes$ denotes the Kronecker product.

\begin{definition}(\textbf{Tensor multi and tubal rank}).
 The tensor multi-rank of $\T{X} \in \mathbb{R}^{n_1 \times n_2 \times n_3}$ is a vector $\upsilon \in \mathbb{R}^{n_3}$ with its $i-th$ entry being the rank of the i-th frontal slice of $\hat{\T{X}}$, $r_i = \rank(\hat{\T{X}}_{i::})$. The tensor tubal rank, denoted by $r =\rank_t(\T{X})$, is defined as the number of nonzero singular tubes of $\Sigma$, where $\Sigma$ is obtained from the t-SVD of $\T{X} = \T{U} \ast \Sigma \ast \T{V}^T$.
That is,
\begin{equation}\label{tubl}
 r = \text{card}\{i: \Sigma_{ii:} \neq 0\} = \max_{i} r_i,
\end{equation}
where card denotes the cardinality of a set.
\end{definition}
The tensor tubal rank shares some properties
of the matrix rank, e.g.
$$ \rank_t(\T{X} \ast \T{Z}) \leq \min (\rank_t(\T{X}), \rank_t(\T{Z})).$$

Many tensor completion and decomposition techniques for video and seismic noise reduction rely on a low-rank factorization of a time-frequency transform. Further, certain energy methods broadly used in image processing, e.g., PDEs \cite{bertalmio2000image} and belief propagation techniques mainly focus on local relationships. The basic assumption is that the missing entries depend primarily on their neighbors. Hence, the further apart two pixels are, the smaller their dependance is. However, for video and time series of images the value of the missing entry also depends on  entries which are relatively far away in the time/sequence dimension. Thus, it is necessary to develop a tool to directly
capture such global information in the data.  Using \eqref{tubl},  $r_0$ is the rank of the ``mean image" across the video sequence. Meanwhile, $r_1$ is the rank of the next frequency's content across frames, etc. Under a smoothness assumption  
that captures global information at given pixels across time, the frontal slices (after FFT) for bigger $i$ have smaller singular values.

\begin{definition}(\textbf{Tensor nuclear norm}). The tensor nuclear norm of a tensor $\T{X} \in \mathbb{R}^{n_1 \times n_2 \times n_3}$, denoted by $\|\T{X}\|_{\circledast}$, is defined as the average of the nuclear norm of all frontal slices of $\hat{\T{X}}$, i.e.
\begin{equation}\label{tnuc}
\|\T{X}\|_{\circledast}:= \frac{1}{n_3}\sum_{k=1}^{n_3}\| \hat{\T{X}}_{::k}\|_{*}.
\end{equation}
\end{definition}

 The above tensor nuclear norm is defined in the Fourier domain. It is closely related to the nuclear norm of the block circulant matrix in the original domain. Indeed,
 \begin{eqnarray*}
 \nonumber
   \|\T{X}\|_{\circledast} &=& \frac{1}{n_3}  \sum_{k=1}^{n_3} \|\hat{\T{X}}_{::k}\|_{*} = \frac{1}{n_3}  \| \hat{X} \|_{*} \\
   \nonumber
    &=& \frac{1}{n_3} \|(F_{n_3} \otimes I_{n_1}) \cdot \Circ{\T{X}} \cdot (F^{-1}_{n_3} \otimes I_{n_2})\|_{*}\\
    &=& \frac{1}{n_3} \|\Circ{\T{X}}\|_{*}.
 \end{eqnarray*}

 The above relationship gives an equivalent definition of the tensor nuclear norm in the original domain. Thus, the tensor nuclear norm is the nuclear norm (with a factor $1/n_3$) of a new matricization (block circulant matrix) of a tensor.

\begin{definition}\label{def:spec:ten}
	\textbf{(Tensor spectral norm)}.  The tensor spectral norm of $\T{X} \in \mathbb{R}^{n \times n \times n_3}$, denoted as $ \| \T{X} \|$, is defined as $$\|\T{X}\|:= \|\hat{X}\|_2 = \|(F_{n_3} \otimes I_{n_1}) \cdot \Circ{\T{X}} \cdot (F^{-1}_{n_3} \otimes I_{n_2})\|_{2},$$ where $\|\cdot\|_2$ denotes the spectral norm of a matrix.
\end{definition}

\begin{definition}\textbf{(Inverse of tensor)}.
The inverse of $\T{X} \in \mathbb{R}^{n \times n \times n_3}$, denoted by $\T{X}^{-1}$ satisfies
\begin{equation}
\T{X}^{-1} * \T{X} = \T{X} * \T{X}^{-1} = \T{I},
\end{equation}
where $\T{I}$ is the identity tensor of size $n \times n \times n_3$.
\end{definition}

\begin{definition}\label{def:basis}\textbf{(Standard tensor basis).}
The \textbf{lateral basis} $\tcol{e}_i$, is of size $n_1 \times 1 \times n_3$ with only one entry equal to 1 and the remaining equal to zero, in which the nonzero entry 1 will only appear at the first frontal slice of $\tcol{e}_i$. Normally its transpose $\tcol{e}_i^\top$ is called the \textbf{horizontal basis}. The other standard tensor basis is called \textbf{tube basis} $\tube{e}_i$, and corresponds to a tensor of size $1 \times 1 \times n_3$ with one entry equal to 1 and the rest equal to 0. Figure~\ref{fig:basis} illustrates these bases.
\end{definition}
\begin{figure}[h]
\centering \makebox[0in]{
    \begin{tabular}{c c}
      \includegraphics[scale=0.25]{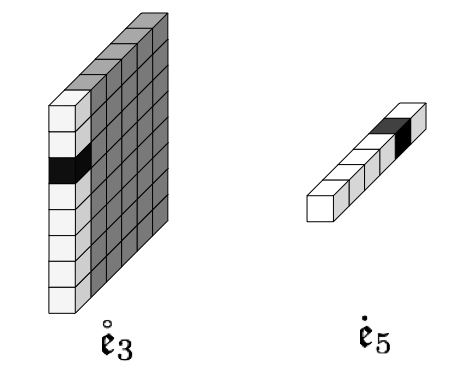}
 \end{tabular}}
  \caption{ The lateral basis $\tcol{e}_3$ and the tube basis $\tube{e}_5$. The black cubes are 1, gray and white cubes are 0. The white cubes stand for the potential entries that could be 1.}
  \label{fig:basis}
\end{figure}

\subsection{Linear algebra with tensors: free submodules}
\label{subsec:free}

The set of complex numbers $\mathbb{C}$ with standard scalar addition and multiplication is a field and $\mathbb{C}^{n_3}$ forms a vector space over this field. However, as pointed out in \cite{kernfeld2014clustering, kilmer2011factorization}, the set of tubes $\mathbb{C}^{1\times 1 \times n_3}$ equipped with the tensor product form a ring with unity.  A module over a ring can be thought of as a generalization of a vector space over a field, where the corresponding scalars are the elements of the ring.
In linear algebra over a ring, the analog of a subspace is a {\it free submodule}. Our algorithm relies on
submodules, and the following theorem.

\begin{theorem} \label{thm:free}
The set of slices
\begin{equation}\label{set:slice}
\Upsilon := \{\T{X}_{:j:} \mid \quad \T{X}_{:j:} \in \mathbb{C}^{n_1 \times 1 \times n_3}, \quad j \in [n_2]\},
\end{equation}
forms a free module over the set of tubes $\mathbb{C}^{1 \times 1 \times n_3}$.
\end{theorem}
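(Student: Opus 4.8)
The plan is to read $\Upsilon$ as the full set $\mathbb{C}^{n_1 \times 1 \times n_3}$ of lateral slices and to prove freeness by producing an explicit basis of size $n_1$. The scalar ring is $R := \mathbb{C}^{1 \times 1 \times n_3}$, the tubes under the t-product, which by the discussion preceding the theorem is a ring with unity; because the t-product of two tubes is circular convolution — equivalently, entrywise multiplication of their DFT coefficients — $R$ is commutative. A dimension count shows that the only way to multiply a lateral slice by a tube and remain in $\mathbb{C}^{n_1 \times 1 \times n_3}$ is on the right, so I define the scalar action by $\tcol{x} \cdot \tube{a} := \tcol{x} * \tube{a}$.

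First I would verify that $\Upsilon$ is a right $R$-module. Entrywise addition makes it an abelian group, and the compatibility law $(\tcol{x}\cdot\tube{a})\cdot\tube{b} = \tcol{x}\cdot(\tube{a}*\tube{b})$, distributivity, and the unital law (multiplication by the identity tube acts trivially) follow at once from associativity and bilinearity of $*$ together with the ring structure of $R$. These are routine once the scalar action is fixed.

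The central step is to show that the standard lateral basis $\{\tcol{e}_1,\dots,\tcol{e}_{n_1}\}$ of Definition~\ref{def:basis} is an $R$-basis. Passing to the Fourier domain, where $*$ acts frontal-slice-wise, I note that $\tcol{e}_i$ has its single nonzero entry (a $1$) in the first frontal slice, at row $i$, so its DFT along the third mode satisfies $(\widehat{\tcol{e}}_i)_{::k} = e_i$ for every $k \in [n_3]$, with $e_i$ the $i$-th standard unit vector of $\mathbb{C}^{n_1}$. Consequently, for a tube $\tube{a}$ with DFT coefficients $\hat a_k$, the slice $\tcol{e}_i * \tube{a}$ has $k$-th frontal slice $\hat a_k\, e_i$; inverting the transform shows that $\tcol{e}_i * \tube{a}$ is precisely the lateral slice whose $i$-th tube fiber equals $\tube{a}$ and whose other tube fibers vanish. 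Generation is then immediate — any $\tcol{x}$ with tube fibers $\tube{x}_1,\dots,\tube{x}_{n_1}$ satisfies $\tcol{x} = \sum_{i=1}^{n_1}\tcol{e}_i * \tube{x}_i$ — and the same description gives independence: if $\sum_{i=1}^{n_1}\tcol{e}_i * \tube{c}_i = \mathbf{0}$, the left-hand side is the slice whose $i$-th tube fiber is $\tube{c}_i$, forcing every $\tube{c}_i$ to be the zero tube. Equivalently, $\tcol{x}\mapsto(\tube{x}_1,\dots,\tube{x}_{n_1})$ is an $R$-module isomorphism $\Upsilon \cong R^{n_1}$, which is the very definition of free of rank $n_1$.

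I expect the main conceptual obstacle to be not the computation but keeping the argument honest about the ring: $R$ is not a field, since it has zero divisors (any tube whose DFT vanishes in some coordinate), and it is not even an integral domain. Hence one cannot invoke ``a finitely generated space over a field has a basis,'' nor reason via dimension over a field; the independence claim must be checked directly over $R$. The disjoint-support structure of the $\tcol{e}_i$ — each depositing its coefficient tube into a distinct row — is exactly what makes the ring-level independence go through without any such field-only shortcut.
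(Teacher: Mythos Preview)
Your argument is correct. The paper does not actually prove this theorem; its entire proof reads ``A detailed proof is provided in \cite{braman2010third},'' deferring to Braman's original construction. Your explicit route --- checking the module axioms for the right action $\tcol{x}\cdot\tube{a}:=\tcol{x}*\tube{a}$ and then exhibiting $\{\tcol{e}_1,\dots,\tcol{e}_{n_1}\}$ as a free basis via the Fourier-domain identity $(\widehat{\tcol{e}}_i)_{::k}=e_i$ --- is precisely the content of that reference, so you have filled in what the paper leaves to citation. Your closing caveat about $R$ having zero divisors (hence no field-dimension shortcuts) is a worthwhile observation that neither the paper nor most expositions make explicit.
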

\begin{proof}
A detailed proof is provided in \cite{braman2010third}.
\end{proof}

Using Theorem~\ref{thm:free}, $\Upsilon$ has a \emph{basis} so that any element of it can be written as a ``t-linear combination" of elements of a set of basis slices. A ``t-linear combination", is defined as a sum of slices multiplied, based on the t-product, by coefficients from $\mathbb{C}^{1 \times 1 \times n_3}$.


\begin{definition} (\textbf{Slice-wise linear independence})\label{def:lin_ind}
The slices in a subset $\Lambda=\{\T{X}_{:1:}, \dots, \T{X}_{:n:}\}$ of $\Upsilon$ are said to be \emph{linearly dependent}, if there exists a finite number of distinct slices $\T{X}_{:1:}, \dots, \T{X}_{:m:}$ in $\Lambda$, and tubes $ \T{C}_{11:}, \dots, \T{C}_{mm:}$, not all zero, such that
\begin{equation}\label{lin:indepndence}
\sum_{j=1}^{m} \T{X}_{:j:} \ast \T{C}_{jj:}= O,
\end{equation}
where $O$ denotes the lateral slice comprising of all zeros.

The slices in a subset $\Lambda=\{\T{X}_{:1:}, \dots, \T{X}_{:n:}\}$ of $\Upsilon$ are said to be \emph{linearly independent} if the equation
$$
\sum_{j=1}^{n} \T{X}_{:j:} \ast \T{C}_{jj:}= O,
$$
can only be satisfied by all zero tubes $\T{C}_{jj:},~ j=1, \dots, n$.
\end{definition}

Based on the computable t-SVD, the tensor nuclear norm \cite{semerci2014tensor} is used to replace the tubal rank for low-rank tensor recovery (from incomplete/corrupted tensors) by solving the following convex program,
\begin{equation}\label{tlow6}
  \min_{\T{L}} \|\T{L}\|_{\circledast} \qquad s.t.  \qquad
  \| P_{\Omega} (\T{X} - \T{L})\| \leq \Delta,
\end{equation}
where $\|\T{L}\|_{\circledast}$  denotes the tensor nuclear norm and
\begin{eqnarray*}
  (P_{\Omega} (\T{X})_{ij} &=& \T{X}_{ij}, \quad \text{if}  \quad (i,j) \in \Omega  \quad \text{and} \quad (P_{\Omega}(\T{X}))_{ij}= 0  \quad \text{otherwise}.
\end{eqnarray*}

Similarly to the matrix completion problem, recovery of tensor $\T{X}$ from its observed entries is essentially infeasible if the large majority of the entries are equal to zero \cite{candes2009exact}. For the tensor completion case, it is the case that if tensor $\T{X}$ only has a few entries which are not equal to zero, in its t-SVD  $\T{U}*\T{S}*\T{V}^\top = \T{X}$, the singular tensors $\T{U}$ and $\T{V}$ will be highly concentrated. Indeed, not all tensors can be recovered from data sets with missing entries and/or large outliers. Hence, in analogy to the main idea in matrix completion \cite{candes2009exact}, tensor slices $\T{U}(:,i,:)$ and $\T{V}(:,i,:), i=1,2,...,r$, need to be sufficiently spread out, which in turn implies that they should be uncorrelated with the standard tensor basis.
Our analysis in Section~\ref{noisy:completion} will focus on noisy tensor completion based on a robust factorization algorithm whose estimation/recovery guarantees are expressed in terms of the coherence of the target low-rank tensor $\T{L}$. It establishes that lower values of tensor coherence provide better recovery results. In addition, we propose a randomized approximation algorithm, whose guarantees are also related to the notion of tensor coherence. The following three notions of tensor coherence are defined next.

\begin{definition}\label{eq:ten:incoh}(\textbf{Tensor $\mu_0$-coherence}).
Let $\T{V} \in \mathbb{R}^{n \times r \times n_3}$ contain orthonormal lateral basis with $r \leq n$. Then, the $\mu_0$-coherence of $\T{V}$ is given by:
\begin{eqnarray*}
  \mu_0(\T{V}) &:=& \frac{n n_3}{r} \max_{1\leq i \leq n} \|\T{V}^\top * \tcol{e}_i\|_F^2=\frac{n n_3}{r} \max_{1\leq i\leq n} \|\T{V}_{i::}\|_F^2,
\end{eqnarray*}
where $\tcol{e}_i$ is standard lateral basis.
\end{definition}

\begin{definition}\label{eq:ten:incoh1}(\textbf{Tensor $\mu_1$-coherence}). For a tensor $\T{L} \in \mathbb{R}^{n_1 \times n_2 \times n_3}$ assume that $\text{rank}_t(\T{L}) = r$. Then, the $\mu_1$-coherence of $\T{L}$ is defined as
\begin{eqnarray*}
  \mu_1(\T{L}) := \frac{n_1 n_2 n_3^2}{r}\| \T{U}* \T{V}^T\|^2_{\infty},
\end{eqnarray*}
where $\T{L}$ has the skinny t-SVD $\T{L}= \T{U}* \T{S}*\T{V}^\top $.
\end{definition}
\begin{definition}\label{eq:ten:incohh01}(\textbf{Tensor $(\mu,r)$-coherence}).
For any $\T{\mu} >0$, we call a tensor $\T{L}$ $(\mu,r)$-coherent if
\begin{eqnarray*}
\text{rank}_t(\T{L}) &=& r, \\
  \max\{\mu_0(\T{U}), \mu_0(\T{V})\} &\leq& \mu, \\
  \mu_1(\T{\T{L}}) &\leq& \mu.
\end{eqnarray*}
\end{definition}
Note that the standard tensor coherence condition is much weaker than the \emph{matrix weak coherence} one for each frontal slice of $\hat{\T{X}}$ \cite{zhang2017exact}. Hence, in the analysis of the proposed randomized algorithms, we will use the standard tensor coherence condition.

\section{Approximate Tensor Low Rank Decompositions}

We develop extensions of the matrix CX and CUR decompositions \cite{drineas2006fast,drineas2008relative} to third-order tensors. The proposed algorithms result in computing low-rank tensor approximations that are explicitly expressed in terms of a small number of slices of the input tensor. We start by introducing a basic computational tool -a randomized tensor multiplication procedure- that is used in subsequent developments.

\subsection{Approximate tensor multiplication}\label{sxn:matrix_multiply:alg}

Next, we present a randomized tensor-product and provide key results on the quality of the resulting approximation. Given two tensors $\T{A}$ and $\T{B}$, using Definition~\ref{circ}, the t-product may be written as follows:
\begin{equation}\label{prods}
  \T{A}*\T{B} = \sum_{t=1}^{n_2}\T{A}_{:t:}*\T{B}_{t::},
\end{equation}
where $*$ denotes the tensor product.

It can be easily shown that the left hand side of \eqref{prods} is equivalent to the block multiplication and then summation in the Fourier domain. When tensor multiplication is formulated as \eqref{prods}, we can develop a randomized algorithm to approximate the product $\T{A}*\T{B}$.

Algorithm~\ref{alg:matrix multiply} takes as input two tensors $\T{A} \in \mathbb{R}^{n_1 \times n_2 \times n_3}$ and $\T{B} \in \mathbb{R}^{n_2 \times n_4 \times n_3}$, a positive integer $ c \le n_2$, and a probability distribution $\{p_i\}_{i=1}^{n_2}$ over $[n_2]$. It returns as output two tensors $\T{C}$ and $\T{R}$, where the lateral slices of $\T{C}$ correspond to a small number of sampled and rescaled slices of $\T{A}$, and similarly the horizontal slices of $\T{R}$ constitute a small number of sampled and rescaled slices of $\T{B}$. Specifically, consider at most $c$ lateral slices (in expectation) of $\T{A}$ selected, with the $i$-th lateral slice of $\T{A}$ in $\T{C}$ be chosen with probability $p_i = \min \{1,cp_i\}$. Then, define the sampling tensor $\T{S} \in \mathbb{R}^{n_2 \times n_2 \times n_3}$ to be binary where $S_{ii1} = 1$ if the $i$-th slice is selected and $\T{S}_{ij2:n_3} = 0$ otherwise. Define the rescaling tensor $\T{D} \in \mathbb{R}^{n_2 \times c \times n_3}$ to be the tensor with $\T{D}_{ij1} = 1/\sqrt{cp_{j}}$, if $i-1$ of the previous slices have been selected and $\T{D}_{ij2:n_3}=0$ otherwise. Analogously, the {rt-product} samples and rescales the corresponding horizontal slices of tensor $\T{B}$. In both cases, $\T{C} = \T{A}* \T{S}*\T{D}$ is an $n_1 \times c \times n_3$ tensor consisting of sampled and rescaled copies of the lateral slices of $\T{A}$, and $\T{R} = (\T{S}*\T{D})^\top *\T{B} = \T{D}*\T{S}^\top *\T{B}$ is a $c \times n_4 \times n_3$ tensor comprising of sampled and rescaled copies of the horizontal slices of $\T{B}$. For the case of $n_3=1$, the algorithm selects column-row pairs in Algorithm~\ref{alg:matrix multiply} and is identical to the algorithm in \cite{drineas2006fast}.

\begin{algorithm}[h]
\caption{\texttt{(rt-product)}, a fast Monte-Carlo algorithm for approximate tensor multiplication.}\label{alg:matrix multiply}
\begin{algorithmic}[1]
\STATE{ \textbf{Input}: $\T{A} \in \mathbb{R}^{n_1 \times n_2 \times n_3}$, $\T{B} \in \mathbb{R}^{n_2 \times n_4 \times n_3}$, $p_i \geq 0, i\in[n_2]$ s.t. $\sum_{i \in [n_2]}p_i=1$, positive integer $c \leq n_2$.}
\STATE{Initialize $\T{S} \in \mathbb{R}^{n_2 \times n_2 \times n_3} $ and $\T{D} \in \mathbb{R}^{n_2 \times c \times n_3}$ to the all zeros tensors;}
\STATE{$t=1$;}
\FOR{$i=1, \dots, n_2$}
\STATE{Pick $i$ with probability $\min\{1,c p_i\}$;}
\IF{$i$ is picked}
\STATE{$\T{S}_{it1} = 1, \quad \T{S}_{it2:n_3} =0$;}
\STATE{$\T{D}_{tt1} = 1/\min\{1,\sqrt{cp_{i}}\}, \quad \T{D}_{tt2:n_3} =0;$}
\STATE{$t = t + 1;$}
\ENDIF
\ENDFOR
\STATE{$\T{C} = \T{A}*\T{S}*\T{D}$,  \quad $\T{R} = \T{D}* \T{S}^\top * \T{B} $;}
\STATE{$\hat{\T{C}} \leftarrow \texttt{fft}(\T{C}, [], 3)$, \quad  $\hat{\T{R}} \leftarrow \texttt{fft}(\T{R}, [], 3)$;}
\FOR{$k=1, \dots, n_3$}
\STATE{$ \hat{\T{Z}}_{::k}= \hat{\T{C}}_{::k}\hat{\T{R}}_{::k};$}
\ENDFOR\\
\STATE{$\T{Z} \leftarrow \texttt{ifft}(\hat{\T{Z}}, [], 3);$}\\
\RETURN  $\T{Z}$
\end{algorithmic}
\end{algorithm}
\subsection{Running time of rt-product}

 The {rt-product} is computationally efficient, has small memory requirements, and is well suited for large scale problems. Indeed, the {rt-product} algorithm can be implemented without storing tensors $\T{A} \in \mathbb{R}^{n_1 \times n_2 \times n_3}$ and $\T{B} \in \mathbb{R}^{n_2 \times n_4 \times n_3}$ and uses $O(\max (n_1,n_4) c n_3 \log(n_3))$ flops to transform 
the input to the Fourier domain and $O( c (n_1 +n_2 + n_4) n_3)$ flops to construct $\T{C}\in \mathbb{R}^{n_1\times c \times n_3}$ and $\T{R}\in \mathbb{R}^{c \times n_4 \times n_3}$, where \begin{equation}\label{multi:approx}
\T{A}*\T{B}\sim \T{C}*\T{R}.
\end{equation}
It is worth mentioning that we do not require storing the sampling and rescaling tensors $\T{S}$ and $\T{D}$ in our implementation. 

Next, we provide the main result on the quality of the approximation obtained by Algorithm~\ref{alg:matrix multiply} that
specifies the assumptions under which \eqref{multi:approx} holds. The most interesting of these assumptions is that the sampling probabilities used to randomly sample the lateral slices of $\T{A}$ and the corresponding horizontal slices of $\T{B}$ are {\em  non-uniform} and depend on the product of the norms of the lateral slices of $\T{A}$ and/or the corresponding horizontal slices of $\T{B}$. Following \cite{drineas2006fast}, we consider two examples of nonuniform sampling probabilities:
\begin{description}
  \item[i.]
If we would like to use information from both tensors $\T{A}$ and $\T{B}$, we consider sampling probabilities $\left\{ p_i \right\}_{i=1}^{n_2}$ such that
\begin{equation}
\label{eqn:nearly_optimal_probs}
p_i \geq \beta \frac{\|\T{A}_{:i:}\|_F\|B_{i::}\|_F}
                    {\sum_{i=1}^{n_2} \|{\T{A}_{:i:}}\|_F \|\T{B}_{i::}\|_F}, \qquad  \beta \in (0,1].
\end{equation}
  \item[ii.] If only information on $\T{A}$ is easily available, then we use sampling probabilities $\left\{ p_i \right\}_{i=1}^{n_2}$ such
that
\begin{equation}
\label{eqn:nonoptimal_probs}
p_i \geq \beta \frac{\|{\T{A}_{:i:}}\|_F^2}{\|\T{A}\|_F^2}, \qquad \beta \in (0,1].
\end{equation}
\end{description}

The following theorem gives the main result for Algorithm~\ref{alg:matrix multiply}, and generalizes Theorem~6 in \cite{drineas2008relative} to third order tensors .

\begin{theorem}\label{theorem:matmult-main-exact}
Suppose $\T{A} \in \mathbb{R}^{n_1 \times n_2 \times n_3}$, $\T{B} \in \mathbb{R}^{n_2 \times n_4 \times n_5}$, and $c \le n_2$. In Algorithm~\ref{alg:matrix multiply}, if the sampling probabilities $\left\{ p_i \right\}_{i=1}^{n_2}$ satisfy (\ref{eqn:nearly_optimal_probs}) or (\ref{eqn:nonoptimal_probs}), then the following holds 
\begin{equation}\label{eqn1:theorem:matmult-main-expected}
\Ec[\|{\T{A}*\T{B}-\T{C}*\T{R}}\|_F] \leq \frac{1}{\sqrt{\beta c}}\|{\T{A}}\|_F \|{\T{B}}\|_F.
\end{equation}
\end{theorem}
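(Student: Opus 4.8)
The plan is to realize $\T{C}\ast\T{R}$ as an unbiased randomized estimator of $\T{A}\ast\T{B}$ written as a sum over sampled slices, compute its second moment exactly, and then pass from the $F$-norm squared to the $F$-norm by Jensen's inequality. First I would use the structure of the sampling and rescaling tensors: since $\T{S}$ and $\T{D}$ are supported on their first frontal slice, the operator $\T{S}\ast\T{D}\ast\T{D}^{\top}\ast\T{S}^{\top}$ is f-diagonal and selects slice $i$ with a multiplicative weight $1/\min\{1,cp_i\}$ (note $\min\{1,\sqrt{cp_i}\}^{2}=\min\{1,cp_i\}$). Combining this with the bilinear expansion \eqref{prods}, I would write, with $\gamma_i\in\{0,1\}$ the indicator that slice $i$ is retained and $\tilde p_i:=\min\{1,cp_i\}=\expect{\gamma_i}$,
\[
\T{C}\ast\T{R}=\sum_{i=1}^{n_2}\frac{\gamma_i}{\tilde p_i}\,\T{A}_{:i:}\ast\T{B}_{i::},\qquad \expect{\T{C}\ast\T{R}}=\sum_{i=1}^{n_2}\T{A}_{:i:}\ast\T{B}_{i::}=\T{A}\ast\T{B},
\]
so the estimator is unbiased.

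Next I would compute the second moment. Writing the error as $\T{A}\ast\T{B}-\T{C}\ast\T{R}=\sum_i\bigl(1-\gamma_i/\tilde p_i\bigr)\T{A}_{:i:}\ast\T{B}_{i::}$, expanding $\norm{\cdot}{F}^2$ produces inner products $\innerprod{\T{A}_{:i:}\ast\T{B}_{i::}}{\T{A}_{:j:}\ast\T{B}_{j::}}$ weighted by $\expect{(1-\gamma_i/\tilde p_i)(1-\gamma_j/\tilde p_j)}$. Since the $\gamma_i$ are independent and each factor is mean zero, every $i\neq j$ term vanishes, while the diagonal term contributes $\var(\gamma_i/\tilde p_i)=(1-\tilde p_i)/\tilde p_i$. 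Hence
\[
\expect{\norm{\T{A}\ast\T{B}-\T{C}\ast\T{R}}{F}^{2}}=\sum_{i=1}^{n_2}\frac{1-\tilde p_i}{\tilde p_i}\,\norm{\T{A}_{:i:}\ast\T{B}_{i::}}{F}^{2}\le\sum_{i=1}^{n_2}\frac{1}{cp_i}\,\norm{\T{A}_{:i:}\ast\T{B}_{i::}}{F}^{2},
\]
where the last inequality uses that the summand is $0$ when $cp_i\ge1$ and equals $(1-cp_i)/(cp_i)\le 1/(cp_i)$ otherwise.

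The crux is then to control each rank-one layer $\norm{\T{A}_{:i:}\ast\T{B}_{i::}}{F}$ by $\norm{\T{A}_{:i:}}{F}\,\norm{\T{B}_{i::}}{F}$. Unlike the matrix outer product, $\T{A}_{:i:}\ast\T{B}_{i::}$ is a circular convolution of frontal fibers, so I would pass to the Fourier domain via Plancherel, $\norm{\T{X}}{F}^{2}=\tfrac{1}{n_3}\sum_{k}\norm{\hat{\T{X}}_{::k}}{F}^{2}$, obtaining $\norm{\T{A}_{:i:}\ast\T{B}_{i::}}{F}^{2}=\tfrac{1}{n_3}\sum_{k}\norm{\hat a_{ik}}{2}^{2}\,\norm{\hat b_{ik}}{2}^{2}$, where $\hat a_{ik}$ is the $i$-th column of $\hat{\T{A}}_{::k}$ and $\hat b_{ik}^{\top}$ the $i$-th row of $\hat{\T{B}}_{::k}$, and then apply Cauchy--Schwarz over the frequency index $k$ together with Parseval to express the bound through $\norm{\T{A}_{:i:}}{F}$ and $\norm{\T{B}_{i::}}{F}$. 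Granting this per-slice estimate, I would substitute the sampling probabilities: under \eqref{eqn:nearly_optimal_probs}, $1/p_i\le\beta^{-1}\bigl(\sum_j\norm{\T{A}_{:j:}}{F}\norm{\T{B}_{j::}}{F}\bigr)\big/\bigl(\norm{\T{A}_{:i:}}{F}\norm{\T{B}_{i::}}{F}\bigr)$, which collapses the sum to $\beta^{-1}c^{-1}\bigl(\sum_j\norm{\T{A}_{:j:}}{F}\norm{\T{B}_{j::}}{F}\bigr)^{2}\le\beta^{-1}c^{-1}\norm{\T{A}}{F}^{2}\norm{\T{B}}{F}^{2}$ by one further Cauchy--Schwarz across slices; the estimate \eqref{eqn:nonoptimal_probs} is handled identically after an extra Cauchy--Schwarz. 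Finally, Jensen's inequality $\expect{\norm{\cdot}{F}}\le\sqrt{\expect{\norm{\cdot}{F}^{2}}}$ yields \eqref{eqn1:theorem:matmult-main-expected}.

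The main obstacle is precisely this per-slice bound in the third paragraph: because the t-product couples the Fourier frequencies, its behavior under the Frobenius norm is not exactly submultiplicative the way a matrix outer product is, and the Cauchy--Schwarz/Parseval bookkeeping across frontal slices must be done carefully to guarantee that the final constant is exactly $1/\sqrt{\beta c}$ rather than one inflated by a factor depending on $n_3$. The remaining steps (unbiasedness, the variance computation, and the reduction via the slice-sampling probabilities) are direct tensorial analogues of the matrix argument of Theorem~6 in \cite{drineas2008relative}.
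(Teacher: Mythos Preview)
Your approach mirrors the paper's proof almost step for step: both express the error as $\sum_j(1-I_j/\tilde p_j)\,\T{A}_{:j:}*\T{B}_{j::}$, exploit the independence of the Bernoulli indicators to kill the cross terms in the second moment, bound each surviving variance by $1/(cp_j)$, and finish via Jensen together with the slice-sampling probabilities. The only cosmetic difference is that the paper passes to the Fourier domain at the very outset (via the block-circulant diagonalization~\eqref{circ1}) and carries out the entrywise expectation there, whereas you postpone the Fourier step until the per-slice estimate.

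Regarding the step you single out as the main obstacle: the paper does not prove a per-slice submultiplicativity inequality either. After reaching $\tfrac{1}{cn_3}\sum_{k}\sum_{j}\|\hat{\T{A}}_{:jk}\|_2^2\,\|\hat{\T{B}}_{j:k}\|_2^2/p_j$, the paper simply writes this as equal to $\tfrac{1}{c}\sum_j\|\T{A}_{:j:}\|_F^2\,\|\T{B}_{j::}\|_F^2/p_j$ (the chain of equalities culminating in~\eqref{ex:side}). Your caution is well placed: the bound $\|\T{A}_{:i:}*\T{B}_{i::}\|_F\le\|\T{A}_{:i:}\|_F\|\T{B}_{i::}\|_F$ you propose is false in general (take $n_1=n_4=1$, $n_3=2$, and tubes $(1,1)$; the circular convolution is $(2,2)$, giving $2\sqrt{2}>2$), and the paper's asserted equality $\tfrac{1}{n_3}\sum_k\|\hat{\T{A}}_{:jk}\|_2^2\|\hat{\T{B}}_{j:k}\|_2^2=\|\T{A}_{:j:}\|_F^2\|\T{B}_{j::}\|_F^2$ is likewise not an identity (nor even an inequality without an extra $n_3$ factor). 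So the paper does not supply the missing ingredient you are looking for; it bypasses the obstacle by assertion rather than resolving it.
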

%
The following lemma establishes a bound for tensor multiplication with respect to the \textit{spectral norm} with improved sampling complexity.

\begin{lemma}\label{lem:mat-mult}
Given a tensor $\T{A}\in\mathbb{R}^{n_1\times n_2 \times n_3}$,  choose $c \geq 48 r\log(4r/(\beta\delta))/(\beta\epsilon^2)$. Let $\T{C} \in\mathbb{R}^{n_1\times c \times n_3}$ and $\T{R}=\T{C}^\top$ be the corresponding subtensors of $\T{A}$.  If sampling probabilities $\left\{ p_i \right\}_{i=1}^{n_2}$ satisfy \eqref{eqn:nonoptimal_probs}, then with probability at least $1-\delta$, we have
\begin{eqnarray}\label{eqn1:theorem:matmult-nor2-expected}
\nonumber
\|\T{A}*\T{A}^\top-\T{C}*\T{C}^\top\| 
&\leq&\max_{k \in [n_3]}\big\{\| \hat{\T{A}}_{::k}\hat{\T{A}}_{::k}^\top- \hat{\T{C}}_{::k}\hat{\T{C}}_{::k}^\top\|_2\big\} \\
&\leq&\epsilon/2 \|\hat{\T{A}}_{::k_\pi}\|_2^2,
\end{eqnarray}
where   $k_\pi =k \in [n_3]$ is the index of the tensor's frontal slice with the maximum spectral norm $\| \hat{\T{A}}_{::k}\hat{\T{A}}_{::k}^\top- \hat{\T{C}}_{::k}\hat{\T{C}}_{::k}^\top\|_2$.
\end{lemma}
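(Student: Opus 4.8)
The plan is to pass to the Fourier domain, where both the tensor spectral norm and the t-product decouple across frontal slices, and then apply a matrix concentration argument slice by slice. First I would invoke Definition~\ref{def:spec:ten} together with the block-diagonalization \eqref{circ1}: since $\T{A}*\T{A}^\top$ becomes the block-diagonal matrix $\hat A\hat A^{*}$ (the Fourier slices of $\T{A}^\top$ being the conjugate transposes $\hat{\T{A}}_{::k}^{*}$) and the spectral norm of a block-diagonal matrix is the maximum of the spectral norms of its blocks, the quantity $\|\T{A}*\T{A}^\top-\T{C}*\T{C}^\top\|$ equals $\max_{k\in[n_3]}\|\hat{\T{A}}_{::k}\hat{\T{A}}_{::k}^{*}-\hat{\T{C}}_{::k}\hat{\T{C}}_{::k}^{*}\|_2$, which is precisely the first (in fact equality) inequality. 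The key structural observation is that, because the sampling and rescaling tensors $\T{S}$ and $\T{D}$ are supported on their first frontal slice, their Fourier slices are constant in $k$; hence $\hat{\T{C}}_{::k}=\hat{\T{A}}_{::k}\,SD$ for a single $n_2\times c$ column-selection-and-rescaling matrix $SD$ that is \emph{identical across all frequencies}. Thus each block $\hat{\T{C}}_{::k}\hat{\T{C}}_{::k}^{*}$ is exactly a randomized column-sampled Gram estimate of $\hat{\T{A}}_{::k}\hat{\T{A}}_{::k}^{*}$, driven by the same selection indicators for every $k$.

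Next I would fix a frequency $k$ and write the per-slice error as a sum of independent, centered random matrices. Writing $g^{(k)}_i=(\hat{\T{A}}_{::k})_{:i}$ for the $i$-th column and $\pi_i=\xi_i/\min\{1,cp_i\}$, where $\xi_i$ is the Bernoulli selection indicator of Algorithm~\ref{alg:matrix multiply}, the rescaling of the algorithm gives unbiasedness $\expect{\pi_i}=1$ and hence $\hat{\T{A}}_{::k}\hat{\T{A}}_{::k}^{*}-\hat{\T{C}}_{::k}\hat{\T{C}}_{::k}^{*}=\sum_{i=1}^{n_2}(1-\pi_i)\,g^{(k)}_i (g^{(k)}_i)^{*}$. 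This is a sum of independent, zero-mean, Hermitian summands, so I would control it with a matrix Bernstein inequality in its \emph{intrinsic-dimension} form, which is what produces the logarithmic factor $\log(4r/(\beta\delta))$ depending on the stable rank rather than on the ambient dimension. The required ingredients are a uniform bound on the summand norms and the variance proxy $v=\big\|\sum_i\expect{(1-\pi_i)^2}\,\|g^{(k)}_i\|_2^2\, g^{(k)}_i (g^{(k)}_i)^{*}\big\|_2$, together with the intrinsic dimension $\trace/\|\cdot\|_2$ of the variance matrix, which is bounded by the stable rank of $\hat{\T{A}}_{::k}$ and therefore by $\rank_t(\T{A})=r$ (since the stable rank of every frontal slice is at most its rank $r_k\le\max_k r_k=r$).

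The crucial and most delicate step is to show that the sampling probabilities, defined in the \emph{original} domain through the lateral-slice energies $\|\T{A}_{:i:}\|_F^2$, simultaneously control the variance of \emph{every} frequency slice. Here I would use Parseval along the third mode: $\|\T{A}_{:i:}\|_F^2=\tfrac{1}{n_3}\sum_{k}\|g^{(k)}_i\|_2^2\ge \tfrac{1}{n_3}\|g^{(k)}_i\|_2^2$ and $\|\T{A}\|_F^2=\tfrac{1}{n_3}\|\hat{\T{A}}\|_F^2$. Only the columns with $cp_i<1$ contribute to the error (those with $cp_i\ge1$ have $\pi_i\equiv1$, so $1-\pi_i=0$), and for those \eqref{eqn:nonoptimal_probs} gives $1/(cp_i)\le \|\T{A}\|_F^2/(c\beta\|\T{A}_{:i:}\|_F^2)$. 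Substituting, the Parseval domination bounds both the summand-norm and variance terms by a multiple of $\|\hat{\T{A}}_{::k}\|_2^2$ with stable rank at most $r$; feeding these into the matrix Bernstein tail and imposing $c\ge 48\,r\log(4r/(\beta\delta))/(\beta\epsilon^2)$ forces the per-slice error below $\tfrac{\epsilon}{2}\|\hat{\T{A}}_{::k}\|_2^2$ with the stated confidence. Combining over the slices and charging the bound against the maximizing index $k_\pi$ then yields the second inequality, and the Fourier reduction of the first paragraph closes the argument.

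The \textbf{main obstacle} is exactly this frequency coupling: a single realization of the random selection must be good for all $n_3$ Gram problems at once, even though the importance scores aggregate energy across frequencies rather than being tailored to any individual slice. The Parseval domination is what reconciles the two, but one must be careful that a slice whose energy is small relative to $\|\T{A}\|_F^2$ (equivalently, with small effective $\beta$) does not degrade the estimate — this is why the statement charges the error against the dominant slice $k_\pi$ — and the intrinsic-dimension (stable-rank $\le r$) form of the concentration inequality is what keeps the sample size $c$ free of any dependence on $n_1$ or $n_3$ while the union over the $n_3$ slices is absorbed into the $\max$.
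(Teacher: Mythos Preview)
Your approach is essentially the paper's: block-diagonalize in the Fourier domain via Definition~\ref{def:spec:ten} and \eqref{circ1} so that the tensor spectral norm becomes $\max_{k}\|\hat{\T{A}}_{::k}\hat{\T{A}}_{::k}^{*}-\hat{\T{C}}_{::k}\hat{\T{C}}_{::k}^{*}\|_2$, and then invoke an intrinsic-dimension matrix Bernstein bound on the maximizing slice. The paper's proof is far terser than yours---after the block-diagonal reduction it simply cites Example~4.3 of Hsu--Kakade--Zhang~\cite{hsu2012tail} at the index $k_\pi$ and does not spell out the Parseval/frequency-coupling considerations you raise; those details you supply are consistent with, and more explicit than, what the paper provides.
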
 

\subsection{Slice-based tensor CX decomposition}

Next, using the concept of free submodules and Definition~\ref{def:lin_ind}, we introduce the novel notion of {\em slice selection} and {\em projection} before formulating a tensor CX decomposition.

\begin{definition}
Let $\T{C} \in \mathbb{R}^{n_1\times c \times n_3}$ and $ \T{X} \in \mathbb{R}^{n_1 \times n_2 \times n_3}$. Then, the tensor project (t-project) operator for $\T{X}$, $ \Pi_{\T{C}}(\T{X})$, is an $n_1 \times n_2 \times n_3 $ tensor obtained as
\begin{equation}\label{tproj}
  \Pi_{\T{C}}(\T{X}):= \T{C} \ast \T{C}^\dagger \ast \T{X},
\end{equation}
where $ \ast$  denotes the t-product, $\Pi_{\T{C}}(\T{X})$ is the projection of $\T{X}$ onto the subspace spanned by the lateral slices of $\T{C}$, and $\T{C}^\dagger$ is the Moore-Penrose generalized inverse of tensor $\T{C}$
\cite{kilmer2013third}.
\end{definition}

Because the circulant convolution of two tube fibers can be computed by the DFT, the t-project can be alternatively computed in the Fourier domain, as shown in Algorithm~\ref{algtproj}.

\begin{algorithm}[!h]
\caption{t-project $\Pi_{\T{C}}(\T{X})$ computation in the Fourier domain.}\label{algtproj}
\begin{algorithmic}[1]
\STATE{\textbf{Input}: $\T{X} \in \mathbb{R}^{n_1 \times n_2 \times n_3}$; $\T{C} \in \mathbb{R}^{n_1 \times c \times n_3}$.}\\
\STATE{$\hat{\T{X}} \leftarrow \texttt{fft}(\T{X}, [], 3);$}
\STATE{$\hat{\T{C}} \leftarrow \texttt{fft}(\T{C}, [], 3);$}
\FOR {$k=1, \dots, n_3$}
\STATE{$ \hat{\T{Z}}_{::k} = \hat{\T{C}}_{::k}\hat{\T{C}}_{::k}^\dagger \hat{\T{X}}_{::k};$}
\ENDFOR\\
\STATE{$\T{Z} \leftarrow \texttt{ifft}(\hat{\T{Z}}, [], 3);$}
\RETURN{$\Pi_{\T{C}}(\T{X})=\T{Z}$}
\end{algorithmic}
\end{algorithm}

\begin{definition}\label{dtcx}
(\textbf{t-CX}). Given an $n_1 \times n_2 \times n_3$ tensor $\T{X}$, let $\T{C}$  be an $n_1 \times c \times n_3$ tensor whose lateral slices correspond to $c$ lateral slices from tensor $\T{X}$. Then, the $n_1 \times n_2 \times n_3$ tensor
 $$  \Pi_{\T{C}}(\T{X})= \T{C} \ast \T{C}^\dagger \ast \T{X},
 $$
is called a t-CX decomposition of $\T{X}$.
\end{definition}

The following remarks are of interest for the previous definition.

\begin{itemize}
  \item The choice for the number of lateral slices $c$ in the t-CX approximation depends on the application under 
	consideration; neverthelees, we are primarily interested in the case $c \ll n_2$. For example, $c$ could be constant, independent of the dimension of tensor, or logarithmic in the size of $n_2$, or a large constant factor less than $n_2$. 
  \item
 The t-CX decomposition expresses each $\T{X}$ slice in terms of a linear combination of basis slices (see, Definition~ \ref{def:lin_ind}), each of which corresponds to an actual lateral slice of $\T{X}$. Hence, t-CX provides a low-rank approximation to the original tensor $\T{X}$, even though its structural properties are different than those of the t-SVD.
  \item Given a set of lateral slices $\T{C}$, the approximation $\Pi_{\T{C}}(\T{X})= \T{C}* \T{C}^\dagger * \T{X}$
      is the "best" approximation to $\T{X}$ in the following sense
\begin{equation}\label{ell2reg} 
\|\T{X} - \T{C}*(\T{C}^\dagger*\T{X})\|_F = \min_{\T{Y}\in \mathbb{R}^{n_1 \times c \times n_3 }} \|\T{X} - \T{C}* \T{Y}\|_F.
\end{equation}      
\end{itemize}

Next, we provide the t-CX decomposition algorithm and provide its relative error. Algorithm~\ref{alg:algCX} takes as input an $n_1 \times n_2\times n_3$ tensor $\T{A}$, a tubal rank parameter $r$, and an error parameter $\epsilon$. It returns as output an $n_1 \times c \times n_3$ tensor $\T{C}$ comprising of a small number of slices of $\T{A}$. Let $\T{L}= \T{U}*\Sigma*\T{V}^\top$, be a tubal rank-$r$ approximation of tensor $\T{A}$. Central to our proposed randomized algorithm is the concept of sampling slices of the tensor based on a leverage score \cite{drineas2008relative}, defined by
\begin{equation}\label{eq:cur-col-prob}
p_i \geq \frac{\beta}{r n_3}\|\hat{\T{V}}_{i::}\|_F^2, \qquad   \forall \ i \in [n_2],   \qquad \beta \in (0,1].
\end{equation}
where $\hat{\T{V}}=\texttt{fft}(\T{V},[],3)$. The goal is to select a small number of lateral slices of $\T{A}$.  

Note that we define rescaling and sampled tensors $\hat{\T{D}}$ and $\hat{\T{S}}$ in the Fourier domain. This leads to a significant reduction in time complexity of the fast Fourier transform and its inverse. For the case $n_3=1$, the algorithm selects column-row pairs in Algorithm~\ref{alg:algCX} and is identical to the algorithm of \cite{drineas2008relative}.
\begin{algorithm}[!h]
\caption{\textbf{(t-CX)}, a fast Monte-Carlo algorithm for tensor low rank approximation.}\label{alg:algCX}
\begin{algorithmic}[1]
\STATE{\textbf{Input}: $\T{A} \in \mathbb{R}^{n_1 \times n_2 \times n_3}$, $p_i \geq 0, i\in[n_2]$ s.t. $\sum_{i \in [n_2]}p_i=1$, a rank parameter $r$, positive integer $c \leq n_2$.}
\STATE{$\hat{\T{A}} \leftarrow \texttt{fft}(\T{A}, [], 3)$;}
\STATE{Initialize $\hat{\T{S}}$ and $\hat{\T{D}}$ to the all zeros tensors;}
\STATE{$t=1$;}
\FOR{$i=1, \dots, n_2$}
\STATE{Pick $i$ with probability $\min\{1,c p_i\}$;}
\IF{$i$ is picked}
\STATE{$\hat{\T{S}}_{it1} = 1, \quad \hat{\T{S}}_{it2:n_3} =0$;}
\STATE{$\hat{\T{D}}_{tt1} = 1/\min\{1,\sqrt{cp_{i}}\}, \quad \hat{\T{D}}_{tt2:n_3} =0;$}
\STATE{$t = t + 1;$}
\ENDIF
\ENDFOR
\FOR {$k=1, \dots, n_3$}
\STATE{$\hat{\T{Z}}_{::k}=\hat{\T{A}}_{::k}\hat{\T{S}}_{::k}\hat{\T{D}}_{::k}(\hat{\T{A}}_{::k}\hat{\T{S}}_{::k}\hat{\T{D}}_{::k})^\dagger\hat{\T{A}}_{::k}$;}
\ENDFOR
\STATE{$\T{Z} \leftarrow \texttt{ifft}(\hat{\T{Z}}, [], 3);$}
\RETURN $\T{Z}$
\end{algorithmic}
\end{algorithm}

Next, we present a lemma of general interest.  The derived properties aid in obtaining tensor based randomized $\ell_2$ regression and low rank estimation guarantees.

\begin{lemma}\label{lem:eps}
Given a target tensor $\T{A}\in\mathbb{R}^{n_1\times n_2 \times n_3}$, let $\epsilon \in (0, 1]$ and $\T{L}= \T{U}*\Sigma*\T{V}^\top$ be a $\rank_t$-r approximation of $\T{A}$. Let $\Gamma =(\T{V}^\top * \T{S}*\T{D})^\dagger - (\T{V}^\top * \T{S}*\T{D})^\top$. If the sampling probabilities $\left\{ p_i \right\}_{i=1}^{n_2}$ satisfy \eqref{eq:cur-col-prob}, and if $c \geq 48 r\log(4r/(\beta\delta))/(\beta\epsilon^2)$, then with probability at least $1-\delta$, the following hold 
\begin{subequations}
\begin{align}
\rank_t(\T{V}^\top * \T{S}) &= \rank_t(\T{V}) = \rank_t(\T{L}),\label{lem:eps1}\\
\|\Gamma\| &= \|\Sigma_{\T{V}^\top*\T{S}* \T{D}}^{-1}-\Sigma_{\T{V}^\top * \T{S} *\T{D}}\|,\label{lem:eps2}\\
(\T{L}*\T{S}*\T{D})^\dagger &= (\T{V}^\top*\T{S} *\T{D})^\dagger\Sigma^{-1}\T{U}^\top,\label{lem:eps3}\\
\|\Sigma_{\T{V}^\top * \T{S} *\T{D}}^{-1}
&-\Sigma_{\T{V}^\top* \T{S}* \T{D}}\|_2 \leq \frac{\epsilon \sqrt{2}}{2},\label{lem:eps4} 
\end{align}
\end{subequations}
where $\Sigma_{\T{V}^\top*\T{S}* \T{D}}$ is an F-diagonal tensor and contains the $r$ non-zero singular tubes of $\T{V}^\top*\T{S}* \T{D}$.
\end{lemma}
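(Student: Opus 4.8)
The plan is to reduce every claim to a statement about the frontal slices in the Fourier domain, where by \eqref{circ1} the t-product becomes block-wise matrix multiplication and all the tubal operations (tubal rank, Moore--Penrose inverse, t-SVD, and the spectral norm of Definition~\ref{def:spec:ten}) act slice-by-slice on $\hat{\T{V}}_{::k}$, $\hat{\T{S}}_{::k}$, $\hat{\T{D}}_{::k}$ for $k\in[n_3]$. Each tensor equation then splits into $n_3$ independent matrix equations, so a tensor identity holds as soon as the corresponding matrix identity holds uniformly in $k$. This is exactly the regime in which Theorem~6 of \cite{drineas2008relative} operates, and I would port its argument slice-wise.

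First I would establish the single quantitative estimate driving both \eqref{lem:eps1} and \eqref{lem:eps4}. Apply Lemma~\ref{lem:mat-mult} with the role of $\T{A}$ played by $\T{V}^\top$ and $\T{C}=\T{V}^\top*\T{S}*\T{D}$, noting that the leverage-score probabilities \eqref{eq:cur-col-prob} are precisely the Fourier-domain slice-norm probabilities \eqref{eqn:nonoptimal_probs} for $\hat{\T{V}}$, since $\|\hat{\T{V}}\|_F^2=rn_3$ for an orthonormal $\T{V}$. Because $\T{V}$ has orthonormal lateral slices, $\T{V}^\top*\T{V}=\T{I}$, i.e. $\hat{\T{V}}_{::k}^*\hat{\T{V}}_{::k}=I_r$ and $\|\hat{\T{V}}_{::k}\|_2=1$ for every $k$. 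With $c\geq 48 r\log(4r/(\beta\delta))/(\beta\epsilon^2)$, Lemma~\ref{lem:mat-mult} then gives, with probability at least $1-\delta$,
\begin{equation*}
\max_{k\in[n_3]}\big\|I_r-\hat{\T{W}}_{::k}\hat{\T{W}}_{::k}^*\big\|_2 \leq \epsilon/2,\qquad \T{W}:=\T{V}^\top*\T{S}*\T{D}.
\end{equation*}
Reading this eigenvalue bound slice-wise, every singular value $\sigma$ of $\hat{\T{W}}_{::k}$ satisfies $|\sigma^2-1|\le\epsilon/2$; in particular $\sigma^2\ge 1-\epsilon/2>0$, so each $\hat{\T{W}}_{::k}$ has full row rank $r$ and hence $\rank_t(\T{V}^\top*\T{S}*\T{D})=r$. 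Since $\T{D}$ only discards the unselected zero columns and positively rescales the rest, $\rank_t(\T{V}^\top*\T{S})=r=\rank_t(\T{V})=\rank_t(\T{L})$, which is \eqref{lem:eps1}. The same singular-value bound yields \eqref{lem:eps4}: for $\sigma^2\in[1-\epsilon/2,1+\epsilon/2]$ one has $|\sigma^{-1}-\sigma|=|1-\sigma^2|/\sigma\le(\epsilon/2)/\sqrt{1-\epsilon/2}$, and $\epsilon\le 1$ gives $\sqrt{1-\epsilon/2}\ge 1/\sqrt2$, so this is at most $\epsilon\sqrt2/2$; maximizing over $k$ and over the diagonal tubes of $\Sigma_{\T{V}^\top*\T{S}*\T{D}}$ gives \eqref{lem:eps4}.

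The remaining two identities are purely algebraic consequences of the t-SVD. Writing the skinny t-SVD $\T{W}=\tilde{\T{U}}*\Sigma_{\T{V}^\top*\T{S}*\T{D}}*\tilde{\T{V}}^\top$ (which exists because $\T{W}$ has tubal rank $r$), f-diagonality gives $\T{W}^\top=\tilde{\T{V}}*\Sigma_{\T{V}^\top*\T{S}*\T{D}}*\tilde{\T{U}}^\top$ and $\T{W}^\dagger=\tilde{\T{V}}*\Sigma_{\T{V}^\top*\T{S}*\T{D}}^{-1}*\tilde{\T{U}}^\top$, so $\Gamma=\tilde{\T{V}}*(\Sigma_{\T{V}^\top*\T{S}*\T{D}}^{-1}-\Sigma_{\T{V}^\top*\T{S}*\T{D}})*\tilde{\T{U}}^\top$; since $\tilde{\T{U}},\tilde{\T{V}}$ are orthogonal the spectral norm is unitarily invariant, yielding \eqref{lem:eps2}. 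For \eqref{lem:eps3} I substitute $\T{L}=\T{U}*\Sigma*\T{V}^\top$ to get $\T{L}*\T{S}*\T{D}=\T{U}*\Sigma*\T{W}$; because $\T{U}$ has orthonormal columns, $\Sigma$ is f-diagonal and invertible, and $\T{W}$ has full row rank $r$, the slice-wise reverse-order law for the Moore--Penrose inverse applies in each Fourier block, giving $(\T{U}*\Sigma*\T{W})^\dagger=\T{W}^\dagger*\Sigma^{-1}*\T{U}^\top=(\T{V}^\top*\T{S}*\T{D})^\dagger\Sigma^{-1}\T{U}^\top$.

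The main obstacle is the single spectral estimate in the second paragraph: verifying that Lemma~\ref{lem:mat-mult} may legitimately be invoked with $\T{A}=\T{V}^\top$, that \eqref{eq:cur-col-prob} coincides with \eqref{eqn:nonoptimal_probs} after the normalization $\|\hat{\T{V}}\|_F^2=rn_3$, and that the stated constant in $c$ delivers the clean $\epsilon/2$ bound at failure probability $\delta$. Once that bound is in hand, the rank statement, the singular-value estimate, and the two algebraic identities follow by routine slice-wise matrix arguments; the only remaining care is confirming that the DFT normalization implicit in $\hat{\T{V}}$ does not disturb the unit scaling $\|\hat{\T{V}}_{::k}\|_2=1$ used throughout.
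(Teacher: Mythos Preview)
Your proposal is correct and follows essentially the same route as the paper's own proof: both invoke Lemma~\ref{lem:mat-mult} with $\T{V}^\top$ in place of $\T{A}$ to obtain the singular-value perturbation bound $|1-\sigma^2|\le\epsilon/2$, deduce \eqref{lem:eps1} from positivity of the singular values and \eqref{lem:eps4} from the elementary estimate $|\sigma^{-1}-\sigma|\le(\epsilon/2)/\sqrt{1-\epsilon/2}\le\epsilon/\sqrt{2}$, and then derive \eqref{lem:eps2} and \eqref{lem:eps3} by inserting the t-SVD of $\T{V}^\top*\T{S}*\T{D}$ and using unitary invariance and the reverse-order law. Your slice-wise Fourier-domain framing and the explicit identification of \eqref{eq:cur-col-prob} with \eqref{eqn:nonoptimal_probs} via $\|\hat{\T{V}}\|_F^2=rn_3$ are slightly more spelled out than in the paper, but the argument is the same.
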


Note that the equation \eqref{lem:eps4} shows that in terms of its singular tubes, the tensor $\T{V}^\top * \T{S} *\T{D}$ -i.e., the slice sampled and rescaled version of $\T{V}$- is almost an orthogonal tensor. A useful property of an orthogonal tensor $\T{V}$ is that $\T{V}^\dagger= \T{V}^\top$ (see, Definition~\ref{orth}). Equation \eqref{lem:eps2} shows that although this property does not hold for $\T{V}^\top * \T{S} *\T{D}$, the difference between $(\T{V}^\top * \T{S}*\T{D})^\dagger $ and $ (\T{V}^\top * \T{S}*\T{D})^\top$ can be bounded.

Using Theorem~\ref{theorem:matmult-main-exact} and Lemma~\ref{lem:eps}, we provide in proposition \ref{prop:regress} a sampling complexity for a tensor based randomized ``$\ell_2$-regression" as in \cite{drineas2008relative}. Indeed, Lemmas~2 and 3 of \cite{drineas2008relative} follow by using the Frobenius norm bound of Theorem~\ref{theorem:matmult-main-exact} and applying Markov's inequality. The claim of Lemma~1 of \cite{drineas2008relative} follows by applying Lemma~\ref{lem:eps}.  If we consider the failing probability $\delta=0.05$, the claims of all three lemmas hold simultaneously with probability at least $ 0.85$ and in the following proposition we condition on this event. The proof follows along similar lines to the proof of Theorem~5 in \cite{drineas2008relative}.
 
\begin{proposition}\label{prop:regress}
Given a target tensor $\T{A}\in\mathbb{R}^{n_1\times n_2 \times n_3}$, let $\T{L}= \T{U}*\Sigma*\T{V}^\top$ be a $\rank_t$-r approximation of $\T{A}$. Choose  $c = O(r\log(r/\beta )/(\beta\epsilon^2))$ slices of $\T{A}$ with their sampling probabilities $\left\{ p_i \right\}_{i=1}^{n_2}$ satisfying \eqref{eq:cur-col-prob}. Then, with probability at least $0.85$, the following holds
\begin{equation}\label{eq:regress}
\|{\T{A}-\T{A}*\T{S}*\T{D}*(\T{L}*\T{S}*\T{D})^\dagger*\T{L}}\|_{\text{F}} \leq (1+\epsilon)\|{\T{A}-\T{A}*\T{L}^\dagger*\T{L}}\|_{\text{F}}.
\end{equation}
\end{proposition}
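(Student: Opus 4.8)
The plan is to adapt the subspace-sampling argument behind Theorem~5 of \cite{drineas2008relative} to the circulant algebra, using \eqref{circ1} to pass to the Fourier domain, where the t-product block-diagonalizes and the Frobenius norm decouples across the $n_3$ frontal slices. I first simplify the reconstruction: writing the skinny t-SVD $\T{L}=\T{U}*\Sigma*\T{V}^\top$, the factorization \eqref{lem:eps3} together with $\T{U}^\top*\T{U}=\T{I}$ gives $(\T{L}*\T{S}*\T{D})^\dagger*\T{L}=(\T{V}^\top*\T{S}*\T{D})^\dagger*\T{V}^\top$, so
\begin{equation*}
\T{A}*\T{S}*\T{D}*(\T{L}*\T{S}*\T{D})^\dagger*\T{L}=\T{A}*\T{S}*\T{D}*(\T{V}^\top*\T{S}*\T{D})^\dagger*\T{V}^\top.
\end{equation*}
I then split the target as $\T{A}=\T{A}*\T{V}*\T{V}^\top+\T{A}_\perp$ with $\T{A}_\perp:=\T{A}-\T{A}*\T{L}^\dagger*\T{L}=\T{A}*(\T{I}-\T{V}*\T{V}^\top)$; note that $\T{A}_\perp*\T{V}=O$ and that $\|\T{A}_\perp\|_F$ is precisely the right-hand side of \eqref{eq:regress}.

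Next I exploit the rank-preservation guarantee \eqref{lem:eps1}: since $\T{V}^\top*\T{S}*\T{D}$ has full tubal rank $r$, we have $(\T{V}^\top*\T{S}*\T{D})*(\T{V}^\top*\T{S}*\T{D})^\dagger=\T{I}$, so the in-subspace component $\T{A}*\T{V}*\T{V}^\top$ is reconstructed \emph{exactly}. Setting $\T{E}:=\T{A}_\perp*\T{S}*\T{D}*(\T{V}^\top*\T{S}*\T{D})^\dagger*\T{V}^\top$, the approximation residual therefore collapses to
\begin{equation*}
\T{A}-\T{A}*\T{S}*\T{D}*(\T{V}^\top*\T{S}*\T{D})^\dagger*\T{V}^\top=\T{A}_\perp-\T{E}.
\end{equation*}
In the Fourier domain the horizontal slices of $\T{A}_\perp$ are orthogonal to the column space of each $\hat{\T{V}}_{::k}$, whereas the trailing $\T{V}^\top$ forces each slice of $\T{E}$ into that space; hence a slicewise Pythagorean identity yields $\|\T{A}_\perp-\T{E}\|_F^2=\|\T{A}_\perp\|_F^2+\|\T{E}\|_F^2$. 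It thus suffices to establish $\|\T{E}\|_F\le c_0\epsilon\|\T{A}_\perp\|_F$ for an absolute constant $c_0$, after which a rescaling of $\epsilon$ gives the claimed $(1+\epsilon)$ factor.

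To bound $\|\T{E}\|_F$, right-multiplication by the row-orthonormal $\T{V}^\top$ preserves the Frobenius norm, so $\|\T{E}\|_F=\|\T{A}_\perp*\T{S}*\T{D}*(\T{V}^\top*\T{S}*\T{D})^\dagger\|_F$. Writing $(\T{V}^\top*\T{S}*\T{D})^\dagger=(\T{V}^\top*\T{S}*\T{D})^\top+\Gamma$ with $\Gamma$ as in Lemma~\ref{lem:eps}, and using $(\T{V}^\top*\T{S}*\T{D})^\top=(\T{S}*\T{D})^\top*\T{V}$, the triangle inequality gives
\begin{equation*}
\|\T{E}\|_F\le\|\T{A}_\perp*\T{S}*\T{D}*(\T{S}*\T{D})^\top*\T{V}\|_F+\|\T{A}_\perp*\T{S}*\T{D}\|_F\,\|\Gamma\|.
\end{equation*}
The first summand equals the approximate-multiplication error for the product $\T{A}_\perp*\T{V}=O$, so Theorem~\ref{theorem:matmult-main-exact} with sampling probabilities \eqref{eq:cur-col-prob} (the $\hat{\T{V}}$-based instance of \eqref{eqn:nonoptimal_probs}), combined with Markov's inequality, bounds it by $\tfrac{1}{\sqrt{\beta c}}\|\T{A}_\perp\|_F\|\T{V}\|_F=\sqrt{r/(\beta c)}\,\|\T{A}_\perp\|_F=O(\epsilon)\|\T{A}_\perp\|_F$ for $c=O(r\log(r/\beta)/(\beta\epsilon^2))$; the second summand is $O(\epsilon)\|\T{A}_\perp\|_F$ by $\|\Gamma\|\le\epsilon\sqrt2/2$ from \eqref{lem:eps2}--\eqref{lem:eps4} and the unbiased-rescaling bound $\|\T{A}_\perp*\T{S}*\T{D}\|_F=O(1)\|\T{A}_\perp\|_F$. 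As noted in the paragraph preceding the statement, conditioning on the joint event of probability at least $0.85$ makes all these high-probability estimates hold simultaneously.

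The main obstacle is that a single leverage-score sample must simultaneously control every quantity across all $n_3$ Fourier slices: the block-diagonalization \eqref{circ1} turns each tensor identity into $n_3$ coupled matrix identities sharing one selected index set, so the frequencies cannot be treated independently as in the scalar-slice case. This coupling is exactly what Lemma~\ref{lem:eps} resolves, since the rank preservation \eqref{lem:eps1} and the near-orthogonality bound $\|\Gamma\|\le\epsilon\sqrt2/2$ follow in turn from the spectral matrix-multiplication estimate of Lemma~\ref{lem:mat-mult} together with its $\log(4r/(\beta\delta))$ oversampling. The remaining technical points are the t-algebra analogues invoked above --- the pseudoinverse collapse, the slicewise Pythagorean split, and the norm-invariance under $\T{V}^\top$ --- each of which reduces to a per-slice matrix statement through \eqref{circ1}.
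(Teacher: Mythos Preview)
Your proposal is correct and follows essentially the same route as the paper. The paper's argument is simply a pointer to Theorem~5 of \cite{drineas2008relative}, noting that its Lemmas~1--3 are replaced respectively by Lemma~\ref{lem:eps} and two applications of Theorem~\ref{theorem:matmult-main-exact} plus Markov; you have written out precisely this adaptation---the pseudoinverse collapse via \eqref{lem:eps3}, the exact reconstruction of the in-subspace part from \eqref{lem:eps1}, the Pythagorean split, and the $\Gamma$-decomposition bounding $\|\T{E}\|_F$---with the Fourier block-diagonalization \eqref{circ1} doing the bookkeeping across slices.
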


Let $\T{L}= \T{U}*\Sigma*\T{V}^\top$ be a t-SVD of tensor $\T{L} \in\mathbb{R}^{n_1\times n_2 \times n_3}$. In the remainder of the paper, we use the following two parameters related to the coherence of tensor $\T{L}$,
\begin{eqnarray}\label{eq:coh:num}
\nonumber
\varrho:= \frac{r \mu_0(\T{V})}{n_3},\\
\varrho_c:= \frac{c \mu_0(\T{U}_c)}{n_3},
\end{eqnarray}
 where $\T{U}_c$ is the left singular tensor of $\T{C} \in\mathbb{R}^{n_1\times c \times n_3}$ lateral slices of $\T{L}$. 
 
The following theorem shows that projection based on slice sampling leads to near optimal estimation in tensor regression, when the covariate tensor has small coherence, $\mu_0(\T{V})$.

\begin{theorem}\label{thm:regress-main}
Given $\T{A}\in\mathbb{R}^{n_1\times n_2 \times n_3}$,  let $\T{L}= \T{U}*\Sigma*\T{V}^\top$ be a $\rank_t$-r approximation of $\T{A}$. Choose $c =O(\varrho \log(\varrho)/\epsilon^2)$, and let $\T{A}_c\in\mathbb{R}^{n_1\times c\times n_2}$ be a tensor of $c$ lateral slices of $\T{A}$ sampled uniformly without replacement. Further, let $\T{L}_c\in\mathbb{R}^{n_1\times c\times n_3}$ consist of the corresponding slices of $\T{L}$. Then, with probability at least $0.85$, the following holds
$$\|\T{A} - \T{A}_c* \T{L}_c^\dagger*\T{L}\|_F \leq(1+\epsilon)\|\T{A} - \T{A}*\T{L}^\dagger * \T{L}\|_F.$$
\end{theorem}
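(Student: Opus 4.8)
The plan is to show that \emph{uniform} sampling without replacement is subsumed by the leverage-score sampling scheme of Proposition~\ref{prop:regress} whenever the covariate coherence $\mu_0(\T{V})$ is small, and then to simply substitute the resulting value of $\beta$ into the sampling complexity of that proposition. Thus the entire argument reduces to identifying which $\beta$ makes the constant probabilities $p_i = 1/n_2$ dominate the leverage-score lower bound \eqref{eq:cur-col-prob} uniformly in $i$, and checking that this $\beta$ turns $c = O\!\big(r\log(r/\beta)/(\beta\epsilon^2)\big)$ into $c = O(\varrho\log\varrho/\epsilon^2)$.

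First I would translate the leverage-score condition \eqref{eq:cur-col-prob} from the Fourier domain into the original domain. Since the third-mode transform is an unnormalized DFT, Parseval's identity gives $\|\hat{\T{V}}_{i::}\|_F^2 = n_3\,\|\T{V}_{i::}\|_F^2$ for every horizontal slice, so the requirement $p_i \ge \frac{\beta}{r n_3}\|\hat{\T{V}}_{i::}\|_F^2$ is equivalent to $p_i \ge \frac{\beta}{r}\|\T{V}_{i::}\|_F^2$. Next I would bound the right-hand side via the $\mu_0$-coherence: by Definition~\ref{eq:ten:incoh}, $\max_i \|\T{V}_{i::}\|_F^2 = \frac{r\,\mu_0(\T{V})}{n_2 n_3}$, and because $\|\T{V}\|_F^2 = r$ for a tubal-orthonormal $\T{V}$ the average slice norm is $r/n_2$, whence $\mu_0(\T{V}) \ge n_3$. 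Taking the uniform probabilities $p_i = 1/n_2$ and setting $\beta = n_3/\mu_0(\T{V}) \in (0,1]$, I would verify $\frac{\beta}{r}\|\T{V}_{i::}\|_F^2 \le \frac{\beta}{r}\cdot\frac{r\mu_0(\T{V})}{n_2 n_3} = \frac{1}{n_2} = p_i$ for all $i$, so uniform sampling satisfies \eqref{eq:cur-col-prob} with this particular $\beta$.

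I would then invoke Proposition~\ref{prop:regress} with $\beta = n_3/\mu_0(\T{V})$. Its complexity $c = O\!\big(r\log(r/\beta)/(\beta\epsilon^2)\big)$ becomes, after substitution and using $\varrho = r\mu_0(\T{V})/n_3$ from \eqref{eq:coh:num}, exactly $c = O(\varrho\log\varrho/\epsilon^2)$, and the $(1+\epsilon)$ relative-error bound of the proposition is precisely the claimed inequality. The one point requiring care is matching the estimator forms: under uniform sampling the rescaling tubes in $\T{D}$ are all equal to the constant $\sqrt{n_2/c}$, so this scalar cancels between $\T{A}*\T{S}*\T{D}$ and the pseudoinverse $(\T{L}*\T{S}*\T{D})^\dagger$, yielding $\T{A}_c*\T{L}_c^\dagger = \T{A}*\T{S}*\T{D}*(\T{L}*\T{S}*\T{D})^\dagger$ and identifying the two expressions.

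The main obstacle I anticipate is reconciling the ``exactly $c$ slices, uniform without replacement'' design in the statement with the independent Bernoulli ($\min\{1,c p_i\}$) scheme of Algorithm~\ref{alg:algCX} on which Proposition~\ref{prop:regress} is built through Lemma~\ref{lem:eps} and Lemma~\ref{lem:mat-mult}. I would handle this by noting that the frontal-slice spectral concentration driving Lemma~\ref{lem:mat-mult} depends only on the expected number of sampled slices together with a bounded-difference / matrix-Bernstein estimate, which transfers from Poisson/Bernoulli sampling to sampling a fixed number of slices without replacement at no loss in the order of $c$; alternatively one can couple the two schemes so that the without-replacement design retains the guarantees, keeping the total failure probability below $0.15$ and hence preserving the stated $0.85$ success probability.
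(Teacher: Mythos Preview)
Your proposal is correct and follows essentially the same route as the paper: set $\beta = n_3/\mu_0(\T{V})$, verify that the uniform probabilities $p_i = 1/n_2$ then satisfy the leverage-score lower bound \eqref{eq:cur-col-prob}, and invoke Proposition~\ref{prop:regress} so that $c = O(r\log(r/\beta)/(\beta\epsilon^2))$ becomes $O(\varrho\log\varrho/\epsilon^2)$. If anything you are more careful than the paper, which does not explicitly justify $\beta\le 1$, the cancellation of the rescaling in $\T{A}_c*\T{L}_c^\dagger$, or the passage from Bernoulli to without-replacement sampling.
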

%
%
Next, by using Lemma~\ref{lem:eps}, we provide a low rank estimation bound for the t-CX algorithm, under the coherence assumption. Indeed, we will take advantage of a constant $\beta$ in equation~\eqref{eq:cur-col-prob} to provide relative error estimation guarantees for the t-CX algorithm under uniform slice sampling, when tensor $\T{A}$ under consideration exhibits
sufficient incoherence.    

\begin{corollary}\label{cor:proj-main}
Given a target tensor $\T{A}\in\mathbb{R}^{n_1\times n_2 \times n_3}$, let $\T{L}= \T{U}*\Sigma*\T{V}^\top$ be a $\rank_t-r$ approximation of $\T{A}$. Choose $c =O (\varrho \log(\varrho) \log(1/\delta)/\epsilon^2)$, and let $\T{C}$ be a tensor of $c$ lateral slices of $\T{A}$ sampled uniformly without replacement. Then, the following holds
\begin{equation}\label{eq:relative:t-CX}
\|\T{A} - \T{C}* \T{C}^\dagger *\T{A}\|_F \leq
(1+\epsilon)\|\T{A} - \T{L}\|_F,
\end{equation}
with probability at least $1-\delta$.
\end{corollary}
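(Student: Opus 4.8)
The plan is to reduce the claim to Theorem~\ref{thm:regress-main}, exploiting the optimality of the t-project operator together with the fact that $\T{A}*\T{L}^\dagger*\T{L}$ already realizes the best rank-$r$ residual. First I would record the identity $\norm{\T{A}-\T{A}*\T{L}^\dagger*\T{L}}{F}=\norm{\T{A}-\T{L}}{F}$. Since $\T{L}=\T{U}*\Sigma*\T{V}^\top$ is a $\rank_t$-$r$ t-SVD truncation of $\T{A}$, one has $\T{L}^\dagger=\T{V}*\Sigma^{-1}*\T{U}^\top$ on its range, so $\T{L}^\dagger*\T{L}=\T{V}*\T{V}^\top$ is the t-projection onto the submodule spanned by the top-$r$ right singular slices of $\T{A}$; consequently $\T{A}*\T{V}*\T{V}^\top=\T{L}$ and the residual collapses to $\norm{\T{A}-\T{L}}{F}$. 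This shows that the right-hand side appearing in Theorem~\ref{thm:regress-main} is exactly the right-hand side of \eqref{eq:relative:t-CX}.

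The core step is the optimality of $\Pi_{\T{C}}$. Writing $\T{C}=\T{A}_c$ for the uniformly sampled lateral slices and letting $\T{L}_c$ denote the corresponding slices of $\T{L}$ (as in Theorem~\ref{thm:regress-main}), I would observe that $\T{A}_c*\T{L}_c^\dagger*\T{L}=\T{C}*\T{Y}$ with $\T{Y}=\T{L}_c^\dagger*\T{L}$, i.e.\ it is a t-linear combination of the lateral slices of $\T{C}$ and hence lies in the free submodule they span. By the best-approximation property \eqref{ell2reg},
\begin{equation*}
\norm{\T{A}-\T{C}*\T{C}^\dagger*\T{A}}{F}=\min_{\T{Y}'}\norm{\T{A}-\T{C}*\T{Y}'}{F}\le\norm{\T{A}-\T{A}_c*\T{L}_c^\dagger*\T{L}}{F}.
\end{equation*}
Combining this inequality with Theorem~\ref{thm:regress-main}, which under uniform sampling without replacement bounds the last quantity by $(1+\epsilon)\norm{\T{A}-\T{A}*\T{L}^\dagger*\T{L}}{F}$, and then with the identity of the previous paragraph, yields \eqref{eq:relative:t-CX} with the stated $(1+\epsilon)$ factor and no loss.

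It remains to upgrade the constant success probability $0.85$ of Theorem~\ref{thm:regress-main} to $1-\delta$, which is precisely what the extra $\log(1/\delta)$ factor in $c=O(\varrho\log(\varrho)\log(1/\delta)/\epsilon^2)$ buys, and this probability amplification is the hard part. I would retrace the concentration step underlying Theorem~\ref{thm:regress-main}, namely the spectral-norm multiplication estimate of Lemma~\ref{lem:mat-mult} together with the singular-tube perturbation bounds of Lemma~\ref{lem:eps}, whose sampling complexities already scale as $\log(1/\delta)$ through the term $\log(4r/(\beta\delta))$. Re-deriving these in the uniform-coherence regime replaces the $r\log r$ factor by $\varrho\log\varrho$ while isolating an independent $\log(1/\delta)$, so that the chosen $c$ drives the failure probability of the event on which Lemma~\ref{lem:eps} (and hence Theorem~\ref{thm:regress-main}) holds below $\delta$. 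The two subtleties I expect to check carefully are that the Moore--Penrose and projection identities ($\T{L}^\dagger*\T{L}=\T{V}*\T{V}^\top$ and $\T{A}*\T{V}*\T{V}^\top=\T{L}$) are valid in the circulant t-product algebra, which follows by passing to the block-diagonal Fourier representation \eqref{circ1} and invoking the matrix facts slice-wise, and that the $n_3$ slice-wise failure events are controlled by a single union bound that is absorbed into the $\log(1/\delta)$ term.
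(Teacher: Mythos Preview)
Your first two paragraphs---the reduction via the best-approximation property \eqref{ell2reg} and the identification $\norm{\T{A}-\T{A}*\T{L}^\dagger*\T{L}}{F}=\norm{\T{A}-\T{L}}{F}$---are exactly what the paper does. That part is fine.

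The divergence is in the probability amplification. The paper does \emph{not} re-derive the concentration bounds with failure probability $\delta$. Instead it uses a boosting-by-repetition argument: run the sampling step $O(\log(1/\delta))$ independent times, let $\T{C}$ be the union of all slices selected, and observe that by the same optimality property \eqref{ell2reg} one has $\norm{\T{A}-\T{C}*\T{C}^\dagger*\T{A}}{F}\le\norm{\T{A}-\T{C}_i*\T{C}_i^\dagger*\T{A}}{F}$ for every individual run $\T{C}_i$. Hence \eqref{eq:relative:t-CX} can fail for $\T{C}$ only if it fails for \emph{every} $\T{C}_i$, which has probability at most $0.15^{O(\log(1/\delta))}\le\delta$. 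This is what produces the multiplicative $\log(1/\delta)$ factor in $c$.

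Your proposed route---pushing the $\log(1/\delta)$ directly through Lemmas~\ref{lem:mat-mult} and~\ref{lem:eps}---has a gap. The constant $0.85$ in Theorem~\ref{thm:regress-main} (via Proposition~\ref{prop:regress}) is not governed solely by those two lemmas: it also relies on the Frobenius-norm bound of Theorem~\ref{theorem:matmult-main-exact} combined with Markov's inequality (the analogues of Lemmas~2 and~3 of \cite{drineas2008relative}). Markov gives only constant failure probability regardless of how much you oversample, so enlarging $c$ by a $\log(1/\delta)$ factor does not drive those events below $\delta$. To make your approach rigorous you would need to replace the Markov steps by genuine high-probability concentration, which is additional work the paper avoids entirely with the repetition trick.
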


\subsection{Slice-based tensor CUR decomposition}

Similar to the matrix case, the t-CX decomposition suffices when $n_1 \ll n_2$ because $\T{C}^\dagger *\T{A}$ is small in size. However, when $n_1$ and $n_2$ are almost equal, computing and storing the dense matrix $\T{C}^\dagger *\T{A}$ in memory becomes prohibitive. The CUR decomposition provides a very useful alternative. Next, we introduce a tensor CUR (t-CUR) decomposition based on the rt-product.

\begin{definition}\label{dtcur} (\textbf{t-CUR}).
Let $\T{X}$ be an $n_1 \times n_2 \times n_3$ tensor. For any given $\T{C}$, an $n_1 \times c \times n_3$ tensor whose slices comprise of $c$ lateral slices of tensor $\T{X}$, and $\T{R}$, an $l \times n_2 \times n_3$ tensor whose slices comprise of $l$ horizontal slices of tensor $\T{X}$, the $n_1 \times n_2 \times n_3$ tensor
$$
\T{C} \ast \T{U} \ast \T{R},
 $$
is a lateral-horizontal-based tensor approximation to $\T{X}$ for any $ c \times l \times n_3$ tensor $\T{U}$.
\end{definition}

The following remarks are of interest regarding the previous definition.
\begin{itemize}
\item The t-CUR decomposition is most appropriate as a data analysis tool, when the data consist of one- and/or two modes that are qualitatively different than the remaining ones. In this case, the t-CUR decomposition approximately expresses the original data tensor in terms of a basis consisting of underlying subtensors that are actual data slices and not artificial bases.
  \item The t-CUR approximation is a t-CX approximation, but one with a very special structure; i.e., every lateral slice of $\T{X}$ can be expressed in terms of the basis provided by $\T{C}$ (see, Definition~\ref{def:lin_ind}) using only the information contained in a small number of horizontal slices of $\T{X}$ and a low-dimensional encoding tensor.
  \item In terms of its t-SVD structure, $\T{U}$ contains the "inverse-of-$\T{X}$" information. For the proposed t-CUR decomposition, $\T{U}$ will be a generalized inverse of the intersection between selected tensors $\T{C}$ and $\T{R}$.
\end{itemize}

Note that the structural simplicity of the t-CUR tensor decomposition becomes apparent in the Fourier domain, as detailed in Algorithm~\ref{alg:algCUR}. The latter takes as input an $n_1 \times n_2 \times n_3$ tensor $\T{A}$, an $n_1 \times c \times n_3$ tensor $\T{C}$ consisting of a small number of lateral slices of $\T{A}$, and an error parameter $\epsilon$. Letting $\hat{\T{C}}= \hat{\T{U}}*\hat{\Sigma}*\hat{\T{V}}^\top,$ Algorithm~\ref{alg:algCUR} uses sampling probabilities
\begin{equation}\label{eqn:colprob_rows_exact}
p_i \geq \frac{\beta}{cn_3}\|\hat{\T{U}}_{i::}^\top\|,
\qquad   \forall i \in [n_1],   \qquad \beta \in (0,1],
\end{equation}
to select a small number of horizontal slices of $\T{A}$.
It returns an $l \times n_2 \times n_3 $ tensor $\T{R}$ comprising of a small number of horizontal slices of $\T{A}$ and an $c \times l \times n_3$ tensor $\T{U}$ consisting of the corresponding slices of $\T{C}$.

\begin{algorithm}[!h]
\caption{\textbf{(t-CUR)}, a fast Monte-Carlo algorithm for tensor low rank approximation.}\label{alg:algCUR}
\begin{algorithmic}[1]
\STATE{\textbf{Input}: $\T{A} \in \mathbb{R}^{n_1 \times n_2 \times n_3}$, $\T{C} \in \mathbb{R}^{n_1 \times c \times n_3}$ consisting of $c$ lateral slices of $\T{A}$, $p_i \geq 0, i\in[n_2]$ s.t. $\sum_{i \in [n_2]}p_i=1$, a rank parameter $r$, and positive integer $l \leq n_1$.}
\STATE{$\hat{\T{A}} \leftarrow \texttt{fft}(\T{A}, [], 3)$;}
\STATE{$\hat{\T{C}} \leftarrow \texttt{fft}(\T{C}, [], 3)$;}
\STATE{Initialize $\hat{\T{S}}$ and $\hat{\T{D}}$ to the all zeros tensors;}
\STATE{$t=1$;}
\FOR{$i=1, \dots, n_1$}
\STATE{Pick $i$ with probability $\min\{1,lp_i\}$;}
\IF{$i$ is picked}
\STATE{$\hat{\T{S}}_{it1} = 1, \quad \hat{\T{S}}_{it2:n_3} =0$;}
\STATE{$\hat{\T{D}}_{tt1} = 1/\min\{1,\sqrt{lp_{i}}\}, \quad \hat{\T{D}}_{tt2:n_3} =0;$}
\STATE{$t = t + 1;$}
\ENDIF
\ENDFOR
\FOR {$k=1, \dots, n_3$}
\STATE{$\hat{\T{R}}_{::k}=\hat{\T{D}}_{::k}\hat{\T{S}}_{::k}^\top \hat{\T{A}}_{::k}$;}
\STATE{$\hat{\T{U}}_{::k}= \big(\hat{\T{D}}_{::k}\hat{\T{S}}_{::k}^\top \hat{\T{C}}_{::k}\big)^\dagger$;}
\STATE{$\hat{\T{Z}}_{::k}=\hat{\T{C}}_{::k}\hat{\T{U}}_{::k}\hat{\T{R}}_{::k}$;}
\ENDFOR
\STATE{$\T{Z} \leftarrow \texttt{ifft}(\hat{\T{Z}}, [], 3);$}
\RETURN  $\T{Z}$
\end{algorithmic}
\end{algorithm}

\begin{corollary}
\label{cor:gnys-main}
Given $\T{A}\in\mathbb{R}^{n_1\times n_2 \times n_3}$,  let $\T{L}= \T{U}*\Sigma*\T{V}^\top$ be a $\rank_t-r$ approximation of $\T{A}$. In Algorithm~\ref{alg:algCUR}, choose $c =O (\varrho \log(\varrho) \log(1/\delta)/\epsilon^2)$, and let $\T{C} \in \mathbb{R}^{n_1\times c \times n_3}$ be a tensor of $c$ slices of $\T{A}$ sampled uniformly without replacement. Further, choose  $l =O (\varrho_c\log(\varrho_c)\log(1/\delta)/\epsilon^2)$, and let $\T{R} \in \mathbb{R}^{l \times n_2 \times n_3}$ be a tensor of $l$ horizontal slices of $\T{A}$ sampled uniformly without replacement. Then, the following holds
$$\|\T{A} - \T{C}*\T{U}*\T{R}\|_F \leq (1+\epsilon)\|\T{A} - \T{L}\|_F,$$
with probability at least $1-\delta$.
\end{corollary}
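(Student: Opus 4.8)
The plan is to prove the bound by reading the t-CUR output $\T{C}*\T{U}*\T{R}$ as the result of a \emph{subsampled regression} onto the lateral-slice span of $\T{C}$, and then composing the t-CX guarantee of Corollary~\ref{cor:proj-main} (applied to the lateral selection) with the relative-error regression guarantee of Theorem~\ref{thm:regress-main} (applied to the horizontal selection). The whole argument is carried out slice-by-slice in the Fourier domain, where the t-product factorizes into independent matrix products and where, by the Parseval-type identity $\|\T{X}\|_F^2=\tfrac{1}{n_3}\sum_{k=1}^{n_3}\|\hat{\T{X}}_{::k}\|_F^2$ induced by the DFT along mode~3, the global Frobenius error decomposes as a sum of frontal-slice errors; this lets the two relative-error steps chain multiplicatively.

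First I would handle the lateral selection. By construction $\T{C}$ consists of $c=O(\varrho\log(\varrho)\log(1/\delta)/\epsilon^2)$ lateral slices of $\T{A}$ sampled uniformly without replacement, which is exactly the hypothesis of Corollary~\ref{cor:proj-main}. Running that corollary with parameters rescaled to $\epsilon/3$ and failure budget $\delta/2$ (both absorbed into the $O$-notation and the $\log(1/\delta)$ factor) gives, with probability at least $1-\delta/2$,
$$\|\T{A}-\T{C}*\T{C}^\dagger*\T{A}\|_F\leq (1+\tfrac{\epsilon}{3})\|\T{A}-\T{L}\|_F.$$
Since $\T{C}^\dagger*\T{A}$ is precisely the minimizer $\arg\min_{\T{Y}}\|\T{A}-\T{C}*\T{Y}\|_F$ of the t-regression problem by the optimality property~\eqref{ell2reg}, this says that projecting onto the span of $\T{C}$ already achieves near-optimal error.

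Next I would treat the horizontal selection as an approximate regression. The key observation is that the t-CUR tensor equals
$$\T{C}*\T{U}*\T{R}=\T{C}*\big((\T{S}*\T{D})^\top*\T{C}\big)^\dagger*(\T{S}*\T{D})^\top*\T{A}=\T{C}*\tilde{\T{Y}},$$
where $\tilde{\T{Y}}:=\big((\T{S}*\T{D})^\top*\T{C}\big)^\dagger*(\T{S}*\T{D})^\top*\T{A}$ solves the \emph{subsampled} problem $\min_{\T{Y}}\|(\T{S}*\T{D})^\top*(\T{A}-\T{C}*\T{Y})\|_F$ obtained by retaining only the $l$ sampled horizontal slices. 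Thus $\T{C}*\tilde{\T{Y}}$ is the tensor analogue of the matrix CUR estimator $C(SDC)^\dagger SDA$, and bounding its error against $\|\T{A}-\T{C}*\T{C}^\dagger*\T{A}\|_F$ is a randomized-least-squares statement. Because the horizontal slices are sampled with probabilities driven by the leverage scores $\|(\hat{\T{U}}_c)_{i::}\|$ of the left singular tensor $\T{U}_c$ of $\T{C}$ (which under the incoherence hypothesis reduces to uniform sampling with the correct oversampling factor $\varrho_c$), the horizontal-slice analogue of Theorem~\ref{thm:regress-main}---equivalently Proposition~\ref{prop:regress} applied to the transposed system, with $\T{C}$ now playing the role of the design tensor and $l=O(\varrho_c\log(\varrho_c)\log(1/\delta)/\epsilon^2)$---yields, with probability at least $1-\delta/2$,
$$\|\T{A}-\T{C}*\T{U}*\T{R}\|_F\leq(1+\tfrac{\epsilon}{3})\|\T{A}-\T{C}*\T{C}^\dagger*\T{A}\|_F.$$
The structural facts needed here---rank preservation $\rank_t\big((\T{S}*\T{D})^\top*\T{U}_c\big)=\rank_t(\T{C})$, the factored pseudo-inverse identity of the type~\eqref{lem:eps3}, and the near-orthogonality of the sampled-and-rescaled singular tensor---are supplied by the analogue of Lemma~\ref{lem:eps} with $\T{U}_c$ in place of $\T{V}$.

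Finally I would compose and do the probability bookkeeping. Chaining the two displayed inequalities gives
$$\|\T{A}-\T{C}*\T{U}*\T{R}\|_F\leq(1+\tfrac{\epsilon}{3})^2\|\T{A}-\T{L}\|_F\leq(1+\epsilon)\|\T{A}-\T{L}\|_F,$$
using $(1+\tfrac{\epsilon}{3})^2\leq 1+\epsilon$ for $\epsilon\in(0,1]$, while a union bound over the two failure events of total probability $\delta$ completes the argument. I expect the main obstacle to be the second step: establishing the tensor subsampled-regression bound in the Fourier domain, where a \emph{single} horizontal-slice sampling must simultaneously control the least-squares error of all $n_3$ frontal slices. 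This requires that the uniform sampling, legitimized by the coherence $\varrho_c$ of $\T{U}_c$, produces the approximate isometry of Lemma~\ref{lem:eps} across every frequency slice at once; the most delicate point is verifying that the coherence parameter $\varrho_c$ of the randomly selected $\T{C}$ is itself under control, so that uniform horizontal sampling indeed meets the leverage-score sampling requirement~\eqref{eqn:colprob_rows_exact}.
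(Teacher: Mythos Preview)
Your proposal is correct and follows essentially the same two–step decomposition as the paper: first apply the regression/Proposition~\ref{prop:regress} guarantee to the horizontal sampling to bound $\|\T{A}-\T{C}*\T{U}*\T{R}\|_F$ by $(1+\epsilon)\|\T{A}-\T{C}*\T{C}^\dagger*\T{A}\|_F$, then apply the t-CX bound of Corollary~\ref{cor:proj-main} to replace $\|\T{A}-\T{C}*\T{C}^\dagger*\T{A}\|_F$ by $(1+\epsilon)\|\T{A}-\T{L}\|_F$, and chain to obtain $(1+\epsilon)^2\le 1+3\epsilon$.

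The one difference worth noting is the probability amplification. You directly claim each step holds with probability $1-\delta/2$ and union bound; but Proposition~\ref{prop:regress} and Theorem~\ref{thm:regress-main} as stated only deliver a constant success probability ($\ge 0.85$), not $1-\delta/2$. The paper instead first combines the two constant-probability bounds (succeeding jointly with probability $\ge 0.7$) and then boosts to $1-\delta$ by the repetition argument used in the proof of Corollary~\ref{cor:proj-main}: run the sampling $O(\log(1/\delta))$ independent rounds, observe that the error of the union of slices is dominated by the best single round, and note that failure in every round has probability at most $(1/e)^{\log(1/\delta)}=\delta$. Your presentation implicitly relies on this same boosting (it is what the $\log(1/\delta)$ factors in $c$ and $l$ pay for), so the gap is cosmetic rather than substantive, but you should be explicit about it rather than citing a $1-\delta/2$ regression guarantee that the paper never states.
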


\begin{remark}
In many applications such as tensor completion problems discussed in Section \ref{result:completion}, it may not feasible to compute the t-SVD of the entire tensor $\T{A}$ due to either computational cost or large set of missing values. In these cases, algorithms requiring knowledge of the leverage scores can not be applied. Hence, we may use an estimate where a subset of the lateral slices are chosen uniformly and without replacement, and the horizontal leverage scores \eqref{eqn:colprob_rows_exact} are calculated using the top-$r$ right singular slices of this lateral tensor instead of the entire $\T{A}$ tensor. 
\end{remark}

\begin{remark}
Note that for any subspace, the smallest $\mu_{0}$ can be is 1. In such case, we are using the optimal subspace sampling with $\beta=1$. Note also that we have provided Corollaries~\eqref{cor:proj-main} and \eqref{cor:gnys-main} based on uniform sampling. However, it can be easily seen that the relative error guarantees hold with nonuniform sampling probabilities \eqref{eq:cur-col-prob} and \eqref{eqn:colprob_rows_exact}, if we set $c =O (r \log(r/\beta) \log(1/\delta)/(\beta\epsilon^2))$, and $l =O (c \log(c/\beta) \log(1/\delta)/(\beta\epsilon^2))$. 
\end{remark}

\section{Tensor completion and robust factorization}\label{noisy:completion}

Next, we provide a CUR tensor nuclear norm minimization (CUR t-NN) procedure that solves a noisy tensor factorization problem, using a small number of lateral and/or horizontal slices of the underlying tensor and exhibits favorable computational complexity and in addition comes with performance guarantees.

\subsection{Related work on matrix problems}
\noindent
{\em Low rank plus sparse matrix decomposition.}
In many image processing and computer vision applications, the given data matrix $X$ can be decomposed as a sum of a low-rank and a sparse component. To that end, Cand\'{e}s et al. proposed Robust PCA \cite{candes2011robust} to model data matrices generated according to the following mechanism:
\begin{equation}\label{pca}
  \min_{L} \rank(L) +\|E\|_{0}   \qquad s.t.  \qquad X=L+E.
\end{equation}
Note that the solution of \eqref{pca} is an NP-hard problem. It is established in \cite{candes2009exact} that if $L$ exhibits a certain degree of low-rankness, while $E$ is sparse enough, then the formulation of \eqref{pca} can be relaxed into a convex problem of the form:
\begin{equation}\label{rpca}
  \min_{L} \|L\|_* +\|E\|_{1}   \qquad s.t.  \qquad X=L+E.
\end{equation}

This model implicitly assumes that the underlying data structure lies in a single low-rank subspace. However, in many applications (e.g. image classification) it is more likely that the data are obtained from a union of multiple subspaces, and hence recovery of the structure based on the above decomposition would be inaccurate. In order to segment the data into their respective subspaces, one needs to compute an affinity matrix that encodes the pairwise affinities between data vectors. Liu \cite{liu2013robust} proposed a more general rank minimization problem, where the data matrix itself is used as the dictionary, resulting in the following convex optimization problem:
\begin{equation}\label{opca}
  \min_{L} \|L\|_{*} +\|E\|_{2,1}   \qquad s.t.  \qquad X= XL+E, \qquad  \diag(L)=0.
\end{equation}

When the subspaces are \emph{globally} independent, the data are noiseless and sampling is sufficient, Liu et al. \cite{liu2013robust} show that the optimal solution, denoted by $L^*$, to the problem given by \ref{opca} corresponds to the widely used Shape Iteration Matrix (SIM) method \cite{costeira1998multibody}. The latter is a  "block-diagonal" affinity matrix that indicates the true segmentation of the data. To handle data corrupted by noise, the popular Low Rank Representation (LRR) introduced in \cite{liu2013robust}
adopts a regularized formulation that introduces an extra penalty term to fit the noise component. Further, after obtaining the self-representation matrix $L$, the
affinity matrix $C$ is usually constructed as $C= \frac{1}{2} (|L|+|L^\top|),$ where $|\cdot|$ represents the absolute operator. Then, the obtained affinity matrix $C$ will be processed through a spectral clustering algorithm \cite{ng2002spectral} to produce the final clustering result and obtain the corresponding data generating subspaces.

It is established in \cite{zhang2013learning, wang2013provable} that combining sparse and low-rank regularization can improve the performance of image classification. The basic objective function of this combination \cite{zhang2013learning} is as follows:
\begin{equation}\label{spalow}
  \min_{L} \lambda_1\|L\|_{*} + \lambda_2\|L\|_1 + \lambda_3 \|E\|_{\ell}   \qquad s.t.  \qquad X= DL+E, \qquad  \diag(Z)=0,
\end{equation}
where $\lambda_1,\lambda_2, \lambda_3$ are tuning parameters and $\|E\|_{\ell}$ indicates different norms suitable for different types for corrupting the data by noise; for example, the squared Frobenius norm for Gaussian noise and the $\ell_1$ norm for random spiked noise. Equation \eqref{spalow} is similar to the objective functions in \cite{wang2013provable,foggia2014graph}, where a detailed explanation of the formulation given in \eqref{spalow} is also provided.

\subsubsection{Low rank tensor decomposition}

Lu et al. \cite{lutensor} extended Robust PCA \cite{candes2011robust} to the third order tensor based on t-SVD and proposed the following convex optimization problem:
\begin{equation}\label{tlow7}
  \min_{\T{L}, \T{E}} \|\T{L}\|_{\circledast} + \lambda\|\T{E}\|_{1} \qquad s.t.  \qquad
  \T{X}=  \T{L}+ \T{E},
\end{equation}
where $\lambda$ is a tuning parameter. 

This model implicitly assumes that the underlying data come from a single low-rank subspace. When the data is drawn from a union of multiple subspaces, which is common in image classification, the recovery based on the above formulation may lack
in accuracy. To that end, Xie et al. \cite{xie2016multi} extended LRR based subspace clustering to a multi-view one, by employing the rank sum of different mode unfoldings to constrain the subspace coefficient tensor, resulting in the following convex optimization problem:
\begin{equation}\label{tlow8}
  \min_{\T{L}}  \|\T{L}\|_{\circledast} + \lambda \|\T{E}\|_{2,1} \qquad s.t.  \qquad
   \T{X}= \T{X} * \T{L} + \T{E},
\end{equation}

Equation \eqref{tlow8} is similar to the objective functions in \cite{kernfeld2014clustering, piao2016submodule} and a detail explanation of \eqref{tlow8} is provided in that paper.

\subsection{Proposed algorithm}\label{sec:admm:cur}

Next, we propose an algorithm for large scale tensor decomposition of noisy data. Our proposal, called CUR t-NN, extends Algorithm~\ref{alg:algCUR} to the \textit{noisy} tensor factorization. The main steps are outlined in Algorithm~\ref{CUR t-NN}.

Note that CUR t-NN can be used in combination with an arbitrary optimization algorithm.  In this paper, we have chosen to solve the noisy tensor factorization formulations \eqref{tlow6}, \eqref{tlow7} and \eqref{tlow8} using an Alternating Direction Method of Multiplier (ADMM) algorithm \cite{boyd2011distributed}.  ADMM is the most widely used approach for robust tensor PCA in both the fully and partially observed settings. Indeed, ADMM achieves much higher accuracy than (accelerated) proximal gradient algorithm using fewer iterations. It works well across a wide range of problem settings and does not require  
careful tuning of the regularization parameters. Further, the following empirical finding has been frequently observed:
namely, the rank of the iterates often remains bounded by the rank of the initializer, thus enabling efficient
computations \cite{candes2011robust}. This feature is not shared by the block coordinate decent algorithm.  We provide a variant of ADMM for solving problem~\eqref{tlow7} in the Appendix (see, Algorithm~\ref{alg1}). With a small modification, Algorithm~\ref{alg1} can be also used to solve the tensor completion problem~\ref{tlow6}, and tensor subspace clustering problem~\eqref{tlow8}. In the following algorithm, $\textbf{ADMM} (\T{X} , \lambda)$ denotes the ADMM algorithm for solving regularized tensor nuclear norm minimization problems~\eqref{tlow6}, \eqref{tlow7} and \eqref{tlow8}, where $\T{X}$ is the  (sampled) data tensor and $\lambda$ is a regularization parameter. 

\begin{algorithm}[h]
\caption{\texttt{CUR t-NN}, tensor nuclear norm minimization based on CUR factorization.}\label{CUR t-NN}
\begin{algorithmic}[1]
\STATE{$\T{X} \in \mathbb{R}^{n_1 \times n_2 \times n_3}$, positive integer $c$ and $l$, and a regularization parameter $\lambda$;}
\STATE{Let $\T{C} $ be a tensor of $c$ selected lateral slices of $\T{X}$ using probabilities \eqref{eq:cur-col-prob};}
\STATE{ $\tilde{\T{C}} \leftarrow \textbf{ADMM} (\T{C} , \lambda)$; } 
\STATE{ Let $\T{R} $ be a tensor of $l$ selected horizontal slices of $\T{X}$ using probabilities \eqref{eqn:colprob_rows_exact};}
\STATE{$\tilde{\T{R}} \leftarrow\textbf{ADMM} (\T{R}, \lambda)$; }
\STATE{Let $\tilde{\T{U}} = \tilde{\T{W}}^\dagger$, where $\tilde{\T{W}}$ is the $l \times c \times n_3$ tensor formed by sampling the corresponding $l$ horizontal slices of $\tilde{\T{C}}$;}
\STATE{$\tilde{\T{L}}= \tilde{\T{C}} * \tilde{\T{U}} *\tilde{\T{R}}$;}
\RETURN $\tilde{\T{L}}$
\end{algorithmic}
\end{algorithm}

\subsection{Running Time of CUR t-NN}

Algorithm~\ref{CUR t-NN} significantly reduces the per-iteration complexity of nuclear norm minimization problems. Indeed, in each iteration, a base tensor nuclear norm minimization algorithm requires $O(n_1 n_2 n_3 \log(n_3))$ flops to transform the tensor to the Fourier domain, $O(n_1 n_2 n_3 \min(n_1,n_2))$ flops for the t-SVD computation and factorization, and $O(n_1 n_2 n_3 \log(n_3))$ flops to transform it back to the original domain. On the other hand, Algorithm~\ref{CUR t-NN} only requires $O(\max(c n_1, l n_2) n_3 \log(n_3))$, $O(\max(n_1 c, l n_2) n_3 r)$ and $O(\max(c n_1, l n_2) n_3 \log(n_3))$ flops, for the respective steps. Further, Algorithm~\ref{CUR t-NN} can be implemented without storing the data tensor $\T{X}$ and can be advantageous when $r \ll \min(n_1, n_2)$, which occurs frequently in real data sets.

\subsection{Theoretical guarantees}

Next, using Lemma~\ref{lem:eps}, we establish that Algorithm~\ref{CUR t-NN} exhibits high probability recovery guarantees comparable to those of the exact algorithms that use the full data tensor. Our first result bounds the $\mu_0$ and $\mu_1$-coherence (see, Definitions \eqref{eq:ten:incoh} and \eqref{eq:ten:incoh1}) of a randomized tensor in terms of the coherence of the full tensor.

\begin{lemma}\label{lem:sub-coh}
Let $\T{L}_c$ be a tensor formed by selecting $c$ slices of a $\rank_t-r$ tensor $\T{L} =\T{U} *\Sigma * \T{V}$ that satisfy the probabilistic conditions in \eqref{eq:cur-col-prob}. If $c \geq 48 r\log(4r/(\beta\delta))/(\beta\epsilon^2)$, then with probability at least $1-\delta$:
\begin{itemize}
\item  $\mu_0(\T{U}_{\T{L}_c}) = \mu_0(\T{U})$,
\item $\displaystyle\mu_0(\T{V}_{\T{L}_c}) \le \frac{1}{1-\epsilon/2}\mu_0(\T{V})$,
\item  $\displaystyle\mu_1(\T{L}_c ) \le  \frac{r}{1-\epsilon/2}\mu_0(\T{U})\mu_0(\T{V})$,
\end{itemize}
where $\epsilon \in (0,1]$.
\end{lemma}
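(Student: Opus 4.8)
The plan is to pass to the skinny t-SVD of the sampled and rescaled tensor $\T{L}_c = \T{L}*\T{S}*\T{D} = \T{U}*\Sigma*(\T{V}^\top*\T{S}*\T{D})$ and compare its singular tensors with those of $\T{L}$, exploiting that the row-space factor $\T{M} := \T{V}^\top*\T{S}*\T{D}$ is nearly orthogonal. Let $\T{M} = \T{P}*\Sigma_\T{M}*\T{Q}^\top$ be its t-SVD with $\Sigma_\T{M} = \Sigma_{\T{V}^\top*\T{S}*\T{D}}$. By \eqref{lem:eps1}, $\rank_t(\T{M}) = r$, so $\T{L}_c = \T{U}*(\Sigma*\T{P}*\Sigma_\T{M})*\T{Q}^\top$ has tubal rank $r$, its column space equals that of $\T{U}$, and its row space equals that of $\T{Q}$. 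I will use repeatedly that $\mu_0$ depends only on the submodule spanned by an orthonormal tensor, not on the chosen basis: if $\T{W}$ is orthogonal then in the Fourier domain the frontal slices satisfy $(\widehat{\T{V}*\T{W}})_{::k} = \hat{\T{V}}_{::k}\hat{\T{W}}_{::k}$ with $\hat{\T{W}}_{::k}$ unitary, so right multiplication by $\T{W}$ preserves every horizontal-slice Frobenius norm, hence $\mu_0$. Consequently, for the first bullet, $\T{U}_{\T{L}_c}$ and $\T{U}$ span the same column space, differ by an orthogonal tensor on the right, and so $\mu_0(\T{U}_{\T{L}_c}) = \mu_0(\T{U})$.

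For the second bullet, the same invariance gives $\mu_0(\T{V}_{\T{L}_c}) = \mu_0(\T{Q})$, so it suffices to bound the horizontal slices of $\T{Q} = \T{M}^\top*\T{P}*\Sigma_\T{M}^{-1}$. The $t$-th lateral slice of $\T{M}$ is the rescaled sampled column $\frac{1}{\sqrt{cp_i}}(\T{V}_{i::})^\top$, where $i$ is the original index of the $t$-th selection, so $\T{Q}_{t::} = \frac{1}{\sqrt{cp_i}}\,\T{V}_{i::}*\T{P}*\Sigma_\T{M}^{-1}$. Passing to the Fourier domain, using that $\hat{\T{P}}_{::k}$ is unitary (preserving row norms) and that Lemma~\ref{lem:mat-mult} applied to $\T{V}^\top$ forces $\|I_r - \hat{\T{M}}_{::k}\hat{\T{M}}_{::k}^\top\|_2 \le \epsilon/2$, hence every singular value of $\hat{\T{M}}_{::k}$ into $[\sqrt{1-\epsilon/2},\sqrt{1+\epsilon/2}]$, I obtain $\|\Sigma_\T{M}^{-1}\|_2^2 \le 1/(1-\epsilon/2)$ and therefore $\|\T{Q}_{t::}\|_F^2 \le \frac{1}{cp_i(1-\epsilon/2)}\|\T{V}_{i::}\|_F^2$. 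Substituting the leverage-score probabilities \eqref{eq:cur-col-prob}, which read $p_i \ge \frac{\beta}{r}\|\T{V}_{i::}\|_F^2$ after using $\|\hat{\T{V}}_{i::}\|_F^2 = n_3\|\T{V}_{i::}\|_F^2$, cancels the data-dependent factor and yields $\|\T{Q}_{t::}\|_F^2 \le \frac{r}{c\beta(1-\epsilon/2)}$, whence $\mu_0(\T{V}_{\T{L}_c}) = \frac{cn_3}{r}\max_t\|\T{Q}_{t::}\|_F^2 \le \frac{n_3}{\beta(1-\epsilon/2)}$. Substituting the largest admissible value $\beta = n_3/\mu_0(\T{V})$ (optimal subspace sampling) gives exactly the stated $\frac{1}{1-\epsilon/2}\mu_0(\T{V})$.

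For the third bullet, I would first prove the deterministic inequality $\mu_1(\T{X}) \le r\,\mu_0(\T{U}_\T{X})\mu_0(\T{V}_\T{X})$ for any skinny t-SVD $\T{X} = \T{U}_\T{X}*\Sigma_\T{X}*\T{V}_\T{X}^\top$. For a fixed entry the inverse DFT and the triangle inequality give $|(\T{U}_\T{X}*\T{V}_\T{X}^\top)_{abk}| \le \frac{1}{n_3}\sum_{k'}\|e_a^\top(\hat{\T{U}}_\T{X})_{::k'}\|\,\|e_b^\top(\hat{\T{V}}_\T{X})_{::k'}\|$, and Cauchy--Schwarz over $k'$ together with $\|(\hat{\T{U}}_\T{X})_{a::}\|_F^2 = n_3\|(\T{U}_\T{X})_{a::}\|_F^2 \le r\mu_0(\T{U}_\T{X})/n_1$ and the analogous row bound for $\hat{\T{V}}_\T{X}$ yields $\|\T{U}_\T{X}*\T{V}_\T{X}^\top\|_\infty^2 \le \frac{r^2}{n_1 n_2 n_3^2}\mu_0(\T{U}_\T{X})\mu_0(\T{V}_\T{X})$; multiplying by $n_1 n_2 n_3^2/r$ gives the claim. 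Applying it to $\T{L}_c$ and inserting the first two bullets gives $\mu_1(\T{L}_c) \le r\,\mu_0(\T{U})\cdot\frac{1}{1-\epsilon/2}\mu_0(\T{V})$, as required.

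I expect the main obstacle to be the second bullet: correctly identifying $\T{V}_{\T{L}_c}$ with the right singular tensor of the rescaled $\T{M}$, tracking the non-uniform rescaling, and propagating the spectral estimate of Lemma~\ref{lem:mat-mult} uniformly across all Fourier frontal slices to the bound $\|\Sigma_\T{M}^{-1}\|_2^2 \le 1/(1-\epsilon/2)$; it is also essential that the cancellation of $\|\T{V}_{i::}\|_F^2$ by the leverage scores is precisely what removes all data dependence and produces a constant governed only by $n_3/\beta$. By contrast, the first bullet is a basis-invariance observation and the third is a deterministic Cauchy--Schwarz estimate, so neither should pose difficulty once the t-SVD bookkeeping of $\T{L}_c$ is in place.
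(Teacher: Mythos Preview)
Your treatment of the first bullet (basis-invariance of $\mu_0$ under right multiplication by an orthogonal tensor) and the third bullet (the deterministic Cauchy--Schwarz estimate $\mu_1\le r\,\mu_0(\T{U})\mu_0(\T{V})$ followed by substitution of the first two bullets) matches the paper's argument.

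For the second bullet you take a genuinely different route. The paper never writes down the t-SVD of $\T{M}$. Instead it works with the \emph{unscaled} selection $\T{L}_c=\T{L}*\T{S}$, expresses the row-space projector as $\T{L}_c^\dagger*\T{L}_c=\T{S}^\top*\T{V}*\T{W}^{-1}*\T{V}^\top*\T{S}$ with $\T{W}=\T{V}^\top*\T{S}*\T{S}^\top*\T{V}$, and bounds $\mu_0(\T{V}_c)=\frac{cn_3}{r}\max_i\trace\{\tcol{e}_i^\top*\T{L}_c^\dagger*\T{L}_c*\tcol{e}_i\}$ via H\"older's inequality for the spectral/nuclear pair, reducing everything to a bound on $\|\T{W}^{-1}\|$. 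Your approach---factor $\T{M}=\T{P}*\Sigma_\T{M}*\T{Q}^\top$ and bound $\|\T{Q}_{t::}\|_F$ directly---is more explicit about where the leverage-score cancellation occurs and avoids the trace/H\"older step; the paper's route keeps the unrescaled target object in view throughout and does not need to track $\T{D}$.

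One caution worth flagging: by setting $\T{L}_c=\T{L}*\T{S}*\T{D}$ you are bounding the coherence of the \emph{rescaled} subtensor, whereas the lemma (and its application in Theorem~\ref{thm:master}) concerns the unrescaled selection $\T{L}*\T{S}$; the right singular tensors of the two differ when $\T{D}$ is not a scalar multiple of the identity. Under uniform sampling $p_i=1/n_2$---which is exactly the regime the paper's own proof implicitly assumes when it invokes the bound on $\Sigma_{\T{V}^\top*\T{S}*\T{D}}$ to control the unrescaled $\T{W}$---the distinction vanishes and your argument coincides with the paper's. If you want your proof to cover the general leverage-score case, redo the row bound for the unscaled $\T{N}=\T{V}^\top*\T{S}$: you then have $\|\T{Q}'_{t::}\|_F^2\le \|\T{V}_{i::}\|_F^2\,\|\Sigma_\T{N}^{-1}\|^2$, and combining $\|\T{V}_{i::}\|_F^2\le r\mu_0(\T{V})/(n_2n_3)$ with the corresponding spectral bound on $\T{W}^{-1}$ yields the claim without the cancellation step.
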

 
Our next theorem provides a bound for the estimation error of the CUR t-NN Algorithm. 

\begin{theorem}\label{thm:master}
Under the notion of Algorithm~\ref{CUR t-NN}, choose $c =O (\varrho \log(\varrho) \log(1/\delta)/\epsilon^2)$, and $l =O (\varrho_c\log(\varrho_c)\log(1/\delta) /\epsilon^2)$. Let $\T{C}^*$ and $\T{R}^*$ be the corresponding lateral and horizontal sub-tensors of the exact solution $\T{L}^*$. If $\T{L}^*$ is $(\mu, r)$-coherent, then with probability at least $1-\delta$, $\T{C}^*$ and $\T{R}^*$ are $(\frac{r \mu^2}{1-\epsilon/2},r)$-coherent, and 
$$
\|\T{L}^* - \tilde{\T{L}}\|_F \leq (2+\epsilon) \sqrt{
\|\T{C}^*-\tilde{\T{C}}\|_F^2+\|\T{R}^*-\tilde{\T{R}}\|_F^2},
$$
where $\tilde{\T{L}}$ is a solution obtained by Algorithm~\ref{CUR t-NN}.   
\end{theorem}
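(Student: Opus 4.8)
The plan is to separate the statement into its two assertions --- the coherence bound on $\T{C}^{*},\T{R}^{*}$ and the reconstruction error bound --- and dispatch them independently. For the coherence claim I would invoke Lemma~\ref{lem:sub-coh} directly. The lateral sub-tensor $\T{C}^{*}$ consists of $c$ slices of the $\rank_t$-$r$ tensor $\T{L}^{*}$ selected under the leverage scores \eqref{eq:cur-col-prob}, and the horizontal sub-tensor $\T{R}^{*}$ of $l$ slices selected under \eqref{eqn:colprob_rows_exact}. Since $\T{L}^{*}$ is $(\mu,r)$-coherent, Lemma~\ref{lem:sub-coh} gives $\mu_0(\T{U}_{\T{C}^{*}})=\mu_0(\T{U})\le\mu$, $\mu_0(\T{V}_{\T{C}^{*}})\le\frac{1}{1-\epsilon/2}\mu_0(\T{V})\le\frac{\mu}{1-\epsilon/2}$, and $\mu_1(\T{C}^{*})\le\frac{r}{1-\epsilon/2}\mu_0(\T{U})\mu_0(\T{V})\le\frac{r\mu^2}{1-\epsilon/2}$, while rank preservation comes from \eqref{lem:eps1}. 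Because the smallest attainable coherence is $1$ we have $r\mu\ge 1$, so $\frac{\mu}{1-\epsilon/2}\le\frac{r\mu^2}{1-\epsilon/2}$ and all three quantities are dominated by $\frac{r\mu^2}{1-\epsilon/2}$; hence $\T{C}^{*}$, and by the symmetric (role-swapped) application $\T{R}^{*}$, are $\bigl(\frac{r\mu^2}{1-\epsilon/2},r\bigr)$-coherent. A union bound over the two selection events (tuning the constants in $c,l$ so each fails with probability at most $\delta/2$) yields the overall $1-\delta$ guarantee.

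For the error bound I would first establish that CUR is \emph{exact} on $\T{L}^{*}$, i.e. $\T{L}^{*}=\T{C}^{*}\ast\T{U}^{*}\ast\T{R}^{*}$ with $\T{U}^{*}=(\T{W}^{*})^\dagger$ and $\T{W}^{*}$ the $l\times c\times n_3$ intersection sub-tensor. This follows by instantiating Corollary~\ref{cor:gnys-main} at $\T{A}=\T{L}^{*}$, so that the right-hand side $(1+\epsilon)\|\T{A}-\T{L}\|_F$ vanishes; equivalently, the rank preservation in Lemma~\ref{lem:eps} guarantees that the sampled lateral and horizontal slices span, respectively, the column and row free submodules of $\T{L}^{*}$, making the skeleton reconstruction exact. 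Next, since the t-product and the Frobenius norm diagonalize under the DFT (via \eqref{circ1} together with Parseval, $\|\T{X}\|_F^2=\frac{1}{n_3}\sum_{k}\|\hat{\T{X}}_{::k}\|_F^2$), I would pass to the Fourier domain and argue frontal slice by frontal slice, where each reconstruction $\hat{\tilde{\T{L}}}_{::k}=\hat{\tilde{\T{C}}}_{::k}\hat{\tilde{\T{U}}}_{::k}\hat{\tilde{\T{R}}}_{::k}$ is an ordinary matrix CUR product.

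On each frontal slice I would insert a hybrid term and apply the triangle inequality,
\[
\hat{\T{L}}^{*}_{::k}-\hat{\tilde{\T{L}}}_{::k}
=\big(\hat{\T{C}}^{*}_{::k}\hat{\T{U}}^{*}_{::k}-\hat{\tilde{\T{C}}}_{::k}\hat{\tilde{\T{U}}}_{::k}\big)\hat{\T{R}}^{*}_{::k}
+\hat{\tilde{\T{C}}}_{::k}\hat{\tilde{\T{U}}}_{::k}\big(\hat{\T{R}}^{*}_{::k}-\hat{\tilde{\T{R}}}_{::k}\big),
\]
bounding the second summand by $\|\hat{\tilde{\T{C}}}_{::k}\hat{\tilde{\T{U}}}_{::k}\|_2\,\|\hat{\T{R}}^{*}_{::k}-\hat{\tilde{\T{R}}}_{::k}\|_F$ and absorbing the first into $\|\hat{\T{C}}^{*}_{::k}-\hat{\tilde{\T{C}}}_{::k}\|_F$ using the near-orthogonality estimates \eqref{lem:eps2} and \eqref{lem:eps4}, which control $\|\hat{\tilde{\T{C}}}_{::k}\hat{\tilde{\T{U}}}_{::k}\|_2$ and the spectral distance of the sampled--rescaled factors from genuine projections by $1+O(\epsilon)$. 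This produces a per-slice estimate $\|\hat{\T{L}}^{*}_{::k}-\hat{\tilde{\T{L}}}_{::k}\|_F\le\alpha\|\hat{\T{C}}^{*}_{::k}-\hat{\tilde{\T{C}}}_{::k}\|_F+\beta\|\hat{\T{R}}^{*}_{::k}-\hat{\tilde{\T{R}}}_{::k}\|_F$ with constants $\alpha,\beta$ uniform in $k$ and satisfying $\sqrt{\alpha^2+\beta^2}\le 2+\epsilon$. Squaring, applying the Cauchy--Schwarz inequality $(\alpha a+\beta b)^2\le(\alpha^2+\beta^2)(a^2+b^2)$, summing over $k$, and invoking Parseval then yields $\|\T{L}^{*}-\tilde{\T{L}}\|_F\le\sqrt{\alpha^2+\beta^2}\,\sqrt{\|\T{C}^{*}-\tilde{\T{C}}\|_F^2+\|\T{R}^{*}-\tilde{\T{R}}\|_F^2}$, which is the claimed inequality.

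The hard part will be the first summand on each frontal slice: controlling $\hat{\T{C}}^{*}_{::k}\hat{\T{U}}^{*}_{::k}-\hat{\tilde{\T{C}}}_{::k}\hat{\tilde{\T{U}}}_{::k}$, since $\tilde{\T{U}}=\tilde{\T{W}}^\dagger$ depends \emph{nonlinearly} on $\tilde{\T{C}}$ through the pseudoinverse of the intersection block, so the perturbation of $\tilde{\T{U}}$ must be re-expressed purely in terms of $\|\T{C}^{*}-\tilde{\T{C}}\|_F$ and with a constant that does not blow up. Here I would lean on Lemma~\ref{lem:eps} to keep the relevant singular tubes bounded away from zero (so the pseudoinverse stays stable) and to guarantee that $\hat{\tilde{\T{C}}}_{::k}\hat{\tilde{\T{U}}}_{::k}$ behaves like an oblique projection of norm close to $1$. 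Maintaining $\alpha$ and $\beta$ tight enough that $\sqrt{\alpha^2+\beta^2}$ stays at $2+\epsilon$, rather than degrading to a larger constant, is the delicate bookkeeping on which the stated factor hinges.
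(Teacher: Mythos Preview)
Your handling of the coherence claim via Lemma~\ref{lem:sub-coh} matches the paper. The error-bound argument, however, takes a direct perturbation route that the paper does not, and the step you yourself flag as ``the hard part'' is a genuine gap rather than mere bookkeeping. You propose to control $\|\hat{\tilde{\T{C}}}_{::k}\hat{\tilde{\T{U}}}_{::k}\|_2$ and the difference $\hat{\T{C}}^{*}_{::k}\hat{\T{U}}^{*}_{::k}-\hat{\tilde{\T{C}}}_{::k}\hat{\tilde{\T{U}}}_{::k}$ by invoking \eqref{lem:eps2}--\eqref{lem:eps4}, but those estimates concern the sampled-and-rescaled orthogonal factor $\T{V}^\top *\T{S}*\T{D}$ of the \emph{exact} rank-$r$ tensor; they say nothing about $\tilde{\T{C}}$ or $\tilde{\T{W}}$, which are outputs of the ADMM subroutine on noisy data and carry no a~priori singular-value control. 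The spectral norm of the oblique projector $\tilde{\T{C}}*\tilde{\T{W}}^\dagger$ is governed by the smallest singular value of $\tilde{\T{W}}$, and the pseudoinverse perturbation $\|(\T{W}^*)^\dagger-\tilde{\T{W}}^\dagger\|$ likewise scales with that reciprocal; neither quantity is bounded in terms of $\|\T{C}^*-\tilde{\T{C}}\|_F$ alone, so there is no way to pin $\sqrt{\alpha^2+\beta^2}$ at $2+\epsilon$ along this route.

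The paper bypasses all of this with a single construction. It introduces the intermediate tensor
\[
\bar{\T{L}}=\begin{bmatrix}\tilde{\T{C}}_1 & \tilde{\T{R}}_2\\[2pt] \tilde{\T{C}}_2 & \T{L}_{22}^*\end{bmatrix},
\]
obtained from $\T{L}^*$ by overwriting the sampled lateral and horizontal slices with $\tilde{\T{C}}$ and $\tilde{\T{R}}$ while leaving the unsampled block $\T{L}_{22}^*$ intact. By construction $\tilde{\T{L}}=\tilde{\T{C}}*\tilde{\T{U}}*\tilde{\T{R}}$ is itself the t-CUR approximation of $\bar{\T{L}}$, so Corollary~\ref{cor:gnys-main} applied with $\T{A}=\bar{\T{L}}$ and $\T{L}^*$ as a rank-$r$ comparator yields $\|\bar{\T{L}}-\tilde{\T{L}}\|_F\le(1+\epsilon)\|\bar{\T{L}}-\T{L}^*\|_F$. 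Since $\bar{\T{L}}$ and $\T{L}^*$ differ only on the sampled slices, $\|\bar{\T{L}}-\T{L}^*\|_F^2\le\|\T{C}^*-\tilde{\T{C}}\|_F^2+\|\T{R}^*-\tilde{\T{R}}\|_F^2$, and one triangle inequality gives $\|\T{L}^*-\tilde{\T{L}}\|_F\le\|\T{L}^*-\bar{\T{L}}\|_F+\|\bar{\T{L}}-\tilde{\T{L}}\|_F\le(2+\epsilon)\|\T{L}^*-\bar{\T{L}}\|_F$. No pseudoinverse perturbation, no Fourier slice-by-slice analysis, and the constant $2+\epsilon$ falls out directly; a union bound over this event and the two coherence events (each at level $\delta/3$) gives the stated probability.
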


\section{Experimental Results}

Next, we investigate the efficiency of the proposed randomized algorithms on both synthetic and real data sets. The results are organized in the following three sub-sections: in Section~\ref{result:multi}, we compare the performance of the rt-product to that of the t-product using synthetic data. In Section~\ref{result:brain}, we use the proposed {t-CX} and {t-CUR} algorithms for finding important ions and positions in two Mass Spectrometry Imaging (MSI) data sets.
Finally, in Section~\ref{result:completion}, we apply the proposed  tensor factorization {CUR t-NN} algorithm on data sets related to image and video processing.

All algorithms have been implemented in the MATLAB R2018a environment and run on a Mac machine equipped with a 1.8 GHz Intel Core i5 processor and 8 GB 1600 MHz DDR3 of memory. 

\subsection{Experimental results for fast tensor multipication}\label{result:multi}

A random data tensor $\T{X} \in \mathbb{R}^{n_1 \times n_2 \times n_3 }$ of dimension $n_1 =10^5$, $n_2=10^3$ and $n_3 =5$
is generated as follows: the first frontal slice of $\T{X}$ is set to $\T{X} (:,:,1) =\texttt{sprandn} (n_1,n_2,1)$; i.e. comprising of $n_1\times n_2$ sparse normally distributed random matrix. The remaining frontal slices are generated by $\texttt{sprandn} (\T{X} (:,:,1))$, which results in the {\em same sparsity} structure as $\T{X} (:,:,1)$, with normally distributed random entries with mean 0 and variance 1. 

We perform a t-SVD decomposition on $\T{X}$ and select the left singular value tensor $\T{U}$ for our experiments. We consider both uniform and nonuniform sampling schemes. For nonuniform sampling, we use a randomized approach (similar to Algorithm 4 in \cite{drineas2012fast}) to obtain normalized leverage scores of $\T{U}$. Further, the Frobenius and spectral
bounds given in \eqref{eqn1:theorem:matmult-main-expected} and \eqref{eqn1:theorem:matmult-nor2-expected}, respectively,
are used as performance metrics for Algorithm~\ref{alg:matrix multiply}. Table~\eqref{table:ion} shows the Relative Frobenius Error (RFE) -$\|\T{U}_r^ \top \T{U}_r -\tilde{\T{U}}_r^\top\tilde{\T{U}}_r \|_F/ \|\T{U}_r\|_F^2$- and  Relative Spectral Error (RSE) -$\|\T{U}_r^ \top \T{U}_r -\tilde{\T{U}}_r^\top\tilde{\T{U}}_r \|_2/ \|\T{U}_r\|_2^2$- for the rt-product algorithm to recover $\tilde{\T{U}}_r^\top \tilde{\T{U}}_r =\T{U}_r^\top *\T{S} *\T{D} *\T{D}* \T{S}^\top*\T{U}_r$ based on $ l \approx r \log(r)$ slices, and its deterministic counterpart. The depicted results are averaged over 10 independent replications.
\begin{table}[h]
\centering
\rowcolors{1}{}{lightgray}
\begin{tabular}{|c|c|ccc|ccc|}
\hline
\multicolumn{1}{|c|}{}&
\multicolumn{1}{c}{t-product}&
\multicolumn{6}{c|}{rt-product}\\
\multicolumn{1}{|c|}{Rank, selected slices}&
\multicolumn{1}{c}{}&
\multicolumn{3}{c}{non-uniform sampling}&\multicolumn{3}{c|}{uniform sampling}\\
\multicolumn{1}{|c|}{} & time (s)&time(s)&RFE&RSE& time(s)&RFE& RSE\\
 \hline
r=100, l=460  &9.09& 8.13& 70e-3&43e-4&  1.03 & 39e-2& 11e-2\\
r=200, l=1000 &36.36  &14.27&64e-3&25e-4  &1.08&12e-2& 82e-2\\
r=300, l=1700 &162.99 &34.89&56e-3&29e-4 &1.39&41e-2& 34e-2\\
r=400, l=2500 &321.43 &41.04&31e-3  &13e-3 &1.64&90e-2& 38e-3\\
r=500, l=3100 &684.01 & 67.07&71e-3& 12e-4 & 3.09&23e-2& 43e-3\\
\hline
\end{tabular}
\caption{Relative errors and running times of tensor multipication algorithms.}\label{table:multi}
\end{table}

Very large improvements in computational speed can be seen when using the {rt-product}, especially when coupled with
a non-uniform sampling scheme. Further, the gains become more pronounced for larger number of slices sampled and larger
rank.

\subsection{Finding Important Ions and Positions in MSI}\label{result:brain}

MSI is used to visualize the spatial distribution of chemical compounds, such as metabolites, peptides or proteins by their molecular masses. The ability of MSI to localize panels of biomolecules in tissue samples has led to a rapid and substantial impact in both clinical and pharmacological research, aided in uncovering biomolecular changes associated with disease and finally provided low cost imaging of drugs. Typical techniques used require finding important ions and positions from a 3D image: ions to be used in fragmentation studies to identify key compounds, and positions for follow up validation measurements using microdissection or other orthogonal techniques. Unfortunately, with modern imaging machines, these must be selected from an overwhelming amount of raw data. Existing popular techniques used to reduce the volume of data, include principal components analysis and non-negative matrix factorization,  but return difficult-to-interpret linear combinations 
of actual data elements. A recent paper \cite{yang2015identifying} shows that CX and CUR matrix decompositions can be used directly to address this selection need. One major shortcoming of CX and/or CUR matrix decompositions is that they can only handle 2-way (matrix) data. However, MSI data form a multi-dimensional array. Hence, in order to use CX/CUR matrix decompositions, one has first to reformulate the multi-way array as a matrix. Such preprocessing usually leads to information loss, which in turn could cause significant performance degradation. 

By using instead the t-CX and t-CUR decompositions (Algorithms~\ref{alg:algCX} and \ref{alg:algCUR}, respectively) one
can obtain good low-rank approximations of the available data, expressed as combinations of actual ions and positions, as opposed to difficult-to-interpret eigen-ions and eigen-positions produced by matrix factorization techniques.
We show that this leads to effective prioritization of information for both actual ions and actual positions. In particular, important ions can be discovered by using leverage scores as the importance sampling distribution. Further, selection of important positions from the original tensor can be accomplished based on the random sampling algorithm in \cite{yang2015identifying}, since the distribution of the leverage scores of positions is uniform. To this end, we consider the following two ways of computing leverage scores of a given data set:
\begin{itemize}
  \item Deterministic: Compute the normalized tensor leverage scores exactly using probabilities \eqref{eq:cur-col-prob} and \eqref{eqn:colprob_rows_exact}.
  \item Randomized: Compute an approximation to the normalized leverage scores of tensor (mapped to a  block diagonal matrix in the Fourier domain) by using Algorithm~4 of \cite{drineas2012fast}.
  \end{itemize}

\subsubsection{Description and Analysis of MSI data sets}

Next, we use the following two data sets for illustration purposes, that are publicly available at the OpenMSI Web gateway\footnote{\texttt{https://openmsi.nersc.gov/openmsi/client/}}. They represent two diverse acquisition modalities, including one mass spectrometry image of the left coronal hemisphere of a mouse brain (see, Figure~\ref{fig:msi}) acquired using a time-of-flight (TOF) mass analyzer and one MSI data set of a lung acquired using an Orbitrap mass analyzer. These data sets form
a $122 \times 120 \times 80339 $ and a $122 \times 120 \times 500000 $ tensor, respectively.
\begin{figure}[h]
	\centering
			\includegraphics[scale=0.8]{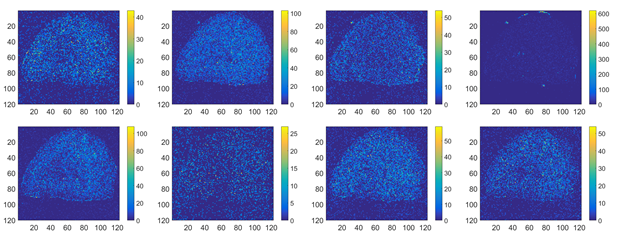}
\caption{Sample ions in the brain data set.}\label{fig:msi}		
\end{figure}
As described in \cite{yang2015identifying}, the brain data set is processed using peak-finding to identify the most intense ions. Using this technique, the original brain data is reduced from $122 \times 120 \times 80339 $ values to a data set of size $122 \times 120 \times  2000$. To compute the CX decomposition, we reshape the MSI data cube into a two-dimensional $(14640 \times 2000)$ matrix, where each row corresponds to the spectrum of a pixel in the image, and each column to the intensity of an ion across all pixels, thus describing the distribution of the ion in physical space. No peak-finding was applied to the lung data set.
\begin{figure}[h]
	\centering \makebox[0in]{
		\begin{tabular}{cc}
			\includegraphics[scale=0.2]{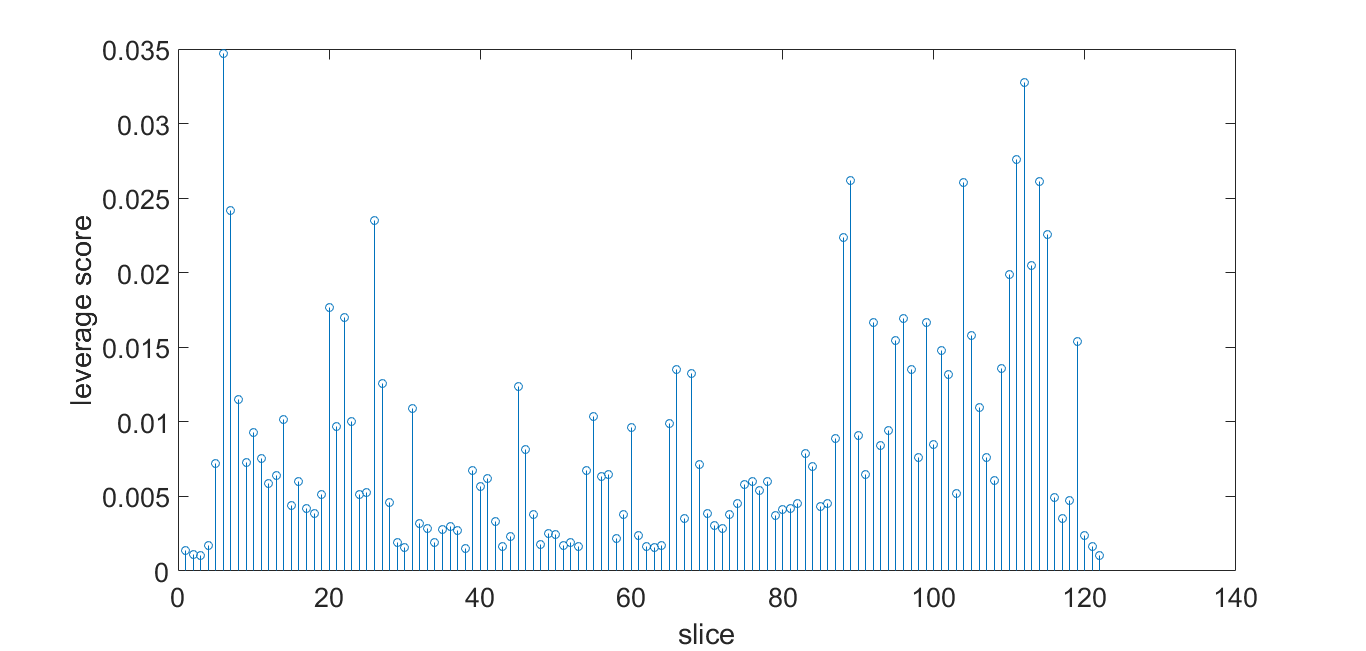}
			\includegraphics[scale=0.2]{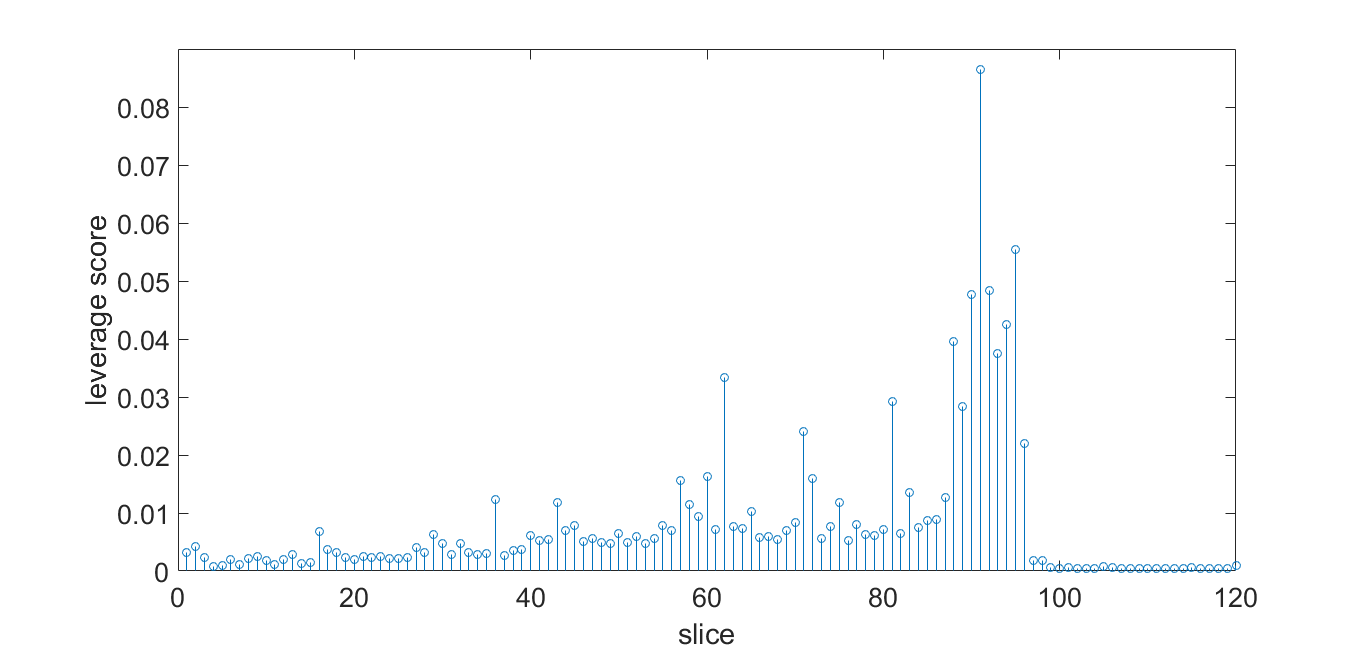}\\
			\includegraphics[scale=0.3]{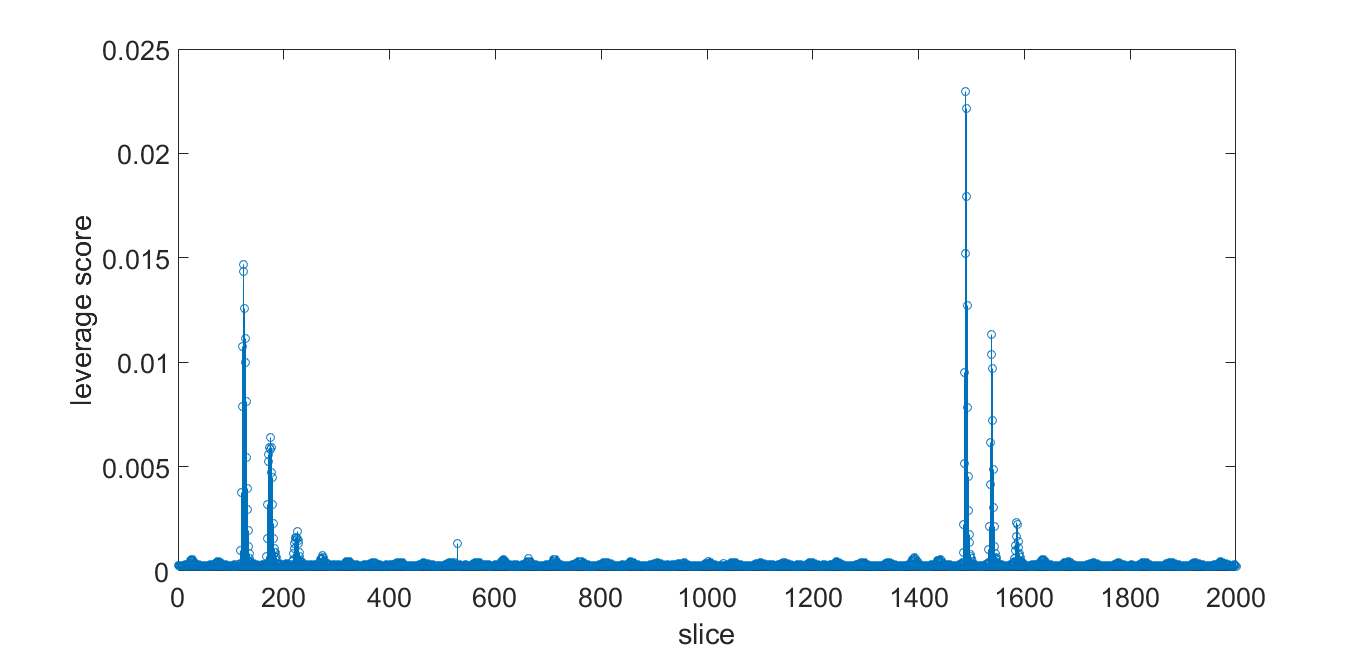}
		\end{tabular}}
\caption{Distribution of leverage scores of tensor $\T{X}$, relative to the best rank$-5$ space for the brain data set. \textbf{Left}: Horizontal scores; \textbf{Right}: Lateral scores;  \textbf{Bottom}: Frontal scores.}\label{fig:scores}
\end{figure}

\begin{table}[h]
\centering
\rowcolors{1}{}{lightgray}
\begin{tabular}{|c|cccccccc|}
\hline
\multicolumn{1}{|c|}{Number of}&\multicolumn{4}{c}{CX}&\multicolumn{4}{c|}{t-CX}\\
\multicolumn{1}{|c|}{selected slices}&\multicolumn{2}{c}{Randomized}&\multicolumn{2}{c}{Deterministic}&\multicolumn{2}{c}{Randomized}&\multicolumn{2}{c|}{Deterministic}\\
\multicolumn{1}{|c|}{} &  RSE&time&RSE&time&RSE&time& RSE&time\\
 \hline
25 & 19.13e-2& 4.56 &18.84e-2&8.46&13.65e-2&4.13& 16.05e-2&4.47\\
35 & 17.24e-2& 5.12 &17.59e-2& 8.95&17.43e-2&4.87 &13.13e-2&5.99\\
45 & 16.35e-2& 7.01 &16.93e-2& 14.99&11.32e-2& 6.14&11.01e-2&7.01\\
55 & 15.14e-2& 8.16 &16.52e-2& 15.86& 16.26e-2& 6.63&10.16e-2&8.99\\
\hline
\end{tabular}
\caption{RSE of matrix and tensor low rank decomposition relative to the best rank$-5$ space for identifying important ions in the brain data set.}\label{table:ion}
\end{table}

\begin{table}[h]
\centering
\rowcolors{1}{}{lightgray}
\begin{tabular}{|c|cccccccc|}
\hline
\multicolumn{1}{|c|}{Number of}&\multicolumn{4}{c}{CX}&\multicolumn{4}{c|}{t-CX}\\
\multicolumn{1}{|c|}{selected slices}&\multicolumn{2}{c}{Randomized}&\multicolumn{2}{c}{Deterministic}&\multicolumn{2}{c}{Randomized}&\multicolumn{2}{c|}{Deterministic}\\
\multicolumn{1}{|c|}{} &  RSE&time&RSE&time&RSE&time& RSE&time\\
 \hline
25 & 45.24e-2&3.15&46.03e-2&7.41&37.65e-2& 3.01&16.05e-2&4.06\\
35 & 35.87e-2& 3.81&35.68e-2&7.63&28.00e-2&3.19& 13.13e-2&4.55\\
45 & 24.13e-2&5.29&24.19e-2&11.89&23.98e-2&4.15& 21.01e-2&5.59\\
55 & 23.16e-2&5.83&24.39e-2&12.93&15.23e-2&4.23& 15.17e-2&5.71\\
\hline
\end{tabular}
\caption{RSE of matrix and tensor low rank decomposition relative to the best rank$-5$ space for finding important pixels in the brain data set.}\label{table:pix}
\end{table}

For the brain data set, we evaluate the quality of the approximation of the leverage scores based on a rank $r=5$ approximation. The distribution of the deterministic leverage scores for the ions and pixels is shown in Figure~\ref{fig:scores}. Table~\eqref{table:ion} shows the relative square error (RSE) $\|\T{X}- \tilde{\T{X}}\|_F/\|\T{X}\|_F$ using t-CX decomposition to recover rank $-5$ tensor $\tilde{\T{X}}$ for selection of $c = 25, 35, 45$ and, $55$ ions, using both randomized and deterministic leverage scores. Table \eqref{table:pix} provides the reconstruction errors to the best rank$-5$ approximation, based on the t-CX decomposition coupled with horizontal slice selection. The results obtained  running both randomized and deterministic CX and t-CX algorithms are based on 10 independent replicates and then averaging them. Note that for pixel selection, the deterministic CX decomposition results in larger reconstruction errors than its randomized CX counterpart. The reason for this behavior lies in the distribution of the leverage scores for the pixels, which are fairly uniform (see, Figure~\ref{fig:scores}).

\begin{table}[h]
\centering
\rowcolors{1}{}{lightgray}
\begin{tabular}{|c|cccccccc|}
\hline
\multicolumn{1}{|c|}{Number of}&\multicolumn{4}{c}{Decomposition using ions}&\multicolumn{4}{c|}{Decomposition using ions and pixels}\\
\multicolumn{1}{|c|}{selected slices}&\multicolumn{2}{c}{CX}&\multicolumn{2}{c}{t-CX}&\multicolumn{2}{c}{CUR}&\multicolumn{2}{c|}{t-CUR}\\
\multicolumn{1}{|c|}{} &  RSE&time&RSE&time&RSE&time& RSE&time\\
 \hline
25 & 40.82e-2&3.31&46.89e-2&3.05&59.76e-2&2.89&24.67e-2&2.14\\
35 & 41.98e-2&4.13&29.63e-2&3.24&42.57e-2&3.16&22.71e-2&2.67\\
45 & 30.16e-2& 5.78&22.35e-2&3.36&12.18e-2&4.25&18.15e-2&3.13\\
55 & 30.74e-2&6.67&20.81e-2&4.05&19.00e-2&5.47&18.15e-2&3.26\\
\hline
\end{tabular}
\caption{RSE of matrix and tensor low rank decomposition relative to the best rank$-15$ space for finding important ions and pixels in the lung data set.}\label{table:pixlung}
\end{table}

For the lung data set, reconstruction errors to the best rank $r=15$ approximation based on randomized t-CX and t-CUR decompositions are given in Table \eqref{table:pixlung} based on averages over 10 independent replicates. It can be seen that the t-CX and t-CUR match or outperform their matrix variants in terms of accuracy, while improving on computing time.

These results introduce the concept of t-CX/ t-CUR tensor factorizations to MSI, describing their utility and illustrating principled algorithmic approaches to deal with the overwhelming amount of data generated by this technology and their ability to select important and intepretable ions/pixels.

\subsection{Robust PCA in the fully and partially observed settings} \label{result:completion}

Many images exhibit an inherent low rank structure and are suitably denoised by low-rank modeling methods, such as robust PCA \cite{candes2011robust}. In this section, we assess the performance of the CUR t-NN on two popular data sets, and for the typical use cases they represent. We compare the performance of the proposed CUR t-NN algorithm to the following techniques:
 \begin{itemize}
   \item
   \textbf{EXACT NN}, the exact matrix completion  \cite{candes2009exact};
   \item \textbf{RPCA NN}, the robust matrix completion  \cite{candes2011robust};
  \item \textbf{E-TUCKER NN}, the TUCKER based tensor completion \cite{liu2013tensor};
 \item \textbf{R-TUCKER NN}, the robust TUCKER based tensor completion \cite{huang2014provable};
   \item \textbf{EXACT t-NN}, the t-SVD based tensor completion \cite{zhang2017exact};
   \item \textbf{RPCA t-NN}, the robust t-SVD based tensor completion \cite{lutensor};
 \end{itemize}

All  algorithms are terminated either when the relative square error (RSE),
$$
\text{RSE}:=\frac{\|\T{L}^*- \tilde{\T{L}}\|_F}{\|\T{L}^*\|_F} \le 10^{-3},
$$
or the number of iterations and CPU times exceed 1,000 and 20 minutes, respectively.

\subsubsection{CUR t-NN on the Extended Yale B data set}\label{sec:yale}

We apply the {CUR t-NN} on the Extended Yale B data set to evaluate the accuracy of the proposed low-rank representations, as well as its computation time. The database consists of 2432 images of 38 individuals, each under 64 different
lighting conditions \cite{georghiades2001few}. We used 30 images from each subject and kept them at full resolution. Each image comprises of $192 \times 168$ pixels on a grayscale range. The data are organized into a $ 192 \times 1140 \times 168$ tensor that exhibits low tubal rank, which is an expected feature
due to the spatial correlation within lateral slices \cite{hao2013facial}. Laplacian (salt and pepper \footnote{This noise can be caused by sharp and sudden disturbances in the image signal. It demonstrates itself as sparsely occurring white and black pixels.}) noise was introduced separately in all frontal slices of the observation tensor for 20\% of the entries.

\begin{table}[h]
	\begin{center}
	\rowcolors{1}{}{lightgray}
		\begin{tabular}{|p{6.5cm}|p{1.5cm}|p{1.5cm}|}
			\hline
\textbf{Robust Completion Approach}  & \textbf{RSE}  & \textbf{Time(s)}  \\ \hline
\texttt{RPCA NN} \cite{candes2011robust} & 0.0056 &417 \\ \hline
\texttt{R-TUCKER NN} \cite{huang2014provable} & 0.0034 & 513 \\ \hline
\texttt{RPCA t-NN} \cite{lutensor}& 0.0021& 495 \\ \hline
\textbf{CUR t-NN}& \textbf{0.0026} & \textbf{136} \\ \hline			
\end{tabular}
\end{center}
\caption{RSE of tensor robust completion methods on Extended Yale B dataset.}\label{table:face}
\end{table}
We provide visualizations of the reconstructed first image from the first subject at the 20\% noise level in Figure~\ref{fig:robust_result}.  We compare the performance of the CUR t-NN, RPCA t-NN, R-TUCKER NN and RPCA NN for solving the robust tensor low rank approximation problem~\eqref{tlow7}. Table~\ref{table:face} shows the accuracy of the {CUR t-NN} together with that of competing algorithms for the Extended Yale B data set. As shown in Figure~\ref{fig:robust_result} and Table~\ref{table:face}, {CUR t-NN} estimates nearly the same face model as the {RPCA t-NN} requiring only a small fraction of time. On the other hand, all algorithms exhibit small RSE, but the {CUR t-NN} proves essentially as competitive as the best method {RPCA t-NN}, but achieves almost the same performance at approximately $1/4$ of the time.
\begin{figure}[h]
	\centering \makebox[0in]{
		\begin{tabular}{c c}
			\includegraphics[scale=0.14]{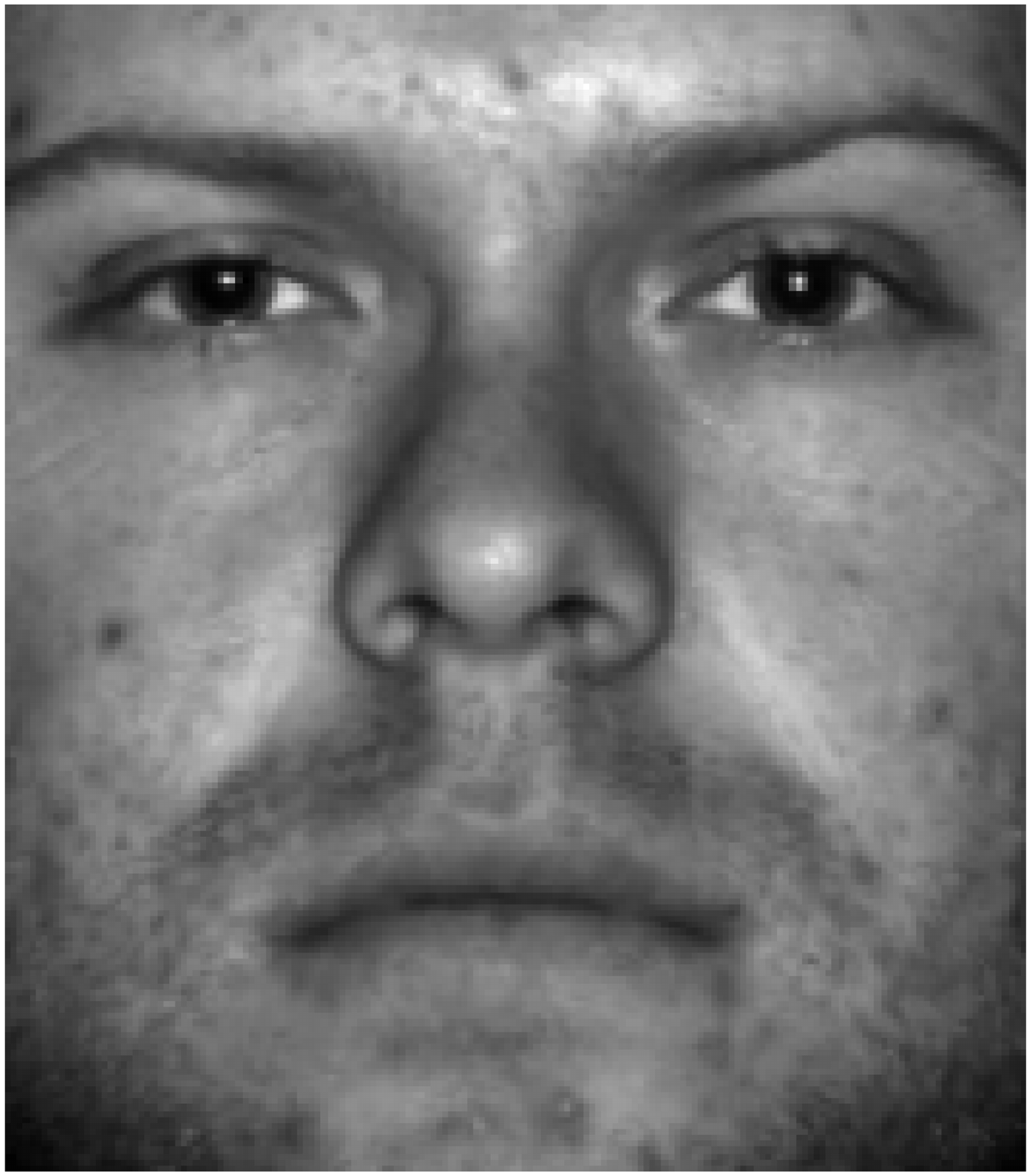}
			\includegraphics[scale=0.135]{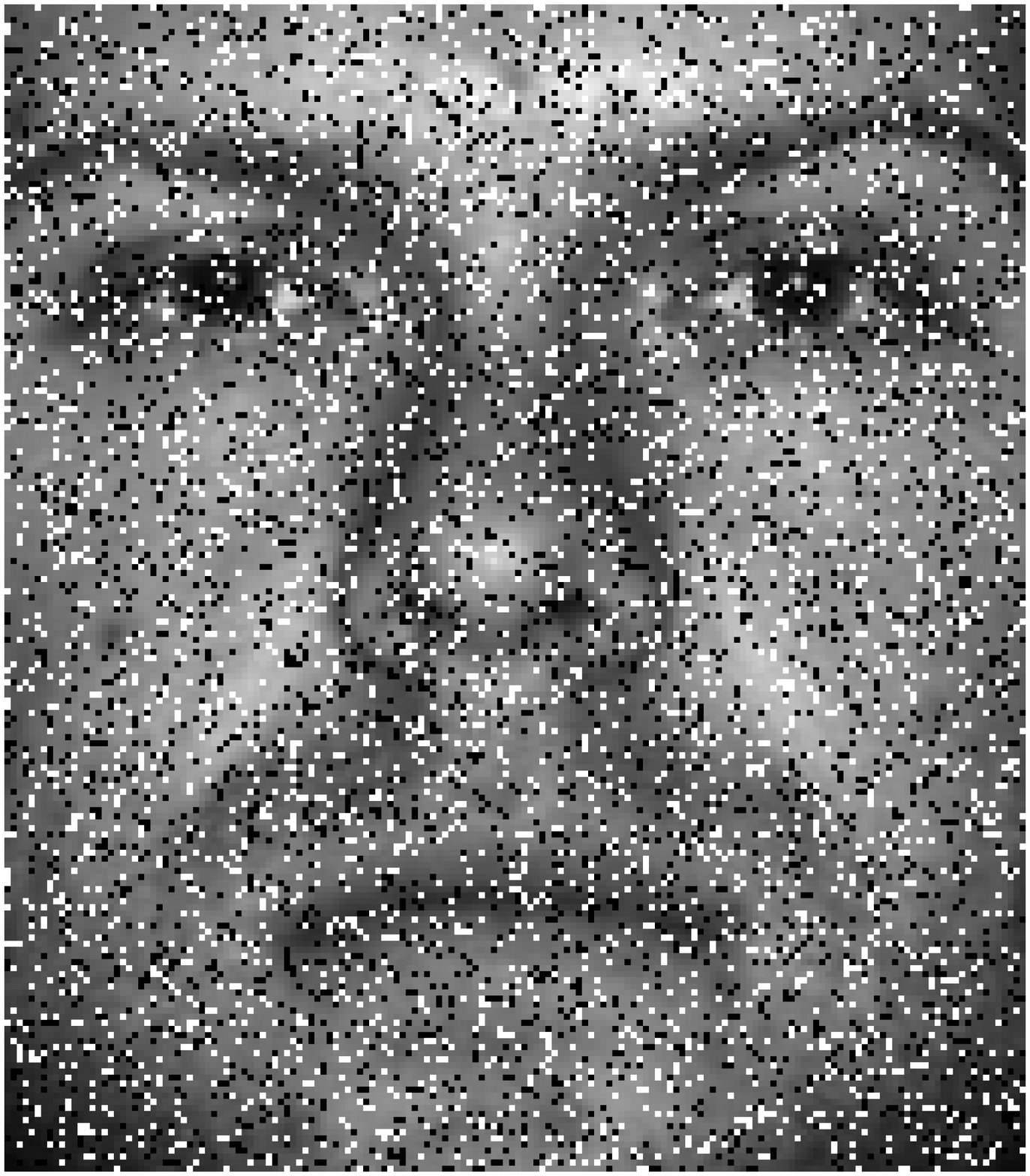} 
			\\
			(a)\\
			\includegraphics[scale=0.14]{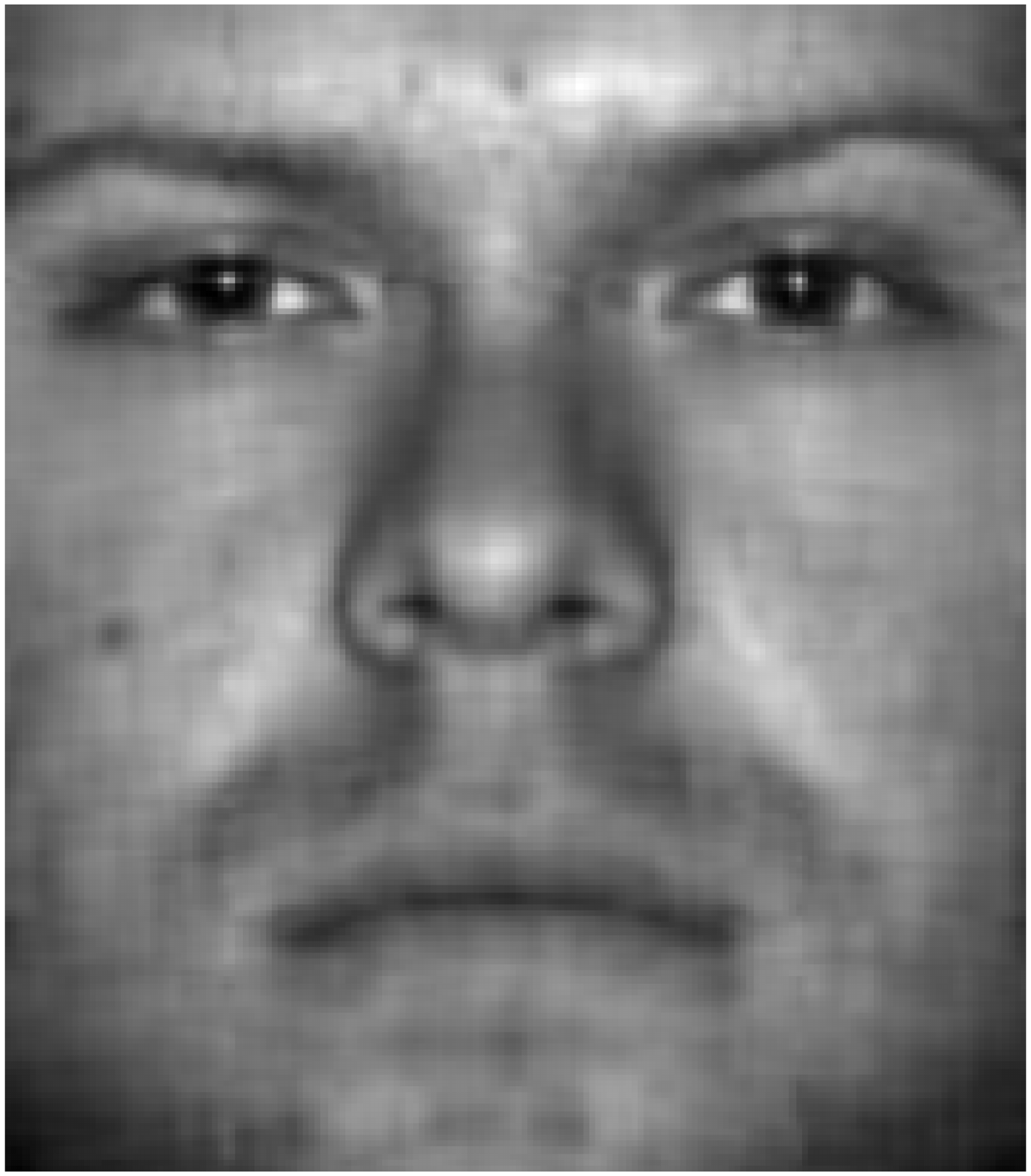}
			\includegraphics[scale=0.14]{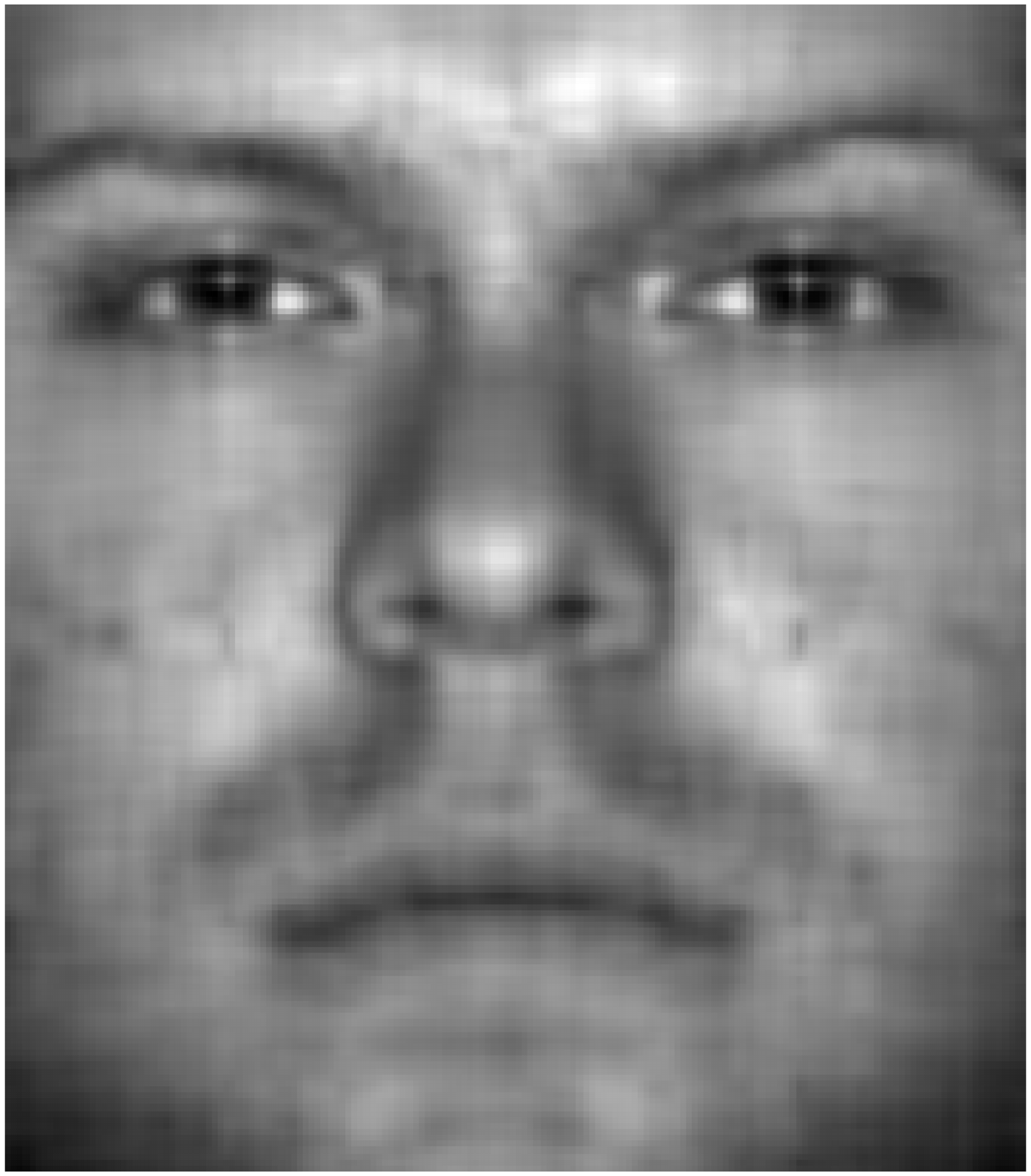} 
		\\
			(b)\\
		\end{tabular}}
\caption{ The $1$st frame of the noisy tensor factorization result for the Extended Yale B data set. \textbf{(a) Left}: The original frame. \textbf{(a)Right}: Noisy image (20\% pixels corrupted). \textbf{(b) Left}: RPCA t-NN \cite{lutensor}. \textbf{(b) Right}: CUR t-NN.}
		\label{fig:robust_result}
\end{figure} 

Beyond just speed-ups and/or accuracy improvements of the CUR t- NN algorithm, its output can be directly used in place of the singular slices and tubes that standard methods provide.  The latter represent linear combinations of the slices of the tensor, which for an image data set capture an ``average eigenface". On the other hand, CUR t-NN reconstructs the tensor
through selection of real faces in the data set, thus giving the opportunity to researchers to inspect them and examine
their representativeness. Hence, similar to the original CUR decomposition of matrices, CUR t-NN enhances the interpretability 
of the tensor decomposition \cite{hao2013facial,wright2009robust,nie2010efficient}.

\subsubsection{CUR t-NN on a video data set}

Next, we compare the {CUR t-NN} to the aforementioned listed competing methods for video data representation and compression from randomly missing entries.  The video data, henceforth referred to as the Basketball video (source: YouTube) is mapped to a  $144 \times 256 \times 80$ tensor, obtained from with a nonstationary panning camera moving from left to right horizontally following the running players. We randomly sampled
50\% entries from the Basketball video. We compare the performance of the CUR t-NN, EXACT t-NN, E-TUCKER NN and EXACT NN for solving the tensor completion problem \eqref{tlow6}.

The result is shown in Table~\ref{table:2}. It can be seen that the {CUR t-NN} outperforms almost all its competitors in terms of CPU running time and accuracy and essentially matches that of EXACT t-NN.

\begin{figure}[h]
	\centering \makebox[0in]{
		\begin{tabular}{c c}
			\includegraphics[scale=0.3]{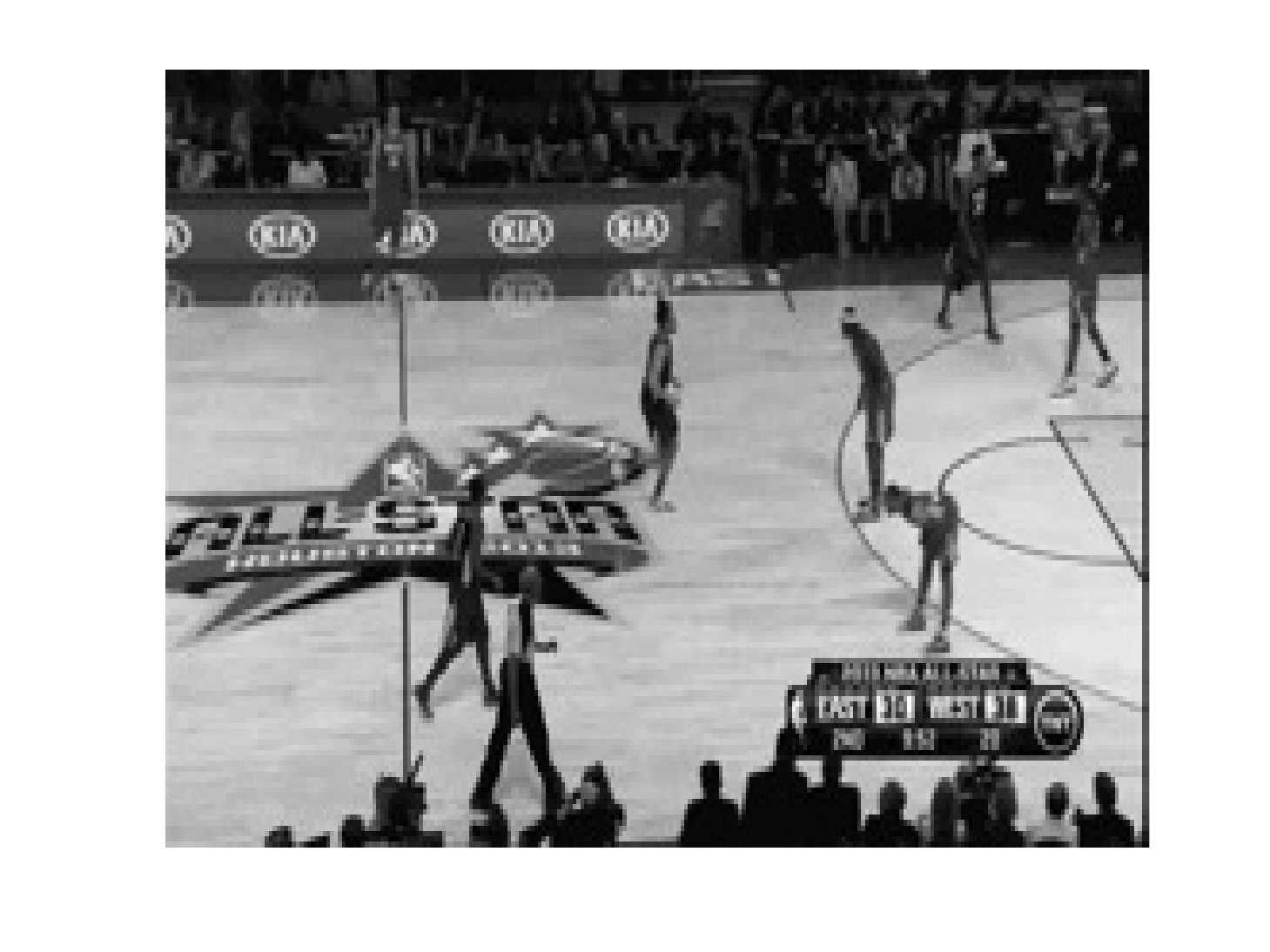}
			\includegraphics[scale=0.3]{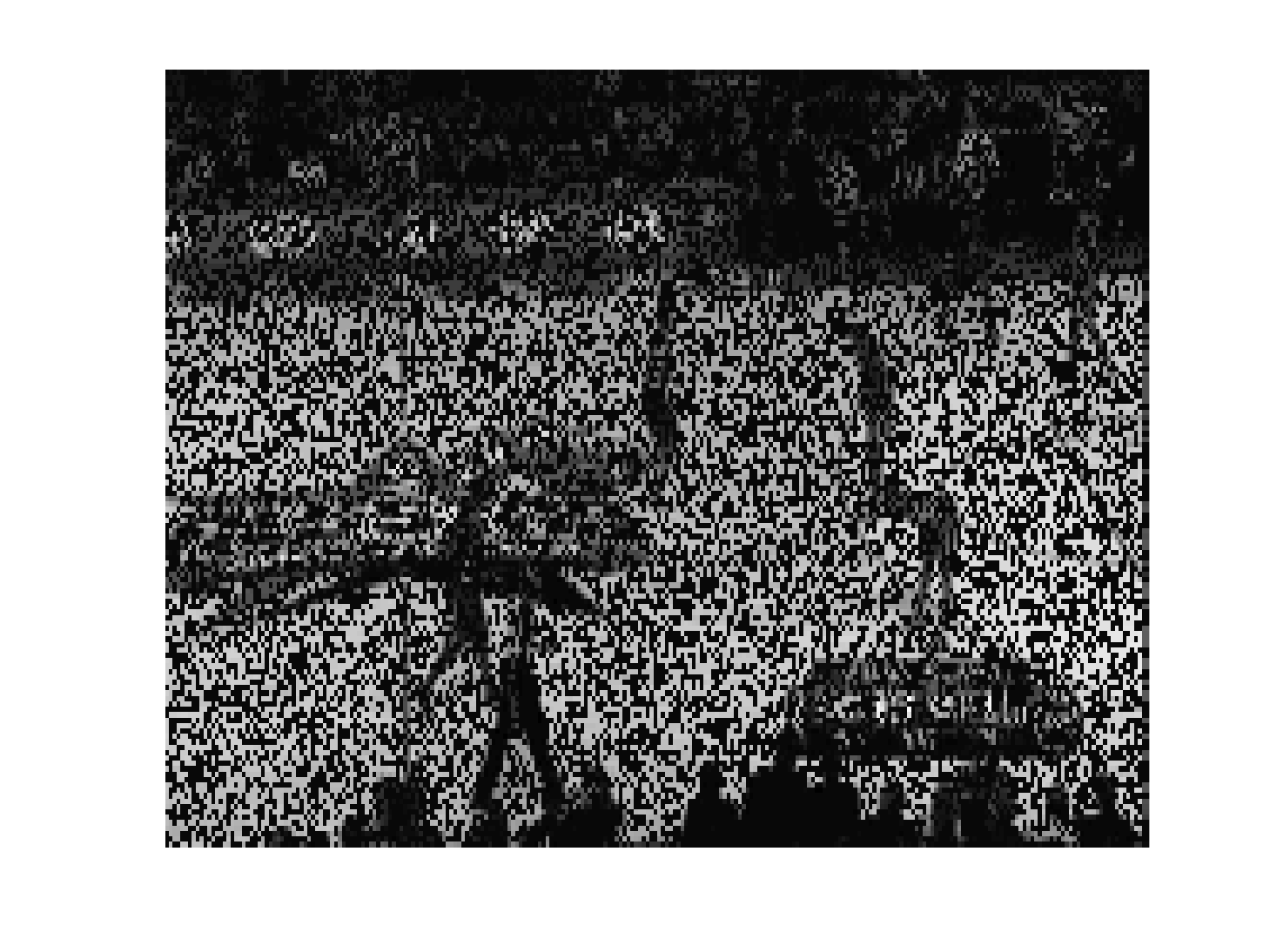} \\
			(a)\\
			\includegraphics[scale=0.3]{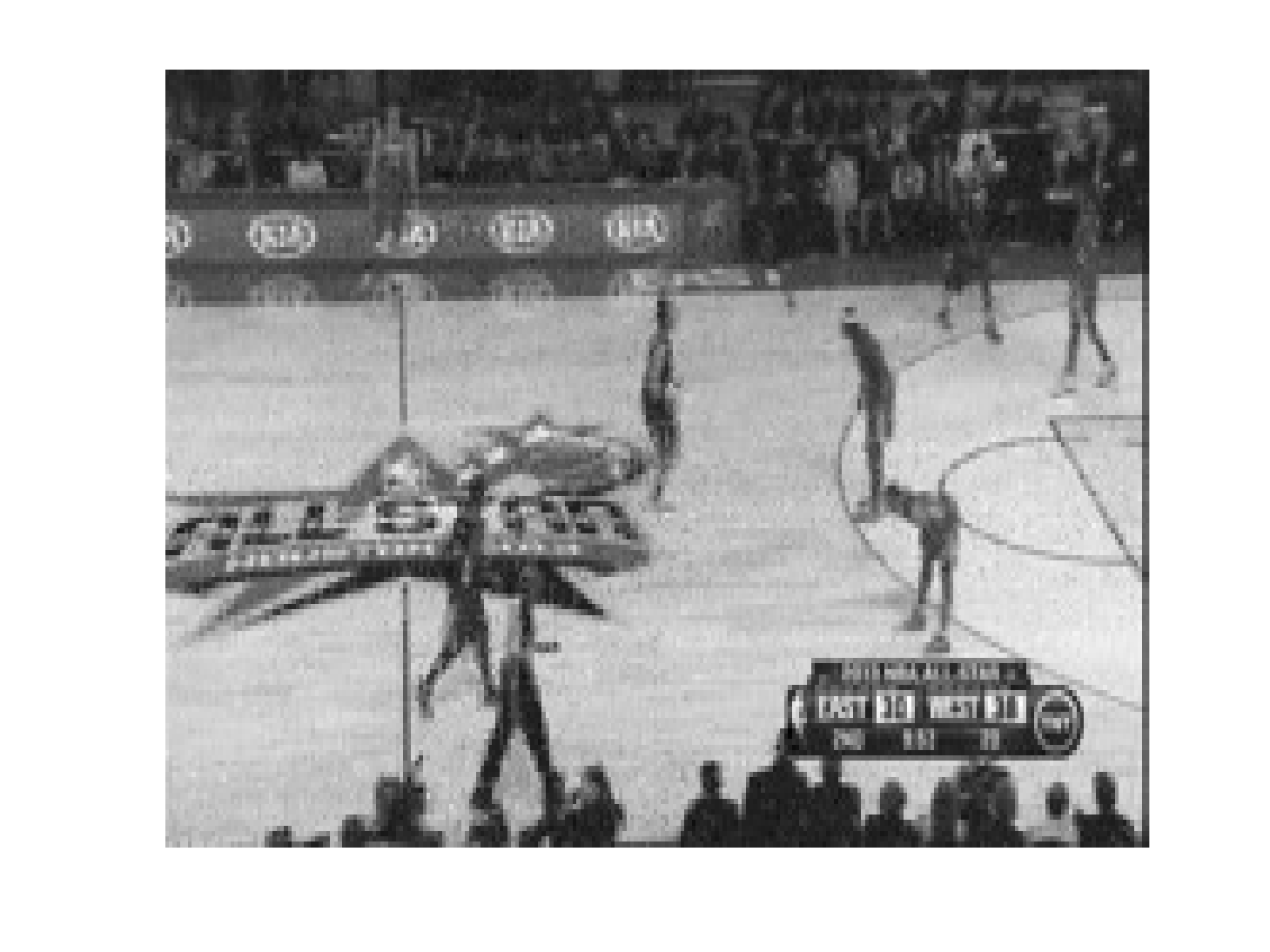}
			\includegraphics[scale=0.3]{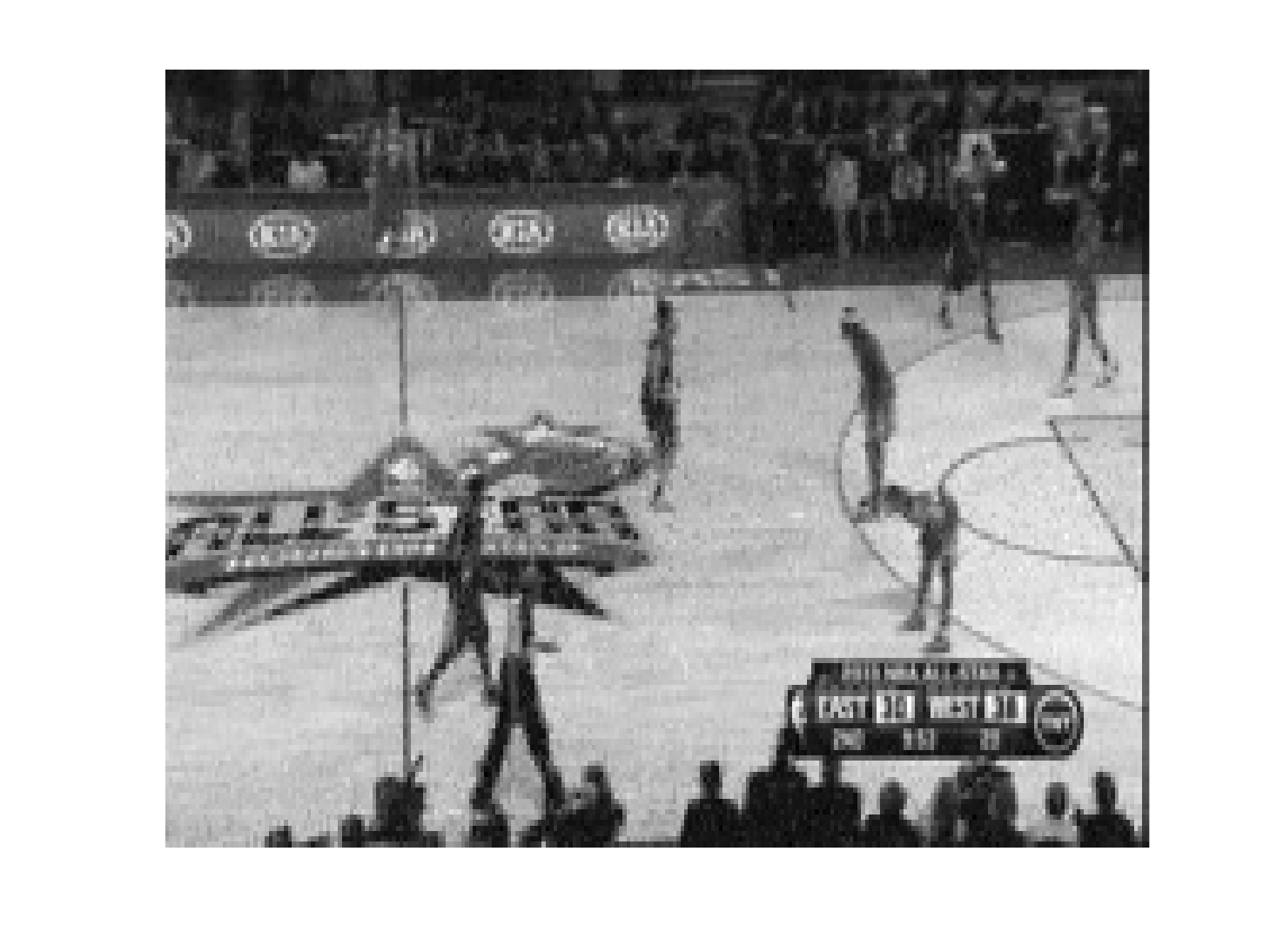} \\
			(b)\\
		\end{tabular}}
\caption{ The $20$th frame of the tensor completion result for the Basketball video. \textbf{(a) Left}: The original video. \textbf{(a) Right}: Sampled video (50\% sampling rate).  \textbf{(b) Left}: {EXACT t-NN} based reconstruction \cite{zhang2017exact}. \textbf{(b) Right}: {CUR t-NN} based reconstruction.}\label{fig:completion_result}
\end{figure}
	
\begin{table}[h]
	\begin{center}
	\rowcolors{1}{}{lightgray}
		\begin{tabular}{|p{6.5cm}|p{1.5cm}|p{1.5cm}|}
			\hline
\textbf{Completion Approach}  & \textbf{RSE}  & \textbf{Time(s)}  \\ \hline
\texttt{EXACT NN} \cite{candes2009exact} & 0.1001 & 687 \\ \hline
\texttt{E-TUCKER NN} \cite{liu2013tensor} & 0.0900 & 718 \\ \hline
\texttt{EXACT t-NN} \cite{zhang2017exact}& 0.0715 & 695 \\ \hline
\textbf{CUR t-NN} & \textbf{0.0850} & \textbf{205} \\ \hline			
\end{tabular}
\end{center}
\caption{RSE of tensor completion results for the Basketball video.}\label{table:2}
\end{table}
\section{Conclusion}

This paper introduced two randomized algorithms for basic tensor operations -rt-product and rt-project- and then used in tensor CX and CUR decomposition algorithms, whose aim is to select informative slices. The randomized tensor operations together with the tensor decompositions algorithms comes with relative error recovery guarantees. These algorithms
can be effectively used in the analysis of large scale tensors with small tubal rank.

In addition, we proposed the {CUR t-NN} algorithm that exploits the advantages of randomization for dimensionality reduction and can be used effectively for large scale noisy tensor decompositions. Indeed, CUR t-NN uses an adaptive technique to sample slices of the tensor based on a \emph{leverage score} for them and subsequently solves a convex optimization problem followed by a projection step to recover a low rank approximation to the underlying tensor. The proposed algorithm has linear running time, and provably maintains the recovery guarantees of the exact algorithm with full data tensor.

\section*{Acknowledgments}
The authors would like to thank the Associate Editor and three anonymous referees for many constructive comments and suggestions that improved significantly the structure and readability of the paper.

\bibliographystyle{siamplain}
\bibliography{references}
\subsection{Proof of Theorem~\ref{theorem:matmult-main-exact}}
\label{thm:matmult-main-exact-proof}
Let $F_{n_3}$ denote the Discrete Fourier Matrix. Then, Definition~\ref{circ}  implies that
\begin{eqnarray}\label{eq:blocknorm}
\nonumber
(F_{n_3}\otimes I) \text{circ}(\T{A})(F^{-1}_{n_3}\otimes I)(F_{n_3}\otimes I) \text{unfold}(\T{A}^\top)&=&
    \begin{pmatrix}
       \hat{\T{A}}_{::1}\hat{\T{A}}^\top_{::1}
       \\&\hat{\T{A}}_{::2}\hat{\T{A}}_{::2}^\top\\\
        &&\ddots\\
        &&&\hat{\T{A}}_{::n_3}\hat{\T{A}}_{::n_3}\\\
    \end{pmatrix},\\
&=&\bdiag_{k \in [n_3]}(\hat{\T{A}}_{::k}\hat{\T{A}}^\top_{::k}).
\end{eqnarray}

Using \eqref{eq:blocknorm} and the unitary invariance of the Frobenius norm, we have 
\begin{eqnarray} \label{eqn:lab1}
\nonumber
\|\T{A}*\T{B} - \T{C}*\T{R}\|_F^2 &=&
\|(F_{n_3}\otimes I)  \text{circ}(\T{A})(F^{-1}_{n_3}\otimes I)(F_{n_3}\otimes I) \text{unfold}(\T{B}) \\
\nonumber
&-&
(F_{n_3}\otimes I) \text{circ}(\T{C})(F^{-1}_{n_3}\otimes I)(F_{n_3}\otimes I) \text{unfold}(\T{R})
\|_F^2\\
\nonumber
&=&\frac{1}{n_3} \|\bdiag_{k \in n_3} (\hat{\T{A}}_{::k}\hat{\T{B}}_{::k}- \hat{\T{C}}_{::k}\hat{\T{R}}_{::k})\|_F^2,\\
\nonumber
&=&\frac{1}{n_3} \sum_{k=1}^{n_3}\|\hat{\T{A}}_{::k}\hat{\T{B}}_{::k}- \hat{\T{C}}_{::k}\hat{\T{R}}_{::k}\|_F^2\\
&=&\frac{1}{n_3} \sum_{k=1}^{n_3}
\|\hat{\T{A}}_{::k} \hat{\T{B}}_{::k}- \hat{\T{A}}_{::k} \hat{\T{S}}_{::k} \hat{\T{D}}_{::k}\hat{\T{D}}_{::k} \hat{\T{S}}^\top_{::k} \hat{\T{B}}_{::k}\|_F^2.
\end{eqnarray}

In Algorithm~\ref{alg:matrix multiply}, we define $I_j$, $j \in [n_2]$ as an indicator variable, which is set to $1$ if both $the j$-th lateral slice of $\T{A}$ and the $j$-th horizontal slice of $\T{B}$ are selected. In this case, the selected lateral and horizontal slices are scaled by score $1/\sqrt{\min\{1,cp_j\}}$.  Note that if $\min\{1,cp_j\}=1$, then $I_j = 1$ with probability 1, and $1-I_j/\min\{1,cp_j\}=0$. Then, taking expectation on both sides of \eqref{eqn:lab1} and considering the set
$\Upsilon = \{j \in [n_2]: cp_j < 1\} \subseteq [n_2]$, we get
\begin{eqnarray}
\Ec\left[\|\T{A}*\T{B}- \T{C}*\T{R}\|_F^2 \right]
\nonumber &=&\frac{1}{n_3} \sum_{k=1}^{n_3} \Ec\left[\|{\sum_{j \in \Upsilon}
                      \left(1-\frac{I_j}{cp_j}\right) \hat{\T{A}}_{:jk} \hat{\T{B}}_{j:k}}\|_F^2\right]   \\
\nonumber &=& \frac{1}{n_3} \sum_{k=1}^{n_3}
 \Ec\left[\sum_{i_1=1}^{n_1} \sum_{i_2=1}^{n_2} \left(\sum_{j \in \Upsilon}                      \left(1-\frac{I_j}{cp_j}\right)
 \hat{\T{A}}_{:jk}\hat{\T{B}}_{j:k}\right)_{i_1 i_2}^2 \right]  \\
\nonumber &=&  \frac{1}{n_3} \sum_{k=1}^{n_3} \Ec\left[\sum_{i_1=1}^{n_1} \sum_{i_2=1}^{n_2} \left(\sum_{j \in \Upsilon}
                      \left(1-\frac{I_j}{cp_j}\right)
                      \hat{\T{A}}_{i_1jk} \hat{\T{B}}_{ji_2k}\right)^2 \right]\\
                      \nonumber
                      &=&  \frac{1}{n_3} \sum_{k=1}^{n_3} \Ec\left[\sum_{i_1=1}^{n_1} \sum_{i_2=1}^{n_2}
                      \sum_{j_1 \in \Upsilon}\sum_{j_2 \in \Upsilon}
\hat{p}_{i_1 i_2 j_1 j_2}\right]\\
          &=& \frac{1}{n_3} \sum_{k=1}^{n_3} \sum_{i_1=1}^{n_1}\sum_{i_2=1}^{n_2}
              \sum_{j_1 \in \Upsilon}\sum_{j_2 \in \Upsilon}
              \Ec \left[ \hat{p}_{i_1 i_2 j_1 j_2}\right],
\end{eqnarray}
where $\hat{p}_{i_1 i_2 j_1 j_2}=  \left(1-\frac{I_{j_1}}{cp_{j_1}}\right)\left(1-\frac{I_{j_2}}{cp_{j_2}}\right)\hat{\T{A}}_{i_1j_1k}\hat{\T{B}}_{j_1i_2k} \hat{\T{A}}_{i_1j_2k}\hat{\T{B}}_{j_2i_2k}$.
%

Since for $j \in [\Upsilon]$, $\Ec[1-I_{j}/cp_{j}]=0$ and $\Ec[\left(1-I_{j}/cp_{j}\right)^2]=\left(1/cp_j\right)-1 \leq 1/cp_j$, we get
\begin{eqnarray}\label{ex:side}
\nonumber
\Ec\left[\|\T{A}*\T{B}- \T{C}*\T{R}\|_F^2 \right]
&=&\frac{1}{n_3} \sum_{k=1}^{n_3}  \sum_{i_1=1}^{n_1} \sum_{i_2=1}^{n_2} \sum_{j \in \Upsilon}
          \Ec{\left(1-I_{j}/cp_{j}\right)^2} \hat{\T{A}}_{i_1jk}^2 \hat{\T{B}}_{ji_2k}^2   \\
\nonumber
&\leq&\frac{1}{n_3} \sum_{k=1}^{n_3} \sum_{j \in \Upsilon} \frac{1}{cp_j}
          \sum_{i_1=1}^{n_1} \sum_{i_2=1}^{n_2} \hat{\T{A}}_{i_1jk}^2 \hat{\T{B}}_{ji_2k}^2\\
\nonumber
&=& \frac{1}{n_3} \sum_{k=1}^{n_3} \frac{1}{c} \sum_{j \in \Upsilon}
     \frac{\|{\hat{\T{A}}_{:jk}}\|_2^2{\|\hat{\T{B}}_{j:k}\|_2^2}}{p_j}\\
     \nonumber
     &=& \frac{1}{c} \sum_{j \in \Upsilon}
     \frac{\|\T{A}_{:j1}\|_2^2 \|\T{B}_{j:1}\|_2^2}{p_j} + \dots + \frac{1}{c} \sum_{j \in \Upsilon}
     \frac{\|\T{A}_{:jn_3}\|_2^2\|\T{B}_{j:n_3}\|_2^2}{p_j}\\
&=& \frac{1}{c} \sum_{j \in \Upsilon}
     \frac{\|\T{A}_{:j:}\|_F^2\|\T{B}_{j::}\|_F^2}{p_j}.
\end{eqnarray}

Equation (\ref{eqn1:theorem:matmult-main-expected}) follows from (\ref{ex:side}) by using Jensen's inequality and the fact that the sampling probabilities (\ref{eqn:nearly_optimal_probs})
and (\ref{eqn:nonoptimal_probs}) are defined in the original domain.
\subsection{Proof of Lemma~\ref{lem:mat-mult}}
\label{lem:lem:mat-mult-proof}

Using Definitions~\ref{circ} and \ref{def:spec:ten}, we have that
\begin{eqnarray} \label{eq:normmax:mult}
\nonumber
\|\T{A}*\T{A}^\top - \T{C}*\T{C}\| &=&
\|(F_{n_3}\otimes I)  \text{circ}(\T{A})(F^{-1}_{n_3}\otimes I)(F_{n_3}\otimes I) \text{unfold}(\T{A}^\top) \\
\nonumber
&-&
(F_{n_3}\otimes I) \text{circ}(\T{C})(F^{-1}_{n_3}\otimes I)(F_{n_3}\otimes I) \text{unfold}(\T{C}^\top)
\|_2,\\
\nonumber
&=&\|\bdiag_{k \in n_3}(\hat{\T{A}}_{::k}\hat{\T{A}}_{::k}^\top- \hat{\T{C}}_{::k}\hat{\T{C}}_{::k}^\top)\|_2,\\
&=& \max_{k \in [n_3]} \big\{\| \hat{\T{A}}_{::k}\hat{\T{A}}_{::k}^\top- \hat{\T{C}}_{::k}\hat{\T{C}}_{::k}^\top\|_2\big\},
\end{eqnarray}
where the first equality follows from the unitary invariance of the spectral norm, the second equality from \eqref{eq:blocknorm} and the third equality follows since the spectral norm of a block matrix is equal to the maximum of block norms.

Let $k_\pi =k \in [n_3]$ be the index of the tensor's frontal slice with maximum spectral norm $\| \hat{\T{A}}_{::k}\hat{\T{A}}_{::k}^\top- \hat{\T{C}}_{::k}\hat{\T{C}}_{::k}^\top\|_2$. Using \eqref{eq:normmax:mult},  and Example~4.3 in \cite{hsu2012tail}, we obtain
\begin{eqnarray} \label{eq:normmatrix:mult}
\| \hat{\T{A}}_{::k_\pi}\hat{\T{A}}_{::k_\pi}^\top- \hat{\T{C}}_{::k_\pi}\hat{\T{C}}_{::k_\pi}^\top\|_2 \leq \frac{\epsilon}{2} \|\hat{\T{A}}_{::,k_\pi}\|_2^2,
\end{eqnarray}
 with probability at least $1-\delta$.
\subsection{Proof of Lemma~\ref{lem:eps}}
\label{lem:eps-proof}


Let $\sigma_{i,k}$ be the $i$-th largest singular value of slice $k$. For all $1\leq i \leq r$ and $1 \leq k \leq n_3$, we have
\begin{eqnarray}\label{sing:bound}
\nonumber
|1-\max_{k \in [n_3]} \sigma_{i,k}^2(\T{V}^\top*\T{S} *\T{D})| &=&
|\max_{k \in [n_3]} \sigma_{i,k}(\T{V}^\top*
\T{V})- \max_{k \in [n_3]} \sigma_{i,k}(\T{V}^\top*\T{S}*\T{D}*\T{D}*\T{S}^\top * \T{V})| \\
&\leq& 
\|\T{V}*\T{V}^\top
-\T{V}^\top * \T{S}*\T{D}* \T{D}*\T{S}^\top * \T{V}\|_2.
\end{eqnarray}
 

Using Lemma~\ref{lem:mat-mult} and \eqref{sing:bound}, we get
\begin{equation}\label{singular:lowbound}
|1-\max_{k \in [n_3]} \sigma_{i,k}^2(\T{V}^\top*\T{S} *\T{D})| \leq \epsilon/2 \|\hat{\T{V}}_{::k_\pi}\|_2^2
\end{equation}
for all $1\leq i \leq r$ and an index $k_\pi \in [n_3]$.

Since $\epsilon \in(0,1]$, each tubal singular value of $\T{V}^\top *\T{S}$ is positive, which implies that $\rank_t ({\T{V}^\top *\T{S}}) = \rank_t({\T{V}}) = \rank_t({\T{L}})$.

To prove \eqref{lem:eps2}, we use the t-SVD of $\T{V}^\top * \T{S}*\T{D}$ and note that
\begin{eqnarray}\label{bound:omega}
\nonumber
\|{\Omega}\|
  &=& \|{\left(\T{V}^\top*\T{S}* \T{D}\right)^\dagger - \left(\T{V}^\top*\T{S}* \T{D}\right)^\top}\|\\
  \nonumber
  &=& \|{\left(\T{U}_{\T{V}^\top*\T{S}* \T{D}}\Sigma_{\T{V}^\top*\T{S}* \T{D}}\T{V}_{\T{V}^\top*\T{S}* \T{D}}^\top\right)^\dagger - \left(\T{U}_{\T{V}^\top*\T{S}* \T{D}}\Sigma_{\T{V}^\top*\T{S}* \T{D}}\T{V}_{\T{V}^\top*\T{S}* \T{D}}^\top\right)^\top}\|\\
  \nonumber
  &=& \|{\T{V}_{\T{V}^\top*\T{S}* \T{D}}\left(\Sigma_{\T{V}^\top*\T{S}* \T{D}}^{-1} - \Sigma_{\T{V}^\top*\T{S}* \T{D}}\right)\T{U}_{\T{V}^\top*\T{S}* \T{D}}^\top}\|\\
    &=& \|\left(\Sigma_{\T{V}^\top*\T{S}* \T{D}}^{-1} - \Sigma_{\T{V}^\top*\T{S}* \T{D}}\right)\|.
\end{eqnarray}
The claim follows since $\T{V}_{\T{V}^\top*\T{S}* \T{D}}$ and $\T{U}_{\T{V}^\top*\T{S}* \T{D}}$ are orthogonal tensors.

To prove \eqref{lem:eps3}, note that
\begin{eqnarray}
\nonumber
\left(\T{L}*\T{S}*\T{D}\right)^\dagger
   &=& \left(\T{U}*\Sigma*\T{V}^\top*\T{S}*\T{D}\right)^\dagger          \\
\nonumber
   &=& \left(\T{U}*\Sigma* \T{U}_{\T{V}^\top*\T{S}* \T{D}}*\Sigma_{\T{V}^\top*\T{S}* \T{D}}*\T{V}_{\T{V}^\top*\T{S}* \T{D}}^\top\right)^\dagger \\
\label{eq20}
   &=& \T{V}_{\T{V}^\top*\T{S}* \T{D}}* \left(\Sigma* \T{U}_{\T{V}^\top*\T{S}* \T{D}}*\Sigma_{\T{V}^\top*\T{S}* \T{D}}\right)^\dagger *\T{U}^\top.
\end{eqnarray}
To remove the pseudoinverse in the above derivations, we use the first part of the Lemma. In this case,
\begin{eqnarray}
\nonumber
\left(\Sigma* \T{U}_{\T{V}^\top*\T{S}* \T{D}}*\Sigma_{\T{V}^\top*\T{S}* \T{D}}\right)^\dagger
   &=& \left(\Sigma* \T{U}_{\T{V}^\top*\T{S}* \T{D}}*\Sigma_{\T{V}^\top*\T{S}* \T{D}}\right)^{-1} \\
\label{eq201}
   &=& \Sigma_{\T{V}^\top*\T{S}* \T{D}}^{-1}*\T{U}_{\T{V}^\top*\T{S}* \T{D}}^\top \Sigma^{-1}.
\end{eqnarray}
By combining (\ref{eq20}) and (\ref{eq201}), we get the result.

To prove \eqref{lem:eps4}, we have that for all $1\leq i \leq r$ and $1 \leq k \leq n_3$,
   \begin{align*}
    \|\Omega \| & = \|\Sigma_{\T{V}^\top*\T{S}* \T{D}}^{-1} - \Sigma_{\T{V}^\top*\T{S}* \T{D}} \| & \textnormal{from \eqref{bound:omega}}\\
    & = \max_{i,k}  \left|  \sigma_{i,k}(\T{V}^\top*\T{S}* \T{D}) -  \frac{1}{\sigma_{i,k}(\T{V}^\top*\T{S}* \T{D})} \right| & \textnormal{by definition} \\
    & = \max_{i,k} \frac{| \sigma_{i,k}^2(\T{V}^\top*\T{S}* \T{D}) -  1 |}{|\sigma_{i,k}(\T{V}^\top*\T{S}* \T{D})|} & \textnormal{by simple manipulation} \\
    & \leq \frac{\epsilon/2}{\sqrt{1 - \epsilon/2}} & \textnormal{from \eqref{sing:bound}}\\
    & \leq  \epsilon/\sqrt{2} & \textnormal{if } \epsilon < 1 =>\sqrt{1 - \epsilon/2} > 1/\sqrt{2}.
    \end{align*}

\subsection{Proof of Theorem~\ref{thm:regress-main}}\label{proofregressmain}

From Definition~\ref{eq:ten:incoh},  and using \eqref{eq:cur-col-prob} with $\beta=n_3/\mu_0(\T{V})\in (0,1]$, we have that
\begin{eqnarray*}
\frac{\beta}{r n_3}\|\hat{\T{V}}_{i::}\|_F^2 = \frac{\beta}{r }\|\T{V}_{i::}\|_F^2 \leq \frac{\beta}{r}\frac{r}{n_2 n_3}\mu_0(\T{V})
   = \frac{1}{n_2} = p_i,
\end{eqnarray*}
for all $i \in [n_2]$.

Using Proposition \ref{prop:regress}, our choice of $c$ implies that
\begin{eqnarray} \label{eq:prop:regress}
\nonumber 
  \|\T{A} - \T{A}_c* \T{L}_c^\dagger*\T{L}\|_F &=&\|\T{A} -
\T{A}*{\T{C}*\T{D}}*(\T{L}_c*\T{D})^\dagger*\T{L}\|_F,\\
&\leq& (1+\epsilon)\|\T{A}-\T{A}*\T{L}^\dagger*\T{L}\|_F.
\end{eqnarray}
holds with probability at least $0.85$. 

\subsection{Proof of Corollary~\ref{cor:proj-main}}
\label{cor:proj-main-proof}

Since $\T{C}^+*\T{A}$ minimizes $\|\T{A}-\T{C}*\T{X}\|_F$ over all tensors $\T{X} \in \mathbb{R}^{n_1 \times c \times n_3}$, it follows that $$\|\T{A}-\T{C}*\T{C}^\dagger*\T{A}\|_F\leq\|\T{A}-\T{C}*\T{L}_{\T{C}}^\dagger*\T{L}\|_F.$$

Now, using  \eqref{eq:prop:regress}, we get 
\begin{equation}\label{eq:tcx:constant}
 \|\T{A} - \T{C}*\T{C}^\dagger*\T{A}\|_F \leq (1+\epsilon)\|\T{A} - \T{L}\|_F,
\end{equation} 
 with probability at least $0.85$.

%
We can trivially boost the success probability to $1-\delta$ by repeating Algorithm~\ref{alg:algCX} $O (\log (1/\delta))$ rounds. Specifically, let $\T{C}_i$ denote the output of Algorithm~\ref{alg:algCX} at round $i$;  using \eqref{eq:tcx:constant} for each $\T{C}_i $ we have
\begin{equation}
\label{eqn:rela}
\|\T{A}  - \T{C}_i * \T{C}_i^\dagger*\T{A}\|_F\leq (1+\epsilon)\|\T{A} - \T{L}\|_F,  
\end{equation}
with probability at least $0.85$.

Now, let $\T{C}$ denote the set of all columns used in the approximation. Since  each $\T{C}_i = \T{C} * \T{S}_i$ for some tensor $\T{S}_i$ and $\T{C}^+*\T{A}$ minimizes $\|\T{A}-\T{C}*\T{X}\|_F$ over all tensors $\T{X} \in \mathbb{R}^{n_1 \times c \times n_3}$, it follows that $$\|\T{A}-\T{C}*\T{C}^\dagger*\T{A}\|_F\leq\|\T{A}-\T{C}_i*\T{C}_i^\dagger*\T{A}\|_F,$$ 
for each $i$.  
Hence, if 
$$ \|\T{A} - \T{C}*\T{C}^\dagger*\T{A}\|_F \leq (1+\epsilon)\|\T{A} - \T{L}\|_F,$$
fails to hold, then for each $i$, \eqref{eqn:rela} also fails to hold.
Since $0.15< 1/e$, the desired conclusion must hold with probability at least $1- (1/e)^{\log (1/\delta)}  = 1-\delta$.
%
%
%

\subsection{Proof of Corollary~\lowercase{\ref{cor:gnys-main}}}
\label{sec:gnys-proof}

Using \eqref{eq:prop:regress}, it follows that
\begin{equation} \label{eq:tproj}
\|\T{A} - \T{C}*\T{U}*\T{R}\|_F \leq (1+\epsilon)\|\T{A} - \T{C}*\T{C}^\dagger*\T{A}\|_F,
\end{equation}
with probability at least $0.85$.

Using \eqref{eq:tcx:constant} and \eqref{eq:tproj}, our choice of $c$ and $l$  implies the following holds with probabilities at least $0.7$, 
\begin{equation}\label{eq:tcurproj}
\|\T{A} - \T{C}*\T{U}*\T{R}\|_F \leq (1+\epsilon)^2|\T{A} - \T{L}\|_F \leq (1+\epsilon')|\T{A} - \T{L}\|_F,
\end{equation}
where $\epsilon'= 3\epsilon$. 

The inequality \eqref{eq:tcurproj} holds with probability at least $1-\delta$ by following the boosting strategy employed in the proof of Corollary~\ref{cor:proj-main}.  Indeed, since in each trial inequality \eqref{eq:tcurproj} fails with probability less than $0.3 < 1/e$, the claim of Corollary~\lowercase{\ref{cor:gnys-main}} will hold with probability greater than $1 -(1/e)^{\log (1/\delta)}= 1 -\delta$. 
\subsection{Proof of Lemma~\lowercase{\ref{lem:sub-coh}}}

Since from Lemma~\ref{lem:eps} we have $\rank_t(\T{L}_c)=\rank_t(\T{L})$, the first claim follows similarly by using Lemma~1 of~\cite{mohri2011can}.

To prove the second claim, using Lemma~\ref{lem:eps}, assume that $\T{S}^\top*\T{V}$ consists of the first $c$ horizontal slices of $\T{V}$.  Then, if 
$\T{L}_c = \T{U}*\Sigma* \T{V}^\top*\T{S}$ has $\rank_t(\T{L}_c)=\rank_t(\T{L})=r$, the tensor $\T{V}^\top*\T{S}$ must have full tubal rank. Thus, we can write
\begin{eqnarray*}
\T{L}_c^+*\T{L}_c&=& (\T{U}*\Sigma*\T{V}^\top*\T{S})^+ *\T{U}*\Sigma*\T{V}^\top*\T{S}\\
&=& (\Sigma*\T{V}^\top*\T{S})^+*\T{U}^\top*\T{U}*\Sigma*\T{V}^\top*\T{S}\\
&=&(\Sigma*\T{V}^\top*\T{S})^+*\Sigma*\T{V}^\top*\T{S}\\
&=&(\T{V}^\top*\T{S})^+*\Sigma^\top*\Sigma*\T{V}^\top*\T{S}\\
&=&(\T{V}^\top*\T{S})^+*\T{V}^\top*\T{S}\\
&=&\T{V}^\top*\T{S} *(\T{V}^\top*\T{S}*\T{S}^\top*\T{V})^{-1}\T{V}^\top*\T{S},
\end{eqnarray*}%
where the second and third equalities follow from $\T{U}^\top*\T{U}=\T{I}_c$. The fourth and fifth equalities result from $\Sigma$ having full tubal rank and $\T{V}$ having full lateral slice rank, and the sixth follows from $\T{S}^\top*\T{V}$ having
full horizontal slice rank. Next, denote the right singular vectors of $\T{L}_c$ by $\T{V}_c \in \mathbb{R}^{c\times r \times n_3}$.
Define $\tcol{e}_{i,c}$ as the $i$-th lateral slice of $\T{I}_c$ and $\tcol{e}_{i,n_2}$ as the $i$-th lateral slice of $\T{I}$. Then we have,
\begin{eqnarray*}
\mu_0(\T{V}_c) &=& \frac{cn_3}{r} \max_{1 \le i \le c} \|\T{V}_c^\top* \tcol{e}_{i,c}\|_F^2 \\
&=& \frac{cn_3}{r} \max_{1 \le i \le c} \trace\{ \tcol{e}_{i,c}^\top*\T{L}_c^+*\T{L}_c*\tcol{e}_{i,c} \}\\
&=& \frac{cn_3}{r} \max_{1 \le i \le c} \trace\{ \tcol{e}_{i,c}^\top*(\T{V}^\top*\T{S})^+*\T{V}^\top*\T{S}* \tcol{e}_{i,c}\}\\
&=& \frac{cn_3}{r} \max_{1 \le i \le c} \trace\{ \tcol{e}_{i,c}^\top * \T{S}^\top*\T{V}*\T{W}^{-1}*\T{V}^\top*\T{S} *\tcol{e}_{i,c}\}\\
&=& \frac{cn_3}{r} \max_{1 \le i \le c} \trace\{ \tcol{e}_{i,n_2}^\top*\T{V}*\T{W}^{-1}*\T{V}^\top *\tcol{e}_{i,n_2}\},
\end{eqnarray*}
where $\T{W} =\T{V}^\top*\T{S}* \T{S}^\top*\T{V}$ and the final equality follows from $\T{V}^\top*\T{S}* \tcol{e}_{i,c} = \T{V}^\top*\tcol{e}_{i,n_2}$ for all $1\le i\le c$.

Next, we have
\begin{eqnarray*}
\mu_0(\T{V}_{\T{L}_{\T{C}}})&=& \frac{cn_3}{r} \max_{1 \le i \le c}
\trace\{\tcol{e}_{i,n_2}^\top *\T{V}*\T{W}^{-1} *\T{V}^\top *\tcol{e}_{i,n_2}\} \\
&=& \frac{cn_3}{r} \max_{1 \le i \le c} \trace\{\T{W}^{-1}*\T{V}^\top*\tcol{e}_{i,n_2}*
\tcol{e}_{i,n_2}^\top * \T{V}\}\\
&\le& \frac{cn_3}{r} \|\T{W}^{-1}\|^2 \max_{1 \le i \le c}
\|\T{V}^\top*\tcol{e}_{i,n_2}* \tcol{e}_{i,n_2}^\top *\T{V}\|_\circledast^2 ,
\end{eqnarray*}
where the last inequality follows form H\"{o}lder's inequality.

Since $\T{V}^\top*\tcol{e}_{i,n_2} *\tcol{e}_{i,n_2}^\top*\T{V}$ has tubal rank one, using the definition of $\mu_0$-coherence, we have
$$
\mu_0(\T{V}_{\T{L}_{\T{C}}})
\leq \frac{c}{n_2} \|\T{W}^{-1}\|^2 \mu_0(\T{V}).
$$
Now, using \eqref{singular:lowbound}, we have that
$\|\T{W}^{-1}\|^2 \leq \frac{n_2}{c(1-\epsilon/2)}$. Thus, it follows that $$\mu_0(\T{V}_{\T{L}_{\T{C}}}) \leq  \mu_0(\T{V})/(1-\epsilon/2).$$

To prove the last claim under Lemma~\ref{lem:eps}, we note that
\begin{eqnarray*}
\mu_1(\T{L}_c) &=& \frac{n_1cn_3^2}{r}\max_{\substack{1\le i \le n_1\\ 1\le j \le
c}}\|\tcol{e}_{i,n_1}^\top*\T{U}_{c}*\T{V}_{c}^\top* \tcol{e}_{j,c}\|_F^2 \\
&\leq& \frac{n_1cn_3^2}{r} \max_{1\le i \le n_1} \|\T{U}_{c}^\top *\tcol{e}_{i,n_1}\|_F^2 \max_{1\le j \le c}\|\T{V}_{c}^\top *\tcol{e}_{j,c}\|_F^2 \\
&\leq& \frac{r}{(1-\epsilon/2)}\mu_0(\T{U})\mu_0(\T{V}).
\end{eqnarray*}

%
\subsection{Proof of Theorem~{\ref{thm:master}}}
\label{sec:proof-master}

Define $A(\T{X})$ as the event that a tensor $\T{X}$ is $(\frac{r\mu^2}{1-\epsilon/2},r)$-coherent. To prove Theorem~\ref{thm:master}, let $\tilde{\T{L}}$ denote the solution obtained by CUR t-NN and let $\T{L}^*$ be the exact solution of problems~\eqref{tlow6}, \eqref{tlow7} and \eqref{tlow8}. In Algorithm~\ref{CUR t-NN}, define $\bar{\T{L}}$ as
$$
\bar{\T{L}} =
\begin{bmatrix}
\tilde{\T{C}}_{1} & \tilde{\T{R}}_{2} \\
 \tilde{\T{C}}_{2} & \T{L}_{22}^*
\end{bmatrix}\,,
$$
where $ \tilde{\T{C}} =  \begin{bmatrix}
      \tilde{\T{C}}_{1} & \tilde{ \T{C}}_{2}
      \end{bmatrix}^\top$,  and $ \tilde{\T{R}} =  \begin{bmatrix}
    \tilde {\T{R}}_{1} &  \tilde{\T{R}}_{2}
     \end{bmatrix}$,  and $\T{L}_{22}^* \in \mathbb{R}^{(n_1-l) \times (n_2-c) \times n_3}$ is the bottom right subtensor of $\T{L}^*$.  It can easily be seen that 
\begin{equation}\label{curnn}
\|\T{L}^* -\bar{\T{L}}\|_F^2  \leq
\|\T{C}^*-\tilde{\T{C}}\|_F^2+\|\T{R}^*-\tilde{\T{R}}\|_F^2,
\end{equation}     
Now, define $W(\tilde{\T{L}},\bar{\T{L}})$ as the event
\begin{eqnarray}\label{base:sol}
\|\tilde{\T{L}}-\bar{\T{L}}\|_F \leq (1 + \epsilon)\|\T{L}^* -\bar{\T{L}}\|_F.
\end{eqnarray}

If  $W(\tilde{\T{L}},\bar{\T{L}})$ holds, we have  
\begin{align*}
 \|\T{L}^* - \tilde{\T{L}}\|_F &\leq  \|\T{L}^* - \bar{\T{L}}\|_F+\|\bar{\T{L}} - \tilde{\T{L}}\|_F & \textnormal{by the triangle inequality}\\
&\leq \|\T{L}^* - \bar{\T{L}}\|_F + (1+\epsilon) \|\T{L}^* - \bar{\T{L}}\|_F & \textnormal{ from \eqref{base:sol}} \\
&\leq (2 + \epsilon) \|\T{L}^* - \bar{\T{L}}\|_F & \\
& \leq (2 + \epsilon)  \sqrt{\|\T{C}^*-\tilde{\T{C}}\|_F^2+\|\T{R}^*-\tilde{\T{R}}\|_F^2} &  \textnormal{from \eqref{curnn}}.
\end{align*}

Next, we consider all three events $W(\tilde{\T{L}},\bar{\T{L}})$, $A(\T{R}^*)$, and $A(\T{C}^*)$ with $\log(3/\delta)$. Using Lemma~\ref{lem:sub-coh}, our choice of $c$ and $l$ implies that $A(\T{C}^*)$ and $A(\T{R}^*)$ holds with probability at least $1-\delta/3$. Since $\tilde{\T{L}}$ is a t-CUR approximation to $\bar{\T{L}}$, from Corollary~\ref{cor:gnys-main}, we get that $W(\tilde{\T{L}},\bar{\T{L}})$ holds with probability at least $1-\delta/3$. 

Using the union bound, it follows that
\begin{eqnarray}\label{eq:final:prob}
\nonumber 
\textbf{Pr}(W(\tilde{\T{L}},\bar{\T{L}}) \bigcap A(\T{C}^*) \bigcap A(\T{R}^*)) &\geq & 1-\textbf{Pr}(W(\tilde{\T{L}},\bar{\T{L}})^c)-\textbf{Pr}(A(\T{C}^*)^c)-\textbf{Pr}(A(\T{R}^*)^c) \\
\nonumber
&\geq& 1 - \delta/3 -\delta/3- \delta/3 \\
\nonumber 
&=& 1-\delta.
\end{eqnarray}
\subsection{Optimization by ADMM}\label{sec:admm:app}

We provide the optimization and parameter setting details of ADMM used in Algorithm \ref{alg1} for solving a robust tensor factorization.
\begin{algorithm}[!h]\caption{,~~$\T{L} \leftarrow \textbf{ADMM}(\T{X}, \lambda)$}
		\textbf{Initialize:} $\T{L}_0=\T{E}_0=\T{Y}_0=0$, $\rho=1.1$, $\mu_0=1e-3$, $\mu_{\max}=1e+10$, $\epsilon=1e-8$.   \\ 
	\textbf{while} not converged \textbf{do}
	\begin{itemize}
		\item $\T{L}_{k+1} \leftarrow \argmin_{\T{L}}\|\T{L}\|_*+\frac{\mu_k}{2}\|\T{L}+\T{E}_{k}-\T{X}+\frac{\T{Y}_{k}}{\mu_k}\|_F^2$;
		\item  $ \T{E}_{k+1} \leftarrow \argmin_{\T{E}} \lambda\|\T{E}\|_1+\frac{\mu_k}{2}\|\T{L}_{k+1}+\T{E}-\T{X}+\frac{\T{Y}_{k}}{\mu_k}\|_F^2$;
		\item $\T{Y}_{k+1}=\T{Y}_{k}+\mu_k(\T{L}_{k+1}+\T{E}_{k+1}-\T{X})$;
		\item Update $\mu_{k+1}$ by $\mu_{k+1}=\min(\rho\mu_k,\mu_{\max})$;
		\item Check $ \|\T{L}_{k+1}-\T{L}_{k}\|_\infty\leq\epsilon, \ \|\T{E}_{k+1}-\T{E}_{k}\|_\infty\leq\epsilon,  \|\T{L}_{k+1}+\T{E}_{k+1}-\T{X}\|_\infty\leq\epsilon$;
	\end{itemize}
	\textbf{end while} \\
	\label{alg1}	
\end{algorithm} 
As discussed in \cite{zhang2014novel},  the updates of $\T{L}_{k+1}$ and $\T{E}_{k+1}$ have closed form solutions.  It is easy to see that the main per-iteration cost of Algorithm \ref{alg1} is in the update of $\T{L}_{k+1}$, which requires computing the \texttt{fft} of $\T{L}$ and the SVD of block matrix $\hat{L}=\texttt{fft}(\T{L})$ in the Fourier domain.

The next result establishes the global convergence of the ADMM for solving problem \eqref{tlow7} (for details on the convergence analysis, see \cite{tarzanagh2018estimation}). Note that similar results hold for solving problems \eqref{tlow6}, and  \eqref{tlow8}. 
\begin{theorem}\label{admm:con}
The sequence $(\T{L}_k,\T{E}_k,\T{Y}_k)$ generated by Algorithm~\ref{alg1} from any starting point converges to a stationary point of problem \eqref{tlow7}. 
\end{theorem}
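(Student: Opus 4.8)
The plan is to treat \eqref{tlow7} as a two-block separable convex program with the affine coupling constraint $\T{L} + \T{E} = \T{X}$ and to invoke the convergence machinery for the augmented Lagrangian / ADMM scheme, adapted to the circulant tensor algebra. The two objective terms $\|\cdot\|_\circledast$ and $\lambda\|\cdot\|_1$ are closed proper convex functions (the first because it equals $\frac{1}{n_3}\|\Circ{\cdot}\|_*$ by the identity preceding Definition~\ref{def:spec:ten}), and each subproblem of Algorithm~\ref{alg1} is solved by an \emph{exact} minimization admitting a closed form: tensor singular value thresholding in the Fourier domain for $\T{L}_{k+1}$ and entrywise soft thresholding for $\T{E}_{k+1}$. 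A preliminary simplification is that, since $\rho = 1.1 > 1$, the rule $\mu_{k+1} = \min(\rho\mu_k,\mu_{\max})$ forces $\mu_k = \mu_{\max}$ after finitely many iterations; hence it suffices to analyze the tail of the sequence as a constant-penalty ADMM, which is the classical setting.

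First I would record the first-order optimality conditions of the two inner minimizations. Differentiating the $\T{E}$-subproblem and substituting the multiplier update $\T{Y}_{k+1} = \T{Y}_k + \mu_k(\T{L}_{k+1} + \T{E}_{k+1} - \T{X})$ yields $-\T{Y}_{k+1} \in \lambda\,\partial\|\T{E}_{k+1}\|_1$; since every subgradient of the $\ell_1$ norm lies in the unit $\ell_\infty$ ball, this gives $\|\T{Y}_{k+1}\|_\infty \le \lambda$, so the dual sequence $\{\T{Y}_k\}$ is \emph{bounded}. Likewise the $\T{L}$-subproblem gives $\mu_k(\T{E}_{k+1} - \T{E}_k) - \T{Y}_{k+1} \in \partial\|\T{L}_{k+1}\|_\circledast$. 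Next I would bound the augmented Lagrangian $\mathcal{L}_{\mu_k}(\T{L},\T{E},\T{Y}) = \|\T{L}\|_\circledast + \lambda\|\T{E}\|_1 + \langle \T{Y}, \T{L}+\T{E}-\T{X}\rangle + \tfrac{\mu_k}{2}\|\T{L}+\T{E}-\T{X}\|_F^2$ along the iterates: the two exact minimizations make $\mathcal{L}$ decrease in the $(\T{L},\T{E})$ blocks, while the multiplier and penalty updates contribute a controllable increase whose sum is finite because $\{\T{Y}_k\}$ is bounded and $\mu_k$ is eventually constant. This bounds $\{\T{L}_k\}$ and $\{\T{E}_k\}$ and, telescoping, forces both the primal residual $\T{L}_k + \T{E}_k - \T{X}$ and the successive differences $\T{L}_{k+1}-\T{L}_k$, $\T{E}_{k+1}-\T{E}_k$ to converge to zero, which are precisely the three stopping quantities checked in Algorithm~\ref{alg1}.

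With boundedness in hand, I would extract a convergent subsequence $(\T{L}_{k_j},\T{E}_{k_j},\T{Y}_{k_j}) \to (\T{L}^\infty,\T{E}^\infty,\T{Y}^\infty)$ and pass to the limit in the two optimality inclusions, using that the primal residual vanishes (so $\T{L}^\infty + \T{E}^\infty = \T{X}$) and that the subdifferentials $\partial\|\cdot\|_\circledast$ and $\lambda\partial\|\cdot\|_1$ have closed graphs. The limit then satisfies $-\T{Y}^\infty \in \lambda\partial\|\T{E}^\infty\|_1$ and $-\T{Y}^\infty \in \partial\|\T{L}^\infty\|_\circledast$ together with primal feasibility, which are exactly the KKT conditions and therefore certify that $(\T{L}^\infty,\T{E}^\infty)$ is a stationary (indeed, by convexity, globally optimal) point of \eqref{tlow7}; a standard argument promoting subsequential to full-sequence convergence completes the proof. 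The main obstacle is the boundedness and summability bookkeeping for the augmented Lagrangian under the changing penalty and the non-strongly-convex, non-smooth terms; here the DFT block-diagonalization \eqref{circ1}, which is an isometry in the Frobenius norm and maps the t-product and $\|\cdot\|_\circledast$ to ordinary block-diagonal matrix operations, lets the key descent and residual inequalities be imported from the matrix ADMM analysis of \cite{tarzanagh2018estimation}, reducing the tensor argument to bookkeeping.
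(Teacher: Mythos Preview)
Your proposal is correct and, in fact, supplies considerably more than the paper does: the paper gives no proof of Theorem~\ref{admm:con} at all and simply defers the entire convergence analysis to the reference \cite{tarzanagh2018estimation}. Your observation that $\mu_k$ reaches $\mu_{\max}$ in finitely many steps, reducing the tail to a classical constant-penalty two-block convex ADMM, is the cleanest entry point; from there, the Eckstein--Bertsekas / Gabay--Mercier convergence theory already yields global convergence of the full sequence to a KKT point, so the explicit Lagrangian-descent and subsequence-extraction argument you outline, while correct, is more than strictly necessary. The optimality-condition computations you record (in particular $-\T{Y}_{k+1}\in\lambda\,\partial\|\T{E}_{k+1}\|_1$ giving $\|\T{Y}_{k+1}\|_\infty\le\lambda$, and the analogous inclusion for $\T{L}_{k+1}$) are exactly the ingredients needed to identify the limit as a KKT point, and your use of the block-diagonalization \eqref{circ1} to transfer the matrix singular-value-thresholding analysis to the tensor nuclear norm is the right mechanism.
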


\end{document}